\newtheorem{theorem}{Theorem}
\newtheorem{axiom}[theorem]{Axiom}
\newtheorem{conjecture}[theorem]{Conjecture}
\newtheorem{corollary}[theorem]{Corollary}
\newtheorem{definition}[theorem]{Definition}
\newtheorem{example}[theorem]{Example}
\newtheorem{exercise}[theorem]{Exercise}
\newtheorem{lemma}[theorem]{Lemma}
\newtheorem{proposition}[theorem]{Proposition}
\newtheorem{remark}[theorem]{Remark}
\newenvironment{proof}[1][Proof]{\noindent\textbf{#1.} }{\ \rule{0.5em}{0.5em}}
\chardef\@x10\chardef\@xv60
\def\tcitime{
\def\@time{%
  \@minute\time\@hour\@minute\divide\@hour\@xv
  \ifnum\@hour<\@x 0\fi\the\@hour:%
  \multiply\@hour\@xv\advance\@minute-\@hour
  \ifnum\@minute<\@x 0\fi\the\@minute
  }}%
\def\QCTOpt[#1]#2{%
  \def\QCTOptB{#1}
  \def\QCTOptA{#2}
}
\def\QCTNOpt#1{%
  \def\QCTOptA{#1}
  \let\QCTOptB\empty
}
\def\Qct{%
  \@ifnextchar[{%
    \QCTOpt}{\QCTNOpt}
}
\def\QCBOpt[#1]#2{%
  \def\QCBOptB{#1}
  \def\QCBOptA{#2}
}
\def\QCBNOpt#1{%
  \def\QCBOptA{#1}
  \let\QCBOptB\empty
}
\def\Qcb{%
  \@ifnextchar[{%
    \QCBOpt}{\QCBNOpt}
}
\def\PrepCapArgs{%
  \ifx\QCBOptA\empty
    \ifx\QCTOptA\empty
      {}%
    \else
      \ifx\QCTOptB\empty
        {\QCTOptA}%
      \else
        [\QCTOptB]{\QCTOptA}%
      \fi
    \fi
  \else
    \ifx\QCBOptA\empty
      {}%
    \else
      \ifx\QCBOptB\empty
        {\QCBOptA}%
      \else
        [\QCBOptB]{\QCBOptA}%
      \fi
    \fi
  \fi
}
\def\GRAPHICSPS#1{%
 \ifcase\GRAPHICSTYPE%\GRAPHICSTYPE=0
   \special{ps: #1}%
 \or%\GRAPHICSTYPE=1
   \special{language "PS", include "#1"}%
%%%\or%\GRAPHICSTYPE=2
%%%  #1%
 \fi
}%
\def\graffile#1#2#3#4{%
%%% \ifnum\GRAPHICSTYPE=\tw@
%%%  %Following if using psfig
%%%  \@ifundefined{psfig}{\input psfig.tex}{}%
%%%  \psfig{file=#1, height=#3, width=#2}%
%%% \else
  %Following for all others
  % JCS - added BOXTHEFRAME, see below
    \bgroup
    \leavevmode
    \@ifundefined{bbl@deactivate}{\def~{\string~}}{\activesoff}
    \raise -#4 \BOXTHEFRAME{%
        \hbox to #2{\raise #3\hbox to #2{\null #1\hfil}}}%
    \egroup
}%
\def\draftbox#1#2#3#4{%
 \leavevmode\raise -#4 \hbox{%
  \frame{\rlap{\protect\tiny #1}\hbox to #2%
   {\vrule height#3 width\z@ depth\z@\hfil}%
  }%
 }%
}%
\newif\ifwasdraft
\def\GRAPHIC#1#2#3#4#5{%
 \ifnum\draft=\@ne\draftbox{#2}{#3}{#4}{#5}%
  \else\graffile{#1}{#3}{#4}{#5}%
  \fi
 }%
\def\addtoLaTeXparams#1{%
    \edef\LaTeXparams{\LaTeXparams #1}}%
\newif\ifBoxFrame \BoxFramefalse
\newif\ifOverFrame \OverFramefalse
\newif\ifUnderFrame \UnderFramefalse
\def\BOXTHEFRAME#1{%
   \hbox{%
      \ifBoxFrame
         \frame{#1}%
      \else
         {#1}%
      \fi
   }%
}
\def\doFRAMEparams#1{\BoxFramefalse\OverFramefalse\UnderFramefalse\readFRAMEparams#1\end}%
\def\readFRAMEparams#1{%
 \ifx#1\end%
  \let\next=\relax
  \else
  \ifx#1i\dispkind=\z@\fi
  \ifx#1d\dispkind=\@ne\fi
  \ifx#1f\dispkind=\tw@\fi
  \ifx#1t\addtoLaTeXparams{t}\fi
  \ifx#1b\addtoLaTeXparams{b}\fi
  \ifx#1p\addtoLaTeXparams{p}\fi
  \ifx#1h\addtoLaTeXparams{h}\fi
  \ifx#1X\BoxFrametrue\fi
  \ifx#1O\OverFrametrue\fi
  \ifx#1U\UnderFrametrue\fi
  \ifx#1w
    \ifnum\draft=1\wasdrafttrue\else\wasdraftfalse\fi
    \draft=\@ne
  \fi
  \let\next=\readFRAMEparams
  \fi
 \next
 }%
\def\IFRAME#1#2#3#4#5#6{%
      \bgroup
      \let\QCTOptA\empty
      \let\QCTOptB\empty
      \let\QCBOptA\empty
      \let\QCBOptB\empty
      #6%
      \parindent=0pt%
      \leftskip=0pt
      \rightskip=0pt
      \setbox0 = \hbox{\QCBOptA}%
      \@tempdima = #1\relax
      \ifOverFrame
          % Do this later
          \typeout{This is not implemented yet}%
          \show\HELP
      \else
         \ifdim\wd0>\@tempdima
            \advance\@tempdima by \@tempdima
            \ifdim\wd0 >\@tempdima
               \textwidth=\@tempdima
               \setbox1 =\vbox{%
                  \noindent\hbox to \@tempdima{\hfill\GRAPHIC{#5}{#4}{#1}{#2}{#3}\hfill}\\%
                  \noindent\hbox to \@tempdima{\parbox[b]{\@tempdima}{\QCBOptA}}%
               }%
               \wd1=\@tempdima
            \else
               \textwidth=\wd0
               \setbox1 =\vbox{%
                 \noindent\hbox to \wd0{\hfill\GRAPHIC{#5}{#4}{#1}{#2}{#3}\hfill}\\%
                 \noindent\hbox{\QCBOptA}%
               }%
               \wd1=\wd0
            \fi
         \else
            %\show\BBB
            \ifdim\wd0>0pt
              \hsize=\@tempdima
              \setbox1 =\vbox{%
                \unskip\GRAPHIC{#5}{#4}{#1}{#2}{0pt}%
                \break
                \unskip\hbox to \@tempdima{\hfill \QCBOptA\hfill}%
              }%
              \wd1=\@tempdima
           \else
              \hsize=\@tempdima
              \setbox1 =\vbox{%
                \unskip\GRAPHIC{#5}{#4}{#1}{#2}{0pt}%
              }%
              \wd1=\@tempdima
           \fi
         \fi
         \@tempdimb=\ht1
         \advance\@tempdimb by \dp1
         \advance\@tempdimb by -#2%
         \advance\@tempdimb by #3%
         \leavevmode
         \raise -\@tempdimb \hbox{\box1}%
      \fi
      \egroup%
}%
\def\DFRAME#1#2#3#4#5{%
 \begin{center}
     \let\QCTOptA\empty
     \let\QCTOptB\empty
     \let\QCBOptA\empty
     \let\QCBOptB\empty
     \ifOverFrame 
        #5\QCTOptA\par
     \fi
     \GRAPHIC{#4}{#3}{#1}{#2}{\z@}
     \ifUnderFrame 
        \nobreak\par\nobreak#5\QCBOptA
     \fi
 \end{center}%
 }%
\def\FFRAME#1#2#3#4#5#6#7{%
 \begin{figure}[#1]%
  \let\QCTOptA\empty
  \let\QCTOptB\empty
  \let\QCBOptA\empty
  \let\QCBOptB\empty
  \ifOverFrame
    #4
    \ifx\QCTOptA\empty
    \else
      \ifx\QCTOptB\empty
        \caption{\QCTOptA}%
      \else
        \caption[\QCTOptB]{\QCTOptA}%
      \fi
    \fi
    \ifUnderFrame\else
      \label{#5}%
    \fi
  \else
    \UnderFrametrue%
  \fi
  \begin{center}\GRAPHIC{#7}{#6}{#2}{#3}{\z@}\end{center}%
  \ifUnderFrame
    #4
    \ifx\QCBOptA\empty
      \caption{}%
    \else
      \ifx\QCBOptB\empty
        \caption{\QCBOptA}%
      \else
        \caption[\QCBOptB]{\QCBOptA}%
      \fi
    \fi
    \label{#5}%
  \fi
  \end{figure}%
 }%
\def\makeactives{
  \catcode`\"=\active
  \catcode`\;=\active
  \catcode`\:=\active
  \catcode`\'=\active
  \catcode`\~=\active
}
   \gdef\activesoff{%
      \def"{\string"}
      \def;{\string;}
      \def:{\string:}
      \def'{\string'}
      \def~{\string~}
      %\bbl@deactivate{"}%
      %\bbl@deactivate{;}%
      %\bbl@deactivate{:}%
      %\bbl@deactivate{'}%
    }
\def\FRAME#1#2#3#4#5#6#7#8{%
 \bgroup
 \ifnum\draft=\@ne
   \wasdrafttrue
 \else
   \wasdraftfalse%
 \fi
 \def\LaTeXparams{}%
 \dispkind=\z@
 \def\LaTeXparams{}%
 \doFRAMEparams{#1}%
 \ifnum\dispkind=\z@\IFRAME{#2}{#3}{#4}{#7}{#8}{#5}\else
  \ifnum\dispkind=\@ne\DFRAME{#2}{#3}{#7}{#8}{#5}\else
   \ifnum\dispkind=\tw@
    \edef\@tempa{\noexpand\FFRAME{\LaTeXparams}}%
    \@tempa{#2}{#3}{#5}{#6}{#7}{#8}%
    \fi
   \fi
  \fi
  \ifwasdraft\draft=1\else\draft=0\fi{}%
  \egroup
 }%
\def\TEXUX#1{"texux"}
\long\def\QQQ#1#2{%
     \long\expandafter\def\csname#1\endcsname{#2}}%
\long\def\QQA#1#2{}%
\def\QTR#1#2{{\csname#1\endcsname #2}}%(gp) Is this the best?
\def\EXPAND#1[#2]#3{}%
\def\NOEXPAND#1[#2]#3{}%
\def\LaTeXparent#1{}%
\def\ChildStyles#1{}%
\def\ChildDefaults#1{}%
\def\QTagDef#1#2#3{}%
  \providecommand{\UNICODE}[2][]{}
\def\QQfnmark#1{\footnotemark}
 \def\abstract{%
  \if@twocolumn
   \section*{Abstract (Not appropriate in this style!)}%
   \else \small 
   \begin{center}{\bf Abstract\vspace{-.5em}\vspace{\z@}}\end{center}%
   \quotation 
   \fi
  }%
   \def\registered{\relax\ifmmode{}\r@gistered
                    \else$\m@th\r@gistered$\fi}%
 \def\r@gistered{^{\ooalign
  {\hfil\raise.07ex\hbox{$\scriptstyle\rm\text{R}$}\hfil\crcr
  \mathhexbox20D}}}}{}%
\newdimen\theight
\def\Column{%
 \vadjust{\setbox\z@=\hbox{\scriptsize\quad\quad tcol}%
  \theight=\ht\z@\advance\theight by \dp\z@\advance\theight by \lineskip
  \kern -\theight \vbox to \theight{%
   \rightline{\rlap{\box\z@}}%
   \vss
   }%
  }%
 }%
\def\qed{%
 \ifhmode\unskip\nobreak\fi\ifmmode\ifinner\else\hskip5\p@\fi\fi
 \hbox{\hskip5\p@\vrule width4\p@ height6\p@ depth1.5\p@\hskip\p@}%
 }%
\def\miss{\hbox{\vrule height2\p@ width 2\p@ depth\z@}}%
\def\tcol#1{{\baselineskip=6\p@ \vcenter{#1}} \Column}  %
\def\newfmtname{LaTeX2e}
  \DeclareOldFontCommand{\rm}{\normalfont\rmfamily}{\mathrm}
  \DeclareOldFontCommand{\sf}{\normalfont\sffamily}{\mathsf}
  \DeclareOldFontCommand{\tt}{\normalfont\ttfamily}{\mathtt}
  \DeclareOldFontCommand{\bf}{\normalfont\bfseries}{\mathbf}
  \DeclareOldFontCommand{\it}{\normalfont\itshape}{\mathit}
  \DeclareOldFontCommand{\sl}{\normalfont\slshape}{\@nomath\sl}
  \DeclareOldFontCommand{\sc}{\normalfont\scshape}{\@nomath\sc}
\def\alpha{{\Greekmath 010B}}%
\def\beta{{\Greekmath 010C}}%
\def\gamma{{\Greekmath 010D}}%
\def\delta{{\Greekmath 010E}}%
\def\epsilon{{\Greekmath 010F}}%
\def\zeta{{\Greekmath 0110}}%
\def\eta{{\Greekmath 0111}}%
\def\theta{{\Greekmath 0112}}%
\def\iota{{\Greekmath 0113}}%
\def\kappa{{\Greekmath 0114}}%
\def\lambda{{\Greekmath 0115}}%
\def\mu{{\Greekmath 0116}}%
\def\nu{{\Greekmath 0117}}%
\def\xi{{\Greekmath 0118}}%
\def\pi{{\Greekmath 0119}}%
\def\rho{{\Greekmath 011A}}%
\def\sigma{{\Greekmath 011B}}%
\def\tau{{\Greekmath 011C}}%
\def\upsilon{{\Greekmath 011D}}%
\def\phi{{\Greekmath 011E}}%
\def\chi{{\Greekmath 011F}}%
\def\psi{{\Greekmath 0120}}%
\def\omega{{\Greekmath 0121}}%
\def\varepsilon{{\Greekmath 0122}}%
\def\vartheta{{\Greekmath 0123}}%
\def\varpi{{\Greekmath 0124}}%
\def\varrho{{\Greekmath 0125}}%
\def\varsigma{{\Greekmath 0126}}%
\def\varphi{{\Greekmath 0127}}%
\def\nabla{{\Greekmath 0272}}
\def\FindBoldGroup{%
   {\setbox0=\hbox{$\mathbf{x\global\edef\theboldgroup{\the\mathgroup}}$}}%
}
\def\Greekmath#1#2#3#4{%
    \if@compatibility
        \ifnum\mathgroup=\symbold
           \mathchoice{\mbox{\boldmath$\displaystyle\mathchar"#1#2#3#4$}}%
                      {\mbox{\boldmath$\textstyle\mathchar"#1#2#3#4$}}%
                      {\mbox{\boldmath$\scriptstyle\mathchar"#1#2#3#4$}}%
                      {\mbox{\boldmath$\scriptscriptstyle\mathchar"#1#2#3#4$}}%
        \else
           \mathchar"#1#2#3#4% 
        \fi 
    \else 
        \FindBoldGroup
        \ifnum\mathgroup=\theboldgroup % For 2e
           \mathchoice{\mbox{\boldmath$\displaystyle\mathchar"#1#2#3#4$}}%
                      {\mbox{\boldmath$\textstyle\mathchar"#1#2#3#4$}}%
                      {\mbox{\boldmath$\scriptstyle\mathchar"#1#2#3#4$}}%
                      {\mbox{\boldmath$\scriptscriptstyle\mathchar"#1#2#3#4$}}%
        \else
           \mathchar"#1#2#3#4% 
        \fi     	    
	  \fi}
\newif\ifGreekBold  \GreekBoldfalse
\let\SAVEPBF=\pbf
\def\pbf{\GreekBoldtrue\SAVEPBF}%
  \newcounter{equationnumber}  
  \def\mathletters{%
     \addtocounter{equation}{1}
     \edef\@currentlabel{\theequation}%
     \setcounter{equationnumber}{\c@equation}
     \setcounter{equation}{0}%
     \edef\theequation{\@currentlabel\noexpand\alph{equation}}%
  }
    \def\BibTeX{{\rm B\kern-.05em{\sc i\kern-.025em b}\kern-.08em
                 T\kern-.1667em\lower.7ex\hbox{E}\kern-.125emX}}}{}%
\def\AmS{{\protect\usefont{OMS}{cmsy}{m}{n}%
                A\kern-.1667em\lower.5ex\hbox{M}\kern-.125emS}}}{}%
\def\@@eqncr{\let\@tempa\relax
    \ifcase\@eqcnt \def\@tempa{& & &}\or \def\@tempa{& &}%
      \else \def\@tempa{&}\fi
     \@tempa
     \if@eqnsw
        \iftag@
           \@taggnum
        \else
           \@eqnnum\stepcounter{equation}%
        \fi
     \fi
     \global\tag@false
     \global\@eqnswtrue
     \global\@eqcnt\z@\cr}
\def\TCItag{\@ifnextchar*{\@TCItagstar}{\@TCItag}}
\def\@TCItag#1{%
    \global\tag@true
    \global\def\@taggnum{(#1)}}
\def\@TCItagstar*#1{%
    \global\tag@true
    \global\def\@taggnum{#1}}
\def\binom#1#2{{#1 \choose #2}}%
\let\DOTSI\relax
\def\RIfM@{\relax\ifmmode}%
\def\FN@{\futurelet\next}%
\def\iint{\DOTSI\intno@\tw@\FN@\ints@}%
\def\iiint{\DOTSI\intno@\thr@@\FN@\ints@}%
\def\iiiint{\DOTSI\intno@4 \FN@\ints@}%
\def\idotsint{\DOTSI\intno@\z@\FN@\ints@}%
\def\ints@{\findlimits@\ints@@}%
\newif\iflimtoken@
\newif\iflimits@
\def\findlimits@{\limtoken@true\ifx\next\limits\limits@true
 \else\ifx\next\nolimits\limits@false\else
 \limtoken@false\ifx\ilimits@\nolimits\limits@false\else
 \ifinner\limits@false\else\limits@true\fi\fi\fi\fi}%
\def\multint@{\int\ifnum\intno@=\z@\intdots@                          %1
 \else\intkern@\fi                                                    %2
 \ifnum\intno@>\tw@\int\intkern@\fi                                   %3
 \ifnum\intno@>\thr@@\int\intkern@\fi                                 %4
 \int}%                                                               %5
\def\multintlimits@{\intop\ifnum\intno@=\z@\intdots@\else\intkern@\fi
 \ifnum\intno@>\tw@\intop\intkern@\fi
 \ifnum\intno@>\thr@@\intop\intkern@\fi\intop}%
\def\intic@{%
    \mathchoice{\hskip.5em}{\hskip.4em}{\hskip.4em}{\hskip.4em}}%
\def\negintic@{\mathchoice
 {\hskip-.5em}{\hskip-.4em}{\hskip-.4em}{\hskip-.4em}}%
\def\ints@@{\iflimtoken@                                              %1
 \def\ints@@@{\iflimits@\negintic@
   \mathop{\intic@\multintlimits@}\limits                             %2
  \else\multint@\nolimits\fi                                          %3
  \eat@}%                                                             %4
 \else                                                                %5
 \def\ints@@@{\iflimits@\negintic@
  \mathop{\intic@\multintlimits@}\limits\else
  \multint@\nolimits\fi}\fi\ints@@@}%
\def\intkern@{\mathchoice{\!\!\!}{\!\!}{\!\!}{\!\!}}%
\def\plaincdots@{\mathinner{\cdotp\cdotp\cdotp}}%
\def\intdots@{\mathchoice{\plaincdots@}%
 {{\cdotp}\mkern1.5mu{\cdotp}\mkern1.5mu{\cdotp}}%
 {{\cdotp}\mkern1mu{\cdotp}\mkern1mu{\cdotp}}%
 {{\cdotp}\mkern1mu{\cdotp}\mkern1mu{\cdotp}}}%
\def\RIfM@{\relax\protect\ifmmode}
\def\text{\RIfM@\expandafter\text@\else\expandafter\mbox\fi}
\let\nfss@text\text
\def\text@#1{\mathchoice
   {\textdef@\displaystyle\f@size{#1}}%
   {\textdef@\textstyle\tf@size{\firstchoice@false #1}}%
   {\textdef@\textstyle\sf@size{\firstchoice@false #1}}%
   {\textdef@\textstyle \ssf@size{\firstchoice@false #1}}%
   \glb@settings}
\def\textdef@#1#2#3{\hbox{{%
                    \everymath{#1}%
                    \let\f@size#2\selectfont
                    #3}}}
\newif\iffirstchoice@
\def\Let@{\relax\iffalse{\fi\let\\=\cr\iffalse}\fi}%
\def\vspace@{\def\vspace##1{\crcr\noalign{\vskip##1\relax}}}%
\def\multilimits@{\bgroup\vspace@\Let@
 \baselineskip\fontdimen10 \scriptfont\tw@
 \advance\baselineskip\fontdimen12 \scriptfont\tw@
 \lineskip\thr@@\fontdimen8 \scriptfont\thr@@
 \lineskiplimit\lineskip
 \vbox\bgroup\ialign\bgroup\hfil$\m@th\scriptstyle{##}$\hfil\crcr}%
\def\Sb{_\multilimits@}%
\def\endSb{\crcr\egroup\egroup\egroup}%
\def\Sp{^\multilimits@}%
\newdimen\ex@
\def\rightarrowfill@#1{$#1\m@th\mathord-\mkern-6mu\cleaders
 \hbox{$#1\mkern-2mu\mathord-\mkern-2mu$}\hfill
 \mkern-6mu\mathord\rightarrow$}%
\def\leftarrowfill@#1{$#1\m@th\mathord\leftarrow\mkern-6mu\cleaders
 \hbox{$#1\mkern-2mu\mathord-\mkern-2mu$}\hfill\mkern-6mu\mathord-$}%
\def\leftrightarrowfill@#1{$#1\m@th\mathord\leftarrow
\mkern-6mu\cleaders
 \hbox{$#1\mkern-2mu\mathord-\mkern-2mu$}\hfill
 \mkern-6mu\mathord\rightarrow$}%
\def\overrightarrow{\mathpalette\overrightarrow@}%
\def\overrightarrow@#1#2{\vbox{\ialign{##\crcr\rightarrowfill@#1\crcr
 \noalign{\kern-\ex@\nointerlineskip}$\m@th\hfil#1#2\hfil$\crcr}}}%
\def\overleftarrow{\mathpalette\overleftarrow@}%
\def\overleftarrow@#1#2{\vbox{\ialign{##\crcr\leftarrowfill@#1\crcr
 \noalign{\kern-\ex@\nointerlineskip}$\m@th\hfil#1#2\hfil$\crcr}}}%
\def\overleftrightarrow{\mathpalette\overleftrightarrow@}%
\def\overleftrightarrow@#1#2{\vbox{\ialign{##\crcr
   \leftrightarrowfill@#1\crcr
 \noalign{\kern-\ex@\nointerlineskip}$\m@th\hfil#1#2\hfil$\crcr}}}%
\def\underrightarrow{\mathpalette\underrightarrow@}%
\def\underrightarrow@#1#2{\vtop{\ialign{##\crcr$\m@th\hfil#1#2\hfil
  $\crcr\noalign{\nointerlineskip}\rightarrowfill@#1\crcr}}}%
\def\underleftarrow{\mathpalette\underleftarrow@}%
\def\underleftarrow@#1#2{\vtop{\ialign{##\crcr$\m@th\hfil#1#2\hfil
  $\crcr\noalign{\nointerlineskip}\leftarrowfill@#1\crcr}}}%
\def\underleftrightarrow{\mathpalette\underleftrightarrow@}%
\def\underleftrightarrow@#1#2{\vtop{\ialign{##\crcr$\m@th
  \hfil#1#2\hfil$\crcr
 \noalign{\nointerlineskip}\leftrightarrowfill@#1\crcr}}}%
\def\qopnamewl@#1{\mathop{\operator@font#1}\nlimits@}
\let\nlimits@\displaylimits
\def\setboxz@h{\setbox\z@\hbox}
\def\varlim@#1#2{\mathop{\vtop{\ialign{##\crcr
 \hfil$#1\m@th\operator@font lim$\hfil\crcr
 \noalign{\nointerlineskip}#2#1\crcr
 \noalign{\nointerlineskip\kern-\ex@}\crcr}}}}
 \def\rightarrowfill@#1{\m@th\setboxz@h{$#1-$}\ht\z@\z@
  $#1\copy\z@\mkern-6mu\cleaders
  \hbox{$#1\mkern-2mu\box\z@\mkern-2mu$}\hfill
  \mkern-6mu\mathord\rightarrow$}
\def\leftarrowfill@#1{\m@th\setboxz@h{$#1-$}\ht\z@\z@
  $#1\mathord\leftarrow\mkern-6mu\cleaders
  \hbox{$#1\mkern-2mu\copy\z@\mkern-2mu$}\hfill
  \mkern-6mu\box\z@$}
\def\projlim{\qopnamewl@{proj\,lim}}
\def\injlim{\qopnamewl@{inj\,lim}}
\def\varinjlim{\mathpalette\varlim@\rightarrowfill@}
\def\varprojlim{\mathpalette\varlim@\leftarrowfill@}
\def\varliminf{\mathpalette\varliminf@{}}
\def\varliminf@#1{\mathop{\underline{\vrule\@depth.2\ex@\@width\z@
   \hbox{$#1\m@th\operator@font lim$}}}}
\def\varlimsup{\mathpalette\varlimsup@{}}
\def\varlimsup@#1{\mathop{\overline
  {\hbox{$#1\m@th\operator@font lim$}}}}
\def\align{\@verbatim \frenchspacing\@vobeyspaces \@alignverbatim
You are using the "align" environment in a style in which it is not defined.}
\let\csname endalign*\endcsname =\endtrivlist
\def\alignat{\@verbatim \frenchspacing\@vobeyspaces \@alignatverbatim
You are using the "alignat" environment in a style in which it is not defined.}
\let\csname endalignat*\endcsname =\endtrivlist
\def\xalignat{\@verbatim \frenchspacing\@vobeyspaces \@xalignatverbatim
You are using the "xalignat" environment in a style in which it is not defined.}
\let\csname endxalignat*\endcsname =\endtrivlist
\def\gather{\@verbatim \frenchspacing\@vobeyspaces \@gatherverbatim
You are using the "gather" environment in a style in which it is not defined.}
\let\csname endgather*\endcsname =\endtrivlist
\def\multiline{\@verbatim \frenchspacing\@vobeyspaces \@multilineverbatim
You are using the "multiline" environment in a style in which it is not defined.}
\let\csname endmultiline*\endcsname =\endtrivlist
\def\arrax{\@verbatim \frenchspacing\@vobeyspaces \@arraxverbatim
You are using a type of "array" construct that is only allowed in AmS-LaTeX.}
\def\tabulax{\@verbatim \frenchspacing\@vobeyspaces \@tabulaxverbatim
You are using a type of "tabular" construct that is only allowed in AmS-LaTeX.}
\let\csname endarrax*\endcsname =\endtrivlist
\let\csname endtabulax*\endcsname =\endtrivlist
 \def\endequation{%
     \ifmmode\ifinner % FLEQN hack
      \iftag@
        \addtocounter{equation}{-1} % undo the increment made in the begin part
        $\hfil
           \displaywidth\linewidth\@taggnum\egroup \endtrivlist
        \global\tag@false
        \global\@ignoretrue   
      \else
        $\hfil
           \displaywidth\linewidth\@eqnnum\egroup \endtrivlist
        \global\tag@false
        \global\@ignoretrue 
      \fi
     \else   
      \iftag@
        \addtocounter{equation}{-1} % undo the increment made in the begin part
        \eqno \hbox{\@taggnum}
        \global\tag@false%
        $$\global\@ignoretrue
      \else
        \eqno \hbox{\@eqnnum}% $$ BRACE MATCHING HACK
        $$\global\@ignoretrue
      \fi
     \fi\fi
 } 
 \newif\iftag@ \tag@false
 \def\TCItag{\@ifnextchar*{\@TCItagstar}{\@TCItag}}
 \def\@TCItag#1{%
     \global\tag@true
     \global\def\@taggnum{(#1)}}
 \def\@TCItagstar*#1{%
     \global\tag@true
     \global\def\@taggnum{#1}}
     \def\tag{\@ifnextchar*{\@tagstar}{\@tag}}
     \def\@tag#1{%
         \global\tag@true
         \global\def\@taggnum{(#1)}}
     \def\@tagstar*#1{%
         \global\tag@true
         \global\def\@taggnum{#1}}
\begin{document}

\title{Existence of orthogonal domain walls in B\'{e}nard-Rayleigh convection%
}
\author{ G\'{e}rard Iooss \\
%EndAName
{\footnotesize Laboratoire J.A.Dieudonn\'e, I.U.F., Universit\'e C\^ote
d'Azur, CNRS,}\\
{\footnotesize Parc Valrose, 06108 Nice cedex 2, France} \\
{\footnotesize iooss.gerard@orange.fr}}
\date{}
\maketitle

\begin{abstract}
In B\'{e}nard-Rayleigh convection we consider the pattern defect in
orthogonal domain walls connecting a set of convective rolls with another
set of rolls orthogonal to the first set. This is understood as an
heteroclinic orbit of a reversible system where the $x$ - coordinate plays
the role of time. This appears as a perturbation of the heteroclinic orbit
proved to exist in a reduced 6-dimensional system studied by a variational
method in \cite{BHI}, and studied analytically in \cite{Io23}. We then prove for a
given amplitude $\varepsilon^2$, and an imposed symmetry in coordinate $y$,
the existence of a one-parameter family of heteroclinic connections between
orthogonal sets of rolls, \ with wave numbers (different in general) which
are linked to an adapted "shift" of rolls parallel to the wall.
\end{abstract}

Key words: Reversible dynamical systems, Bifurcations, Heteroclinic
connection, Domain walls in convection

\section{Introduction}

\begin{remark}
This work slightly improves the results (see Theorem \ref{theor walls}) of a
previous version accepted for publication in JMFM (2024). The modifications
are consequences of an improvement of estimates obtained after revision  (see Theorem \ref{theorem}) in 
\cite{Io23}, which are extensively used here.
\end{remark}

The B\'{e}nard-Rayleigh convection problem is a classical problem in fluid
mechanics. It concerns the flow of a three-dimensional viscous fluid layer
situated between two horizontal parallel plates and heated from below. Upon
increasing the difference of temperature between the two plates, the simple
conduction state looses stability at a critical value of the temperature
difference corresponding to a critical value $\mathcal{R}_{c}$ of the
Rayleigh number. Beyond the instability threshold, a convective regime
develops in which patterns are formed, such as convective rolls, hexagons,
or squares \cite{kosch}. Observed patterns are often accompanied by defects
as for instance domain walls which occur between rolls with different
orientations. We refer to the works \cite{Bod,Man, Man-Pom}, and the
references therein, for experimental and analytical results, and detailed
descriptions of these patterns and defects.

Mathematically, the governing equations are the Navier-Stokes equations
coupled with an equation for the temperature, and completed by boundary
conditions at the two plates. Observed patterns are then found as particular
steady solutions of these equations. In \cite{HI Arma} and \cite{HI21b}
Haragus and Iooss handled the full governing Navier-Stokes-Boussinesq \
(N-S-B) equations and proved, for various boundary conditions, the existence
of symmetric domain walls in convection (however not yet observed
experimentally).

The existence of orthogonal domain walls (effectively observed
experimentally) has been studied formally by Manneville and Pomeau in \cite%
{Man-Pom}. In \cite{Boy-Vi} and \cite{Hu-Vi}, (this is named "planar 90$^{0}$
grain boundary separating two stripe domains of mutually perpendicular
orientations"), this is completed by the study of the dynamics of these
defects, function of the waves numbers of each set of rolls, however only on
a Swift-Hohenberg type of model ODE so that these previous works do not
start with the Navier-Stokes-Boussinesq system of equations, and just give
interesting asymptotic non rigorous results in the mathematical sense.

More recently Buffoni et al \cite{BHI} handle the full governing equations,
showing that the study leads to a small perturbation of the reduced system
of amplitude equations in $%
%TCIMACRO{\U{211d} }%
%BeginExpansion
\mathbb{R}
%EndExpansion
^{6}$ , the same system as the one predicted in \cite{Man-Pom}: 
\begin{eqnarray}
A^{(4)} &=&A(1-A^{2}-gB^{2})  \label{reduced system} \\
B^{\prime \prime } &=&\varepsilon ^{2}B(-1+gA^{2}+B^{2}),  \notag
\end{eqnarray}%
where $\varepsilon ^{2}$ is the amplitude of rolls at infinities, and $g$ a
number, function of the Prandtl number of the flow. By a variational argument
Boris Buffoni et al \cite{BHI} prove the existence of an heteroclinic orbit,
for any $g>1,$ and $\varepsilon $ small enough, such that 
\begin{eqnarray*}
A_{\ast }(x)&>&0, \text{  }0<B_{\ast }(x) <1\\
(A_{\ast }(x),B_{\ast }(x)) &\rightarrow &\left\{ 
\begin{array}{c}
M_{-}=(1,0)\text{ as }x\rightarrow -\infty  \\ 
M_{+}=(0,1)\text{ as }x\rightarrow +\infty 
\end{array}%
\right. .
\end{eqnarray*}%
This orbit is expected to represent the connection between a set of
convecting rolls parallel to the $x$ direction, with a set of orthogonal
rolls. Unfortunately, this type of elegant proof does not allow to prove the
persistence of such heteroclinic curve under reversible perturbations of the
vector field, such that the one resulting from the full N-S-B system. Our
purpose here is to use the analytic results of \cite{Io23} for proving the
persistence of the above heteroclinic, hence applied to orthogonal domain
walls in B\'{e}nard-Rayleigh convection. It should be noticed that even
though the present analysis looks similar to the one made in \cite{HI Arma}
and \cite{HI21b}, it really needs serious adaptation since, here we loose
the symmetry of the wall defect, which plays an important role in \cite{HI
Arma} and \cite{HI21b}. Contrary to the symmetric case considered in \cite%
{HI Arma} and \cite{HI21b}, the size of the perturbation depends on $%
\varepsilon ,$ which appears also in the rescaled heteroclinic of system (%
\ref{reduced system}). This introduces lot of computations for controling
higher order terms (see section \ref{sect: estimates}). For obtaining steady
solutions of N-S-B system, we are led to consider the connection between
rolls of different wave numbers; we give the link between them and a
modulated "shift" of the system of rolls parallel to the wall, leading to a
one parameter set of solutions, for a fixed Rayleigh number slightly above
criticality, and a fixed Prandtl number. Contrary to the symmetric case, the
wave numbers of rolls at infinities need not be the same.

Section \ref{sect: reducedsyst} introduces the 8 dimensional system which
perturbs (\ref{reduced system}) and contains the full N-S-B system. Moreover
we give the final result in Theorem \ref{theor walls}. In section \ref%
{sect:setting} we introduce the new variables which tend exponentially
towards 0 at infinities, in such a way as to work in the weighted space $%
L_{\eta }^{2}.$ In section \ref{sect: estimates} we obtain estimates (in $%
L_{\eta }^{2})$ for solving in section \ref{sect: bifurc}, via a
Lyapunov-Schmidt reduction, the infinite-dimensional (in a function space)
part of the system. In subsection \ref{sect: finalbifurc} we solve the
one-dimensional remaining bifurcation equation leading to the result of
Theorem \ref{theor walls}. In Appendix \ref{App1} we indicate the normal
form found in \cite{BHI} and establish the perturbed system (\ref{reduced
syst a}). In Appendix \ref{App3} we give precisely the expression of the
equilibrium at $-\infty $ (rolls parallel to $x$ axis) and in Appendix \ref%
{App2} we give precisely the expression of the periodic solution at $+\infty 
$ (rolls parallel to the wall), giving a new analytic (necessary) proof for
the family of periodic solutions in the 1:1 resonance reversible bifurcation
problem (completing the former geometric proof of \cite{Io-Pe}).

\section{The reduced system\label{sect: reducedsyst}}

In \cite{BHI}, starting from a formulation of the steady governing N-S-B
equations as an infinite-dimensional dynamical system in which the
horizontal coordinate $x$ plays the role of evolutionary variable (spatial
dynamics), and looking for solutions periodic in $y$, a center manifold
reduction is performed, which leads to a $12$-dimensional reduced reversible
dynamical system, reducing to $8$-dimensional ($%
%TCIMACRO{\U{211d} }%
%BeginExpansion
\mathbb{R}
%EndExpansion
^{4}\times 
%TCIMACRO{\U{2102} }%
%BeginExpansion
\mathbb{C}
%EndExpansion
^{2}),$ after restricting to solutions with reflection symmetry $%
y\rightarrow -y$ (fixing the a priori free shift in the $y$ direction). A
normal form up to cubic order for this reduced system is obtained in \cite%
{BHI}. We may notice that $\varepsilon^2A_0$ and $\varepsilon^2B_0e^{ix/2%
\varepsilon}$ are respectively, after the scaling made in Appendix \ref{App1}%
, the principal parts of amplitudes (of order $\varepsilon^2$) of classical
convective rolls at $-\infty$ and $+\infty$.

After some calculations and rescaling (see (\ref{new reduced syst}) in
Appendix \ref{App1}) the perturbed system becomes%
\begin{eqnarray}
A_{0}^{(4)} &=&k_{-}A_{0}^{\prime \prime }+A_{0}(1-\frac{k_{-}^{2}}{4}%
-A_{0}^{2}-g|B_{0}|^{2})+\widehat{f},  \notag \\
B_{0}^{\prime \prime } &=&\varepsilon ^{2}B_{0}(-1+gA_{0}^{2}+|B_{0}|^{2})+%
\widehat{g}.  \label{reduced syst a}
\end{eqnarray}

Parameters are defined as (see Appendix \ref{App1})%
\begin{eqnarray*}
\varepsilon ^{4} &\sim &\mathcal{R}^{1/2}-\mathcal{R}_{c}^{1/2},\text{ }%
\mathcal{R}\text{ Rayleigh number,} \\
&&k_{c}(1+\varepsilon ^{2}k_{-})\text{ wave number in }y\text{ direction,}
\end{eqnarray*}

\begin{remark}
Notice that the system (\ref{reduced syst a}) becomes just system (\ref%
{reduced system}) for $k_{-}=\widehat{f}=\widehat{g}=0,$ and $B_{0}$ real.
\end{remark}

In (\ref{reduced syst a}) we have%
\begin{eqnarray*}
\widehat{f}(k_{-},\varepsilon ,\exp (\pm i\frac{x}{2\varepsilon }),X,Y,%
\overline{Y}) &=&\widehat{f_{0}}+\widehat{f_{1}} \\
\widehat{g}(k_{-},\varepsilon ,\exp (\pm i\frac{x}{2\varepsilon }),X,Y,%
\overline{Y}) &=&\widehat{g_{0}}+\widehat{g_{1}},
\end{eqnarray*}%
where 
\begin{eqnarray*}
X &=&(A_{0},A_{0}^{\prime },A_{0}^{\prime \prime },A_{0}^{\prime \prime
\prime })^{t}\in 
%TCIMACRO{\U{211d} }%
%BeginExpansion
\mathbb{R}
%EndExpansion
^{4}, \\
Y &=&(B_{0},B_{0}^{\prime })^{t}\in 
%TCIMACRO{\U{2102} }%
%BeginExpansion
\mathbb{C}
%EndExpansion
^{2}.
\end{eqnarray*}%
The dependency in $\exp (\pm i\frac{x}{2\varepsilon })$ of $\widehat{f}$ and 
$\widehat{g}$ comes from terms not in normal form, of degree at least 5 in $%
(X,Y)$ and the rescaling of the original amplitude $B$ of the rolls parallel
to the wall. In fact (see Appendix \ref{App1}) $B$ is rescaled as $%
\varepsilon ^{2}B_{0}e^{ix/2\varepsilon },$ where $x$ is the rescaled
coordinate. "Cubic" terms $\widehat{f_{0}},\widehat{g_{0}}$, are autonomous,
of the form%
\begin{eqnarray}
\widehat{f_{0}} &=&id_{1}\varepsilon A_{0}(B_{0}\overline{B_{0}}^{\prime }-%
\overline{B_{0}}B_{0}^{\prime })+\varepsilon ^{2}[\sigma
_{0}k_{-}A_{0}^{3}+d_{3}A_{0}^{\prime \prime }+d_{4}A_{0}^{2}A_{0}^{\prime
\prime }+d_{2}A_{0}A_{0}^{\prime 2}+d_{6}A_{0}|B_{0}^{\prime }|^{2}  \notag
\\
&&+d_{7}A_{0}^{\prime }(B_{0}\overline{B_{0}}^{\prime }+\overline{B_{0}}%
B_{0}^{\prime })+d_{5}A_{0}^{\prime \prime }|B_{0}|^{2}]+id_{8}\varepsilon
^{3}A_{0}^{\prime \prime }(B_{0}\overline{B_{0}}^{\prime }-\overline{B_{0}}%
B_{0}^{\prime })+\mathcal{O}(\varepsilon ^{4}),  \label{f0hat}
\end{eqnarray}%
\begin{eqnarray}
\widehat{g_{0}} &=&\varepsilon ^{3}[ic_{0}B_{0}^{\prime
}+ic_{1}B_{0}^{\prime }|A_{0}|^{2}+ic_{2}B_{0}^{\prime
}|B_{0}|^{2}+ic_{3}B_{0}^{2}\overline{B_{0}}^{\prime
}+ic_{9}B_{0}A_{0}A_{0}^{\prime }]  \label{g0hat} \\
&&+\varepsilon ^{4}[c_{4}B_{0}^{\prime }(B_{0}\overline{B_{0}}^{\prime }-%
\overline{B_{0}}B_{0}^{\prime })+c_{5}B_{0}A_{0}A_{0}^{\prime \prime
}+c_{6}B_{0}A_{0}^{\prime 2}+c_{7}B_{0}^{\prime }A_{0}A_{0}^{\prime }] 
\notag \\
&&+\varepsilon ^{5}[ic_{8}B_{0}A_{0}A_{0}^{\prime \prime \prime
}+ic_{7}B_{0}^{\prime }A_{0}A_{0}^{\prime \prime }+ic_{10}B_{0}^{\prime
}A_{0}^{\prime 2}+ic_{11}B_{0}A_{0}^{\prime }A_{0}^{\prime \prime }+\mathcal{%
O}(\varepsilon ^{6}),  \notag
\end{eqnarray}%
where oefficients $c_{j},$ $d_{j}$ are real (due to symmetries as seen in 
\cite{BHI} and Appendix \ref{App1}). Higher order terms, not in normal form
are non autonomous and such that%
\begin{eqnarray*}
\widehat{f_{1}} &=&\varepsilon ^{4}\mathcal{O[}|X|(|X|^{2}+|Y|^{2}+%
\varepsilon ^{4})^{2}], \\
\widehat{g_{1}} &=&\varepsilon ^{6}\mathcal{O[}%
(|X|^{2}+|Y|)(|X|^{2}+|Y|^{2}+\varepsilon ^{4})^{2}].
\end{eqnarray*}%
Moreover the system (\ref{reduced syst a}) commutes with the reversibility
symmetry $S_{1}:$%
\begin{equation*}
(x,A_{0},A_{0}^{\prime },A_{0}^{\prime \prime },A_{0}^{\prime \prime \prime
},B_{0},B_{0}^{\prime })\mapsto (-x,A_{0},-A_{0}^{\prime },A_{0}^{\prime
\prime },-A_{0}^{\prime \prime \prime },\overline{B_{0}},-\overline{%
B_{0}^{\prime }}),
\end{equation*}%
and we have the additional symmetry property (see \cite{BHI}) resulting from
the equivariance of the original system under the shift by half of a wave
length in the $y$ direction (fixing the symmetry $y\mapsto -y$):%
\begin{eqnarray*}
&&\text{r.h.s. of }A_{0}^{(4)}\text{ is odd in }X, \\
\text{r.h.s. of} &&B_{0}^{\prime \prime }\text{ is even in }X.
\end{eqnarray*}%
The estimates for non normal form terms $\widehat{f_{1}}$ and $\widehat{g_{1}%
},$ result from the property that they start at order 5, since the normal
form does not contain terms of degree 4 in $(X,Y),$ and from the inequality 
\begin{equation*}
(a+b)^{4}\leq 4(a^{2}+b^{2})^{2}\text{ for }a,b\in 
%TCIMACRO{\U{211d} }%
%BeginExpansion
\mathbb{R}
%EndExpansion
.
\end{equation*}

\begin{remark}
Notice that the above reduction is valid for the three classical boundary
conditions for the B\'{e}nard-Rayleigh convection problem: rigid-rigid,
free-free, free-rigid. However in the case of rigid-rigid or free-free
boundary conditions, $Y=0$ is an invariant subspace (see \cite{BHI}), which
simplifies the estimate for $\widehat{g_{1}}.$
\end{remark}

\begin{remark}
Notice also that the high order terms $\widehat{f_{1}}$ and $\widehat{g_{1}}%
, $ of size $\mathcal{O}(\varepsilon ^{4})$ for $A_{0}^{(4)}$ and $\mathcal{O%
}(\varepsilon ^{6})$ for $B_{0}^{\prime \prime }$ are functions of $e^{\pm i%
\frac{x}{2\varepsilon }}.$ This is due to the fact that $\varepsilon
^{2}B_{0}e^{i\frac{x}{2\varepsilon }}$ is the original amplitude of the $Y$
mode (see (\ref{B0tilde}) in Appendix \ref{App1}).
\end{remark}

Let us give here the results obtained in \cite{Io23} for the system (\ref%
{reduced system}) and which are used in the calculations below:

\begin{theorem}
\label{theorem}Let us choose  $\frac{1}{3}\leq \delta \leq 1,$ and admit a
certain conjecture on a 4th order differential equation with boundary
conditions on a bounded interval, all being independent of $\varepsilon .$
Then for $\varepsilon $ small enough, the 3-dim unstable manifold of $M_{-}$
intersects transversally the 3-dim stable manifold of $M_{+},$ except for a
finite number of values of $\delta .$ The connecting curve $(A_{\ast
},B_{\ast })(x)$ which is obtained is the only curve for this intersection
going from $M_{-}$ towards $M_{+}$, and its dependency in parameters $%
(\varepsilon ,\delta )$ is analytic. In addition we have $B_{\ast }(x)$ and $%
B_{\ast }^{\prime }(x)>0$ on $(-\infty ,+\infty ).$ For $x\rightarrow
-\infty $ we have $(A_{\ast }-1,A_{\ast }^{\prime },A_{\ast }^{\prime \prime
},A_{\ast }^{\prime \prime \prime },B_{\ast },B_{\ast }^{\prime
})\rightarrow 0$ at least as $e^{\varepsilon \delta x},$ while for $%
x\rightarrow +\infty ,$ $(A_{\ast },A_{\ast }^{\prime },A_{\ast }^{\prime
\prime },A_{\ast }^{\prime \prime \prime })\rightarrow 0$ at least as $e^{-%
\sqrt{\frac{\delta }{2}}x},$ and $(B_{\ast }-1,B_{\ast }^{\prime
})\rightarrow 0$ at least as\ $e^{-\sqrt{2}\varepsilon x}.$
\end{theorem}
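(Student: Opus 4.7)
The plan is to treat the system (\ref{reduced system}) as a singular perturbation problem, exploiting that the $B$-equation is slow (eigenvalues of order $\varepsilon$) while the $A$-equation is fast (order one), and to close by an analytic implicit function argument. First I would linearize at each equilibrium. At $M_{-}=(1,0)$ the $A$-block $a^{(4)}+2a=0$ has roots $2^{1/4}e^{i(2k+1)\pi/4}$ (two fast unstable, two fast stable) while the $B$-block $b''=\varepsilon^{2}(g-1)b$ gives eigenvalues $\pm\varepsilon\sqrt{g-1}$; setting $\delta=\sqrt{g-1}$ accounts for the decay $e^{\varepsilon\delta x}$ as $x\to-\infty$. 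At $M_{+}=(0,1)$ the $A$-block yields $\lambda^{4}=-\delta^{2}$, whose stable pair has real part $-\sqrt{\delta/2}$, and the $B$-block gives $\pm\varepsilon\sqrt{2}$, matching the decay rates claimed at $+\infty$. Thus $W^{u}(M_{-})$ and $W^{s}(M_{+})$ are each $3$-dimensional in the $6$-dimensional phase space, and by the analytic stable/unstable manifold theorem they depend analytically on $(\varepsilon,\delta)$.

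I would then recast the existence of the heteroclinic as a shooting/matching problem at an intermediate slice $x=x_{0}$. Parametrize $W^{u}(M_{-})$ by $\mathbf{u}\in\mathbb{R}^{3}$ and $W^{s}(M_{+})$ by $\mathbf{s}\in\mathbb{R}^{3}$; the matching condition $\Phi_{u}(\mathbf{u})=\Phi_{s}(\mathbf{s})$ is six analytic equations in six analytic unknowns. Because the slow decay rate at $-\infty$ is only $\varepsilon\delta$, the parametrizations should be built in exponentially weighted spaces $L^{2}_{\eta}$ with $\eta$ tuned just below $\varepsilon\delta$, and the reversibility $S_{1}$ can be used to cut the effective dimension. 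To identify a well-posed limit as $\varepsilon\to 0$ I would pass to the slow scale $\xi=\varepsilon x$: at leading order $A^{2}=1-gB^{2}$ defines a slow manifold along which $B_{\xi\xi}=(g-1)B\bigl(1-(g+1)B^{2}\bigr)$. This slow flow cannot directly reach $M_{+}$ because the manifold terminates at $B^{2}=1/g$, so a fast transition layer is required in which $A^{(4)}=A(1-A^{2}-g\bar{B}^{2})$ drives $A$ down to zero at a frozen outer value $\bar B$. Matching the slow outer segments with this inner layer produces a fourth-order boundary-value problem on a bounded interval, with coefficients independent of $\varepsilon$: this is the BVP mentioned in the statement, and the \emph{conjecture} is precisely its non-degenerate (transverse) solvability as $\bar B$ varies.

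Granting that conjecture, the six-equation matching system at $\varepsilon>0$ is an analytic perturbation of a non-degenerate limit system, so the analytic implicit function theorem delivers a unique analytic family $(A_{\ast},B_{\ast})(x;\varepsilon,\delta)$ for $\varepsilon$ small. The finitely many excluded values of $\delta\in[1/3,1]$ correspond to resonances between fast and slow linear rates or to accidental degeneracies of the limiting BVP, and the exponential decay bounds follow directly from the linear spectra computed in step one. Positivity of $B_{\ast}$ and $B_{\ast}^{\prime}$ transfers from the explicit positive monotone slow profile via smallness of the perturbation, and uniqueness of the orbit going from $M_{-}$ toward $M_{+}$ is a consequence of the transversality of the intersection.

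The main obstacle is the transition layer in step three: the slow manifold loses normal hyperbolicity at $A=0$, so Fenichel theory does not apply off the shelf, and the fourth-order nature of the $A$-equation forces one to control a two-parameter family of fast descents rather than a single one. Formulating the right BVP on a bounded interval, establishing its Fredholm structure uniformly in $\delta$, and proving (or reducing to a conjecture) the non-degeneracy of its solution is the delicate heart of the argument; without this input, transversality of $W^{u}(M_{-})$ and $W^{s}(M_{+})$ cannot be lifted uniformly in $\varepsilon$.
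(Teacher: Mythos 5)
First, a caveat on the comparison: the paper does not prove Theorem \ref{theorem} here at all; it is imported verbatim from \cite{Io23}, and the only "proof" given in this paper is a three-line sketch, whose one substantive ingredient is that both invariant manifolds lie inside the $5$-dimensional invariant level set $\mathcal{W}_{g}=0$ of the first integral (\ref{Wg}), and that the delicate point is to continue the two manifolds analytically until they meet transversally \emph{inside that level set}. Your eigenvalue computations at $M_{-}$ and $M_{+}$ are correct and do account for all four decay rates in the statement, and your slow--fast picture (slow manifold $A^{2}=1-gB^{2}$ losing normal hyperbolicity at $A=0$, forcing an inner layer governed by a fourth-order BVP) is a plausible guess at where the conjectured $\varepsilon$-independent boundary-value problem comes from. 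But this is a genuinely different route from the one the paper points to, and it is an outline rather than a proof: you yourself flag that the Fredholm structure and non-degeneracy of the layer problem, and the failure of Fenichel theory at the turning point $B^{2}=1/g$, are not established. The theorem's hypothesis concedes only one specific conjecture on a bounded-interval BVP; your argument needs substantially more than that (uniform-in-$\varepsilon$ control of the matched asymptotics through a non-hyperbolic point), and none of it is reduced to that single conjecture.

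There is also a concrete structural error in your matching step. You set up $\Phi_{u}(\mathbf{u})=\Phi_{s}(\mathbf{s})$ as six analytic equations in six unknowns and invoke the implicit function theorem, i.e.\ transversality of two $3$-dimensional manifolds in $\mathbb{R}^{6}$. This cannot work: at any intersection point the flow direction is tangent to \emph{both} $W^{u}(M_{-})$ and $W^{s}(M_{+})$, so the $6\times6$ Jacobian is always singular, and indeed $3+3-6=0$ while the intersection, being flow-invariant, must contain a whole orbit. The dimension count only closes because both manifolds lie in the codimension-one invariant set $\mathcal{W}_{g}=0$, where $3+3-5=1$ and a transversal intersection is exactly one orbit --- this is precisely the role of the first integral that the paper's sketch emphasizes and that your proposal never uses. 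Without restricting to that level set (or otherwise quotienting by the flow, e.g.\ via a section and the reversibility $S_{1}$, which you mention but do not implement), the implicit-function step at the heart of your argument fails as stated.
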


Moreover, choosing $0<\delta _{\ast }=\frac{1}{10}\delta ^{2/5}$ we have the
following useful estimates

\begin{corollary}
\label{corollary AA'} For $x\in (-\infty ,0]$ there exists $c>0$ independent
of $\varepsilon $ small enough, such that for the heteroclinic curve%
\begin{eqnarray*}
|A_{\ast }(x)-1| &\leq &ce^{2\varepsilon \delta _{\ast }x}, \\
|A_{\ast }^{\prime }(x)|+|A_{\ast }^{\prime \prime }(x)|+|A_{\ast }^{\prime
\prime \prime }(x)| &\leq &c\varepsilon ^{3/5}e^{2\varepsilon \delta _{\ast
}x}, \\
0 &<&B_{\ast }(x)\leq ce^{\varepsilon \delta _{\ast }x}, \\
0 &<&B_{\ast }^{\prime }(x)\leq c\varepsilon e^{\varepsilon \delta _{\ast
}x}.
\end{eqnarray*}
\end{corollary}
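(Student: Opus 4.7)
The plan is to bootstrap the coarse exponential rate $e^{\varepsilon\delta x}$ provided by Theorem \ref{theorem} to the refined rates claimed in the corollary, by exploiting the clear separation of scales in the linearisation of (\ref{reduced system}) at the equilibrium $M_-=(1,0)$. Writing $A=1+a$, the fourth-order equation becomes $a^{(4)}+2a=-gB^{2}(1+a)+\mathcal{O}(a^{2})$, whose linear part has characteristic roots with $|\mathrm{Re}\,\lambda|=2^{-1/4}$, i.e.\ the $a$-dynamics is entirely \emph{fast}. The $B$-equation reads $B''-\varepsilon^{2}(g-1)B=\varepsilon^{2}B[g(2a+a^{2})+B^{2}]$, whose linear part has \emph{slow} characteristic rates $\pm\varepsilon\sqrt{g-1}$. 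The unique unstable eigenvalue of size $\mathcal{O}(\varepsilon)$ at $M_-$ is therefore $+\varepsilon\sqrt{g-1}$, and it governs the slow behaviour of the heteroclinic as $x\to-\infty$.

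To obtain the bound on $B_\ast$ I would recast its equation in Duhamel form using the Green's function of $\partial_x^{2}-\varepsilon^{2}(g-1)$ adapted to decay at $-\infty$. The coarse rate from Theorem \ref{theorem} (together with the corresponding rate for $a$) makes the nonlinear source integrable on $(-\infty,0]$, and a contraction/Gronwall argument on the integral representation then shows that on the heteroclinic the slow unstable mode dominates, giving $|B_\ast(x)|\leq c\,e^{\varepsilon\delta_\ast x}$ on $(-\infty,0]$; differentiating the integral identity produces the extra factor $\varepsilon$ for $B_\ast'$. Once $B_\ast^{2}=\mathcal{O}(e^{2\varepsilon\delta_\ast x})$ is known, the $a$-equation is viewed as a linear problem $(\partial_x^{4}+2)a = f$ with $|f(x)|\leq c\,e^{2\varepsilon\delta_\ast x}$. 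The operator $\partial_x^{4}+2$ is boundedly invertible on the exponentially weighted space with weight $2\varepsilon\delta_\ast$ (its symbol is bounded away from $0$ on the shifted imaginary axis for $\varepsilon$ small), and the resulting particular solution is of size $e^{2\varepsilon\delta_\ast x}$, yielding the estimate on $|A_\ast-1|$.

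The main obstacle, and the only step that is not a direct consequence of such Duhamel estimates, is the factor $\varepsilon^{3/5}$ in the bound on $A_\ast',A_\ast'',A_\ast'''$. Differentiating the slow particular solution gives contributions of size $\varepsilon^{k}e^{2\varepsilon\delta_\ast x}$ which are much smaller than needed; but the actual $A_\ast$ must also contain a homogeneous contribution along the strong unstable directions of $M_-$ in order to lie on the full $3$-dimensional unstable manifold. This fast component decays like $e^{2^{-1/4}x}$ on $(-\infty,0]$, far faster than $e^{2\varepsilon\delta_\ast x}$, but its derivatives are of order unity times its amplitude at $x=0$. To pin down the correct amplitude one has to match at $M_-$ the parametrisation of the unstable manifold against the coupling coming from $-gB_\ast^{2}\sim\varepsilon^{4}$ (in the original, unrescaled variables of \cite{Io23}) and interpolate between the $\mathcal{O}(1)$ speed of the fast modes and the $\mathcal{O}(\varepsilon)$ speed of the slow one; this interpolation is where the intermediate exponent $3/5=1-\tfrac{2}{5}$ arises, in line with the choice $\delta_\ast=\tfrac{1}{10}\delta^{2/5}$. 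Positivity of $B_\ast$ and $B_\ast'$ on $(-\infty,0]$ is already part of Theorem \ref{theorem}, so no further work is required there.
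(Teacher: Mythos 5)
Your local analysis at $M_{-}$ is set up correctly: the linearization of (\ref{reduced system}) about $(1,0)$ does split into four fast $A$-modes with $|\mathrm{Re}\,\lambda|=2^{-1/4}$ and two slow $B$-modes with rates $\pm\varepsilon\sqrt{g-1}=\pm\varepsilon\delta$, the slaving $A_{\ast}-1\approx-\tfrac{g}{2}B_{\ast}^{2}$ explains the doubled rate $2\varepsilon\delta_{\ast}$, and differentiating the Duhamel formula for $B$ does yield the factor $\varepsilon$ in $B_{\ast}^{\prime}$. But note that the paper itself offers no argument here: it states explicitly that these estimates ``follow immediately from the proof'' of Theorem \ref{theorem} in \cite{Io23}, i.e.\ they are read off from the \emph{global} construction of the connecting orbit, not from a local analysis at $M_{-}$.

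This is exactly where your proposal has a genuine gap, and you half-admit it: the bound $|A_{\ast}^{(m)}(x)|\leq c\varepsilon^{3/5}e^{2\varepsilon\delta_{\ast}x}$, $m=1,2,3$, is the heart of the corollary (it is what produces $a_{1}=\mathcal{O}(\varepsilon^{1/5})$, $a_{0}=\mathcal{O}(\varepsilon^{4/5})$, $a_{5}\varepsilon^{4/5}=(d_{2}-d_{4})\int A_{\ast}A_{\ast}^{\prime 3}dx$, hence the entire bifurcation equation), and your argument does not produce it. The slaved particular solution gives derivatives of size $\mathcal{O}(\varepsilon\,e^{2\varepsilon\delta_{\ast}x})$, too small; the fast homogeneous component has derivatives comparable to its amplitude at $x=0$, and that amplitude is \emph{not} determined by any matching at $M_{-}$ — it is fixed by where and how the orbit leaves the neighbourhood of $M_{-}$ and crosses over to the stable manifold of $M_{+}$, i.e.\ by the global geometry established in \cite{Io23}. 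The remark that ``$3/5=1-2/5$, in line with $\delta_{\ast}=\tfrac{1}{10}\delta^{2/5}$'' is numerology, not a derivation; compare Corollary \ref{Corollary stab}, where $A_{\ast}^{(m)}=\mathcal{O}(\varepsilon^{2/5}e^{-\varepsilon^{1/5}\delta_{\ast}x})$ on $[0,\infty)$ — a rate no linearization at either equilibrium can produce (Theorem \ref{theorem} gives the $\mathcal{O}(1)$ rate $e^{-\sqrt{\delta/2}\,x}$ there). A secondary but related gap is uniformity in $\varepsilon$: your contraction near $-\infty$ only transfers bounds from data at a finite point such as $B_{\ast}(0)$, $A_{\ast}^{\prime}(0)$, and controlling those uniformly in $\varepsilon$ (with the stated powers of $\varepsilon$) is again precisely the content of the global argument of \cite{Io23}. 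As written, your scheme would prove weaker statements (the $2\varepsilon\delta_{\ast}$ rates with $\varepsilon$-dependent constants, say), but not Corollary \ref{corollary AA'}.
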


\begin{corollary}
\label{Corollary stab} For $x\in \lbrack 0,+\infty )$ there exists $c>0$
independent of $\varepsilon $ small enough, such that for the heteroclinic
curve%
\begin{eqnarray*}
|A_{\ast }^{(m)}(x)| &\leq &c\varepsilon ^{2/5}e^{-\varepsilon ^{1/5}\delta
_{\ast }x},\text{ }m=0,1,2,3, \\
|B_{\ast }(x)-1| &\leq &ce^{-\sqrt{2}\varepsilon x},\text{ }|B_{\ast
}^{\prime }(x)|\leq c\varepsilon e^{-\sqrt{2}\varepsilon x}.
\end{eqnarray*}
\end{corollary}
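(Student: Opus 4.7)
The plan is to combine the asymptotic decay rates already supplied by Theorem \ref{theorem} with a singular-perturbation analysis of the ``jump'' region where the slow manifold of the fast $A$-equation collapses to $A=0$. The bounds on $B_{\ast}$ are almost immediate: Theorem \ref{theorem} gives $|B_{\ast}(x)-1|=O(e^{-\sqrt{2}\varepsilon x})$ and $|B_{\ast}'(x)|=O(\varepsilon e^{-\sqrt{2}\varepsilon x})$ as $x\to+\infty$, and both $B_{\ast}-1$ and $B_{\ast}'/\varepsilon$ are analytic and bounded on any finite initial segment, so a compactness argument produces a constant $c$ independent of $\varepsilon$ valid on $[0,+\infty)$. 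All of the work is therefore on the $A_{\ast}$ side.

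The key heuristic is the two-scale structure of \eqref{reduced system}: the $A$-equation is $O(1)$ in $x$ whereas the $B$-equation is slow, of scale $1/\varepsilon$. Hence $A_{\ast}$ essentially tracks the slow equilibrium $A_{\mathrm{eq}}(B)=\sqrt{\max(1-gB^{2},0)}$ until $B_{\ast}$ crosses $1/\sqrt{g}$, after which it must collapse. In the transition region, balancing $A^{(4)}\sim-(gB_{\ast}^{2}-1)A$ against the $\varepsilon$-slow evolution of $B_{\ast}$ fixes the distinguished scales $1-gB_{\ast}^{2}\sim\varepsilon^{4/5}$, so $A_{\ast}\sim\varepsilon^{2/5}$ with local decay rate of order $\varepsilon^{1/5}$, which is precisely what the corollary asserts. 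I would make this rigorous in two steps. First, using the monotonicity of $B_{\ast}$ from Theorem \ref{theorem}, define $x_{1}\geq 0$ by $gB_{\ast}(x_{1})^{2}=1-c_{1}\varepsilon^{4/5}$ and show by slow-manifold tracking that $|A_{\ast}(x)|\leq C\varepsilon^{2/5}$ on $[0,x_{1}]$; the supremum of $A_{\mathrm{eq}}(B_{\ast})$ on this interval is $\sqrt{c_{1}}\,\varepsilon^{2/5}$, and the error between $A_{\ast}$ and $A_{\mathrm{eq}}$ is of lower order in $\varepsilon$ because $B_{\ast}$ moves on the slow scale. Second, for $x\geq x_{1}$, linearize around $A=0$: the characteristic roots of $A^{(4)}+(gB_{\ast}^{2}-1)A=0$ have real parts of absolute value at least $\varepsilon^{1/5}\delta_{\ast}$, and a weighted estimate in the norm $\sup_{x\geq x_{1}}e^{\varepsilon^{1/5}\delta_{\ast}x}|A_{\ast}(x)|$ propagates the initial value $|A_{\ast}(x_{1})|\lesssim\varepsilon^{2/5}$ along the whole half-line.

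The main obstacle is the nonlinear closure in the second step, because the coefficient $gB_{\ast}^{2}-1$ is only of size $\varepsilon^{4/5}$ in the inner part of $[x_{1},\infty)$, so the cubic source $A_{\ast}^{3}$ is not automatically dominated. I would resolve this by a further splitting: on a middle interval $[x_{1},x_{2}]$ with $x_{2}$ of order $1/\varepsilon$ the smallness of $A_{\ast}$ itself (propagated from Step 1) keeps $A_{\ast}^{3}\ll\varepsilon^{4/5}|A_{\ast}|$; on $[x_{2},+\infty)$ one has $gB_{\ast}^{2}-1=(g-1)+O(\varepsilon)$, and one simply invokes the fast asymptotic rate $\sqrt{\delta/2}$ from Theorem \ref{theorem}, which is much stronger than the stated rate $\varepsilon^{1/5}\delta_{\ast}$ for $\varepsilon$ small. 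The derivative bounds $m=1,2,3$ then follow by a bootstrap: reading $A_{\ast}^{(4)}=A_{\ast}(1-A_{\ast}^{2}-gB_{\ast}^{2})$ as an expression for $A_{\ast}^{(4)}$ in terms of $A_{\ast}$ and $B_{\ast}$, and using standard interpolation on exponentially weighted spaces, one trades each successive derivative against the already controlled quantities at the cost of a fixed constant.
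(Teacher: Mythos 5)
Your scaling heuristic for the transition layer ($1-gB_{\ast}^{2}\sim\varepsilon^{4/5}$, $A_{\ast}\sim\varepsilon^{2/5}$, local rate $\varepsilon^{1/5}$) correctly explains \emph{why} the exponents in Corollary \ref{Corollary stab} are what they are, but the proof as proposed has a genuine gap, and it is worth noting that the paper itself does not prove this corollary internally: it states that the estimates ``follow immediately from the proof'' of Theorem \ref{theorem} in \cite{Io23}, i.e.\ from the explicit construction of the stable manifold of $M_{+}$ inside the level set $\mathcal{W}_{g}=0$ of the first integral (\ref{Wg}). The central gap is a normalization issue. System (\ref{reduced system}) is autonomous, so the heteroclinic is determined only up to translation in $x$; the assertion $|A_{\ast}(0)|\leq c\varepsilon^{2/5}$ is therefore not a consequence of the statement of Theorem \ref{theorem} (which says nothing about where $x=0$ sits) but encodes the particular normalization fixed in the construction of \cite{Io23}. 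Your Step 1 is meant to supply this bound, but the monotonicity runs the wrong way: since $B_{\ast}$ is increasing and you define $x_{1}$ by $gB_{\ast}(x_{1})^{2}=1-c_{1}\varepsilon^{4/5}$, on $[0,x_{1}]$ one has $1-gB_{\ast}(x)^{2}\geq c_{1}\varepsilon^{4/5}$, so $\sqrt{c_{1}}\,\varepsilon^{2/5}$ is the \emph{infimum} of $A_{\mathrm{eq}}(B_{\ast})$ there, attained at $x_{1}$; the supremum is $\sqrt{1-gB_{\ast}(0)^{2}}$, which is $O(\varepsilon^{2/5})$ only if you already know that $B_{\ast}(0)$ lies within $O(\varepsilon^{4/5})$ of $1/\sqrt{g}$ --- which is precisely the normalization you would need to import from the construction. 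Without it, slow-manifold tracking on $[0,x_{1}]$ gives $A_{\ast}(0)=O(1)$ and the corollary fails for a translated copy of the same orbit.

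Two secondary points. First, at your $x_{1}$ the linearization $A^{(4)}=(1-gB_{\ast}^{2})A$ still has $1-gB_{\ast}^{2}=c_{1}\varepsilon^{4/5}>0$, whose characteristic roots are $\pm\mu^{1/4}$ and $\pm i\mu^{1/4}$: one root is positive and two are purely imaginary, so the claim that all real parts have absolute value at least $\varepsilon^{1/5}\delta_{\ast}$ on $[x_{1},\infty)$ is false near $x_{1}$; you need $gB_{\ast}^{2}-1\geq c_{1}\varepsilon^{4/5}$, and in the genuinely neutral zone $|1-gB_{\ast}^{2}|\lesssim\varepsilon^{4/5}$ the decay must come from the nonlinearity, which your sketch does not close. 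Second, for the $B_{\ast}$ bounds, Theorem \ref{theorem} only provides an asymptotic rate with a priori $\varepsilon$-dependent constants, and a ``compactness argument'' in $\varepsilon$ is not available since the orbits and the onset of the asymptotic regime both move with $\varepsilon$; the uniform constant in \cite{Io23} comes instead from the identity $\mathcal{W}_{g}=0$, which bounds $|B_{\ast}^{\prime}|$ pointwise by $\tfrac{\varepsilon}{\sqrt{2}}|A_{\ast}^{2}+B_{\ast}^{2}-1|$ plus controlled $A_{\ast}$-terms and then yields the $e^{-\sqrt{2}\varepsilon x}$ bound by integration. In short, the estimates of the corollary are extracted from the construction of the invariant manifolds and the first integral, not re-derived from the qualitative statement of Theorem \ref{theorem}.
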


The above result is obtained in \cite{Io23} as follows: for system (\ref%
{reduced system}), from the equilibrium $M_{-}=(1,0)$ originates a
3-dimensional unstable invariant manifold and from the equilibrium $%
M_{+}=(0,1)$ originates a 3-dimensional stable invariant manifold. Both
manifolds lie on a 5 dimensional invariant manifold given by $\mathcal{W}%
_{g}=0$ where $\mathcal{W}_{g}$ is the first integral of (\ref{reduced
system}):%
\begin{equation}
\mathcal{W}_{g}=\varepsilon ^{2}(A_{0}^{\prime 2})^{\prime \prime
}-3\varepsilon ^{2}A_{0}^{\prime \prime 2}-|B_{0}^{\prime }|^{2}+\frac{%
\varepsilon ^{2}}{2}(A_{0}^{2}+|B_{0}|^{2}-1)^{2}+\varepsilon
^{2}(g-1)A_{0}^{2}|B_{0}|^{2}  \label{Wg}
\end{equation}
(this integral was known in \cite{Man-Pom}). The delicate point is then to
prove analytically that the two manifolds exist until they intersect
transversally, giving as a result the heteroclinic curve connecting $M_{-}$
to $M_{+}.$ The estimates in Corollaries above follow immediately from the
proof.

For the 8-dimensional perturbed system (\ref{reduced syst a}) we prove the
following :

\begin{theorem}
\label{theor walls} Except for a finite number of values of $g=1+\delta ^{2}$
and for $\varepsilon $ small enough, such that Theorem \ref{theorem}
applies, the heteroclinic solution connecting an equilibrium at $-\infty $
(representing convective rolls parallel to $x$ - axis and symmetric in
coordinate $y$) and a periodic solution at $+\infty $ (representing
convective rolls orthogonal to the previous ones, parallel to the wall),
exists as a family of orthogonal domain walls. Denoting by $\varepsilon ^{2}$
the amplitude of rolls at infinities, the wave number of rolls orthogonal to
the wall (resp. parallel to the wall) being $k_{c}(1+\varepsilon ^{2}k_{-})$
(resp. $k_{c}(1+\varepsilon ^{2}k_{+})),$ where $k_{c}$ is the critical wave
number, the result is the following: $k_{+}$ and $k_{-}$ are functions of $%
\varepsilon $ and of a parameter $\varphi ,$ such that%
\begin{eqnarray*}
|k_{+}(\varepsilon ,\varphi )| &\leq &c\varepsilon ^{2},\text{ } \\
k_{-}(\varepsilon ,\varphi ) &=&\mp \gamma _{1}\varepsilon ^{1+2/5}\exp
(-\varphi )+\mathcal{O}(\varepsilon ^{1+3/5}),\text{ with }\exp |\varphi
|\leq \varepsilon ^{-2/5}.
\end{eqnarray*}%
The parameter $\varphi $ is linked to the "shift" $z$ of rolls parallel to
the wall in such a way that%
\begin{equation*}
z=\gamma _{2}\varepsilon ^{1+2/5}(\exp \varphi \mp \exp (-\varphi ))+%
\mathcal{O}(\varepsilon ^{1+2/5}),
\end{equation*}%
where the numbers $\gamma _{1},\gamma _{2}$, the choice of $\pm $ in $z,$
and $k_{-}$ and the possibility to obtain $k_{-}=k_{+}$ only depend on $g$
and on the cubic coefficient $(d_{2}-d_{4})$ in the normal form found in 
\cite{BHI} (see Appendix \ref{App1} , (\ref{f0hat})), all being functions of
the Prandtl number.
\end{theorem}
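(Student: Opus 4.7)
I would realize the desired heteroclinic as a reversible perturbation of the curve $(A_\ast,B_\ast)$ provided by Theorem \ref{theorem}. The ansatz is
\[
A_0(x) = A_\ast(x) + a(x), \qquad B_0(x) = \bigl(B_\ast(x)+b(x)\bigr)\,e^{i\psi(x)},
\]
with $a$ real, $b$ complex, and $\psi$ a real phase whose limit at $+\infty$ will be identified with the shift parameter $\varphi$. Substituting in \eqref{reduced syst a} and subtracting the equations satisfied by $(A_\ast,B_\ast)$ yields a problem
\[
\mathcal{L}_\ast(a,b) = \mathcal{N}(a,b) + \mathcal{P}(k_-,\psi,\varepsilon),
\]
where $\mathcal{L}_\ast$ is the linearization of \eqref{reduced system} along $(A_\ast,B_\ast)$ extended to complex $B_0$, $\mathcal{N}$ collects quadratic and higher corrections in $(a,b)$, and $\mathcal{P}$ gathers the explicit forcing $k_-A_0''$ together with $\widehat{f}_0,\widehat{g}_0,\widehat{f}_1,\widehat{g}_1$ evaluated on the unperturbed orbit.

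I would set this problem in the weighted space $L^2_\eta(\mathbb{R})$ whose weight grows like $e^{\varepsilon\delta_\ast|x|}$ as $x\to-\infty$ and like $e^{\min(\varepsilon^{1/5}\delta_\ast,\sqrt{2}\varepsilon)x}$ as $x\to+\infty$, matching the slowest decay rates of Corollaries \ref{corollary AA'}--\ref{Corollary stab}. The normal-form part of \eqref{reduced syst a} is equivariant under the $U(1)$ action $B_0\mapsto e^{i\alpha}B_0$ (all its terms carry net degree one in $B_0$ once conjugates are counted), and $(A_\ast,B_\ast)$ is fixed by complex conjugation; hence the kernel of $\mathcal{L}_\ast$ on $L^2_\eta$ is two-dimensional, spanned by the translation mode $(A_\ast',B_\ast')$ and the phase mode $(0,iB_\ast)$. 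Transversality of stable and unstable manifolds within $\{\mathcal{W}_g=0\}$ (Theorem \ref{theorem}) together with the reversibility $S_1$ imply, by a Fredholm argument, that the cokernel is also two-dimensional, generated by pairings with $\nabla\mathcal{W}_g(A_\ast,B_\ast)$ and with $(0,iB_\ast)$.

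Using the estimates of Section \ref{sect: estimates} (which follow from Corollaries \ref{corollary AA'}--\ref{Corollary stab}), $\mathcal{P}$ is bounded in $L^2_\eta$ by $\mathcal{O}(|k_-|+\varepsilon^{1+2/5})$, with the non-autonomous $\widehat{f}_1,\widehat{g}_1$ gaining additional smallness through averaging of $e^{\pm ix/(2\varepsilon)}$ against the weight, while $\mathcal{N}$ is contracting on a small $L^2_\eta$-ball. A Lyapunov--Schmidt reduction then solves the range equation uniquely for $(a,b)$ as a smooth function of $(k_-,k_+,\varphi,\varepsilon)$ in the regime $\exp|\varphi|\leq\varepsilon^{-2/5}$, leaving a two-dimensional bifurcation equation. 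Its $\nabla\mathcal{W}_g$-component, evaluated through the explicit integrals involving the cubic coefficient $(d_2-d_4)$ from $\widehat{f}_0$ and the exponentially small end-point contributions $e^{\pm\varphi}$ coming from the asymptotic periodic orbit at $+\infty$, yields $k_-(\varepsilon,\varphi)$ with the stated asymptotics provided the scalar coefficient in front of $k_-$ is non-zero, which fails only at finitely many values of $\delta$. Its phase component, after identifying $\psi(+\infty)$ with the physical shift $z$ along the family of asymptotic periodic solutions, produces the stated formula for $z$; the wave-number $k_+=\mathcal{O}(\varepsilon^2)$ is fixed by the 1:1 resonance parameterization of the periodic orbit at $+\infty$ given in Appendix \ref{App2}.

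\textbf{Main obstacle.} The decisive technical difficulty is the uniform-in-$\varepsilon$ invertibility of $\mathcal{L}_\ast$ on the co-kernel complement in $L^2_\eta$: the exponential dichotomy rates at both ends, $\varepsilon\delta$ at $-\infty$ and $\varepsilon^{1/5}\delta_\ast$ at $+\infty$, are themselves small in $\varepsilon$, so the resolvent norm can a priori blow up, and one must show that $\mathcal{P}$ is genuinely subordinate to $\mathcal{L}_\ast$ in this weighted topology. Equally delicate is the identification of $\gamma_1$, $\gamma_2$ and the signs in the bifurcation equation: one must match the inner ($\mathcal{O}(1)$) and outer ($\mathcal{O}(1/\varepsilon)$) asymptotic regimes of $(A_\ast,B_\ast)$, and the non-integer exponents $\varepsilon^{1+2/5}$ and $\varepsilon^{1+1/5}$ arise precisely as the products of these mixed scales forced by the sharp estimates of Corollaries \ref{corollary AA'}--\ref{Corollary stab}.
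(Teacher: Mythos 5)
Your overall strategy (perturb $(A_{\ast},B_{\ast})$, work in the weighted space $L_{\eta }^{2}$, run a Lyapunov--Schmidt reduction using the estimates of Corollaries \ref{corollary AA'}--\ref{Corollary stab}) is indeed the paper's, but your functional-analytic bookkeeping fails at a point the paper is quite careful about. The phase mode $(0,iB_{\ast})$ does \emph{not} belong to $L_{\eta }^{2}$, since $B_{\ast}\rightarrow 1$ at $+\infty $; consequently the kernel of the linearization in $L_{\eta }^{2}$ is one-dimensional (the translation mode $(A_{\ast}^{\prime },B_{\ast}^{\prime })$ only), and what one pays instead is that the imaginary-part equation, governed by $\mathcal{L}_{g}$, has trivial kernel but a range of codimension one, $L^{2}$-orthogonal to $B_{\ast}$ (Lemma \ref{Lem Linearoperator} and the Remark following it). The correct count is then two scalar compatibility conditions against the three scalar unknowns $z$, $k_{-}$ and $\widetilde{\omega }_{+}\sim k_{+}/2$, yielding a one-parameter family. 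In particular $k_{+}$ is \emph{not} ``fixed by the 1:1 resonance parameterization at $+\infty $'': it is determined by the $B_{\ast}$-orthogonality condition for (\ref{equ D0hat}), solved by the implicit function theorem, which is what gives $|k_{+}|\leq c\varepsilon ^{2}$. Your two-dimensional kernel/cokernel picture, with a cokernel functional $\nabla \mathcal{W}_{g}$, does not match the operators actually at hand (the range of $\mathcal{M}_{g}$ is orthogonal to $(A_{\ast}^{\prime },B_{\ast}^{\prime })$ itself), and with it the reduction does not close.

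The second, and more decisive, gap concerns the origin of the $\exp (\pm \varphi )$ asymptotics. You attribute them to ``exponentially small end-point contributions $e^{\pm \varphi }$ coming from the asymptotic periodic orbit at $+\infty $''; that is not the mechanism, and a proof organized around such matching would miss the actual structure. In the paper the single remaining bifurcation equation (\ref{bifurcationequation}) is, at leading order, a \emph{quadratic form} in $(z,k_{-})$, namely $a_{0}z^{2}+a_{1}^{\prime }k_{-}z+a_{2}^{\prime }k_{-}^{2}/4+a_{3}^{\prime }\varepsilon ^{2}k_{-}+a_{4}\varepsilon ^{2+1/5}z+a_{5}\varepsilon ^{2+4/5}=0$. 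The cross term $a_{1}^{\prime }k_{-}z$ comes from pairing $-k_{-}zA_{\ast}^{\prime \prime \prime }$ with $A_{\ast}^{\prime }$ (giving $a_{1}=\int_{\mathbb{R}}A_{\ast}^{\prime \prime 2}dx$, see (\ref{def a1})), and $a_{0}z^{2}$ from the genuinely quadratic terms of (\ref{phi00tildea}) evaluated on $\widehat{A_{0}}=zA_{\ast}^{\prime }+u$. The discriminant is positive, the solution set is a hyperbola, and $\varphi $ is simply its hyperbolic-angle parameter; this is the entire source of $\cosh \varphi $, $\sinh \varphi $, $e^{-\varphi }$, and of $\gamma _{1},\gamma _{2}$ via $a_{5}\varepsilon ^{4/5}=(d_{2}-d_{4})\int_{\mathbb{R}}A_{\ast}A_{\ast}^{\prime 3}dx$ and $a_{1}$. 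Your plan contains no mechanism for generating this quadratic coupling, so it cannot produce the stated formulas for $k_{-}$ and $z$. (Lesser points: the fractional exponents $\varepsilon ^{1+2/5}$, $\varepsilon ^{1+1/5}$ come from the sizes of these integrals as dictated by Corollaries \ref{corollary AA'}--\ref{Corollary stab}, not from an inner/outer matching; and no averaging of $e^{\pm ix/2\varepsilon }$ is needed for $\widehat{f_{1}},\widehat{g_{1}}$ --- the crude $\varepsilon ^{4}$ and $\varepsilon ^{6}$ bounds suffice.)
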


\begin{remark}
Numbers $\gamma _{1}$ and $\gamma _{2}$ are given by%
\begin{equation*}
\gamma _{1}=2\sqrt{\frac{2|a_{5}|}{3}},\text{ }\gamma _{2}=\frac{\varepsilon^{1/5}}{2a_{1}}%
\sqrt{\frac{3|a_{5}|}{2}},
\end{equation*}%
where $a_{1}$ and $a_{5}$ are defined by (see Corollaries \ref{corollary AA'}%
, and \ref{Corollary stab} for the size of integrals)%
\begin{eqnarray*}
a_{1} &=&\int_{\mathbb{R}
}A_{\ast }^{\prime \prime 2}dx=\mathcal{O}(\varepsilon^{1/5}),\text{ \ }\mp =-sgn(a_{5}), \\
a_{5}\varepsilon ^{4/5} &=&(d_{2}-d_{4})\int_{%
%TCIMACRO{\U{211d} }%
%BeginExpansion
\mathbb{R}
%EndExpansion
}A_{\ast }A_{\ast }^{\prime 3}dx.
\end{eqnarray*}%
Moreover, the equality $k_{+}=k_{-}=\mathcal{O}(\varepsilon ^{2})$ is
possible (for a suitable choice of $\varphi )$ if%
\begin{equation*}
d_{2}-d_{4}<0.
\end{equation*}
\end{remark}

\begin{remark}
The "shift" $z$ of rolls parallel to the wall is not a true shift $%
x\rightarrow x+z$, since $z$ influences non trivially the phase of $B$ (see
the function $w$ in (\ref{new var}) and Remark \ref{rmk w} giving the
principal part of $w$ proportionalto the cubic coefficient $c_{9}$ of the
normal form (\ref{g0hat})).
\end{remark}

\begin{remark}
The above family of solutions is invariant under the change $y\rightarrow -y$%
. The whole family may be shifted in $y$ direction, because of the
equivariance of the initial system under these shifts. The basic
heteroclinic solution for the truncated system (\ref{reduced system}) is
with a real amplitude $B,$ corresponding to a fixed position of rolls
parallel to the wall. After the rescaling, a "shift" $x\rightarrow x+z$
corresponds to a "shift" of order $z/\varepsilon $ of the original
coordinate. Choosing the parameter $\varphi $ such that $\exp |\varphi |=%
\mathcal{O}(\varepsilon ^{-1/5}),$which is allowed, we may obtain $z$ of
order $\varepsilon ,$ hence a significant "shift" (of order 1 in physical
space) for the rolls parallel to the wall.
\end{remark}

\begin{remark}
The wave numbers of the sets of rolls at $-\infty $ and at $+\infty $ differ
in general. This is a major difference with the symmetric case (of non
orthogonal walls) treated in \cite{HI Arma} and \cite{HI21b}.
\end{remark}

\begin{remark}
We might try to incorporate the 3 terms corresponding to coefficients $%
d_{2},d_{4}$ and $c_{9}$ of (\ref{f0hat}), (\ref{g0hat}) into a new first
integral as $\mathcal{W}_{g}$ (\ref{Wg}) (now with a complex $B_{0})$,
expecting to help in finding better estimates of the perturbed heteroclinic.
In fact, we cannot find such an integral, except if $d_{2}-d_{4}=c_{9}=0.$
This is indeed coherent with the necessity to look for different wave
numbers at infinities, as done in the present work.
\end{remark}

\begin{remark}
The coefficient $g=1+\delta ^{2}$ is function of the Prandtl number $%
\mathcal{P}$ and is the same as introduced and computed in (\cite{HI Arma}).
Values of $\delta $ such that $0.476\leq \delta $ include values obtained
for $\delta $ in the B\'{e}nard-Rayleigh convection problem. With
rigid-rigid, rigid-free, or free-free boundaries the minimum values of $g$
are respectively $(g_{\min }=1.227,$ $1.332,$ $1.423)$ corresponding to $%
\delta _{\min }=0.476,$ $0.576,$ $0.650.$ The restriction in Theorem \ref%
{theorem} corresponds to $1<g\leq 2.$ The eligible values for the Prandtl
number are respectively $\mathcal{P}>0.5308,>0.6222,>0.8078$. 
\begin{figure}[th]
\begin{center}
\includegraphics[width=4cm]{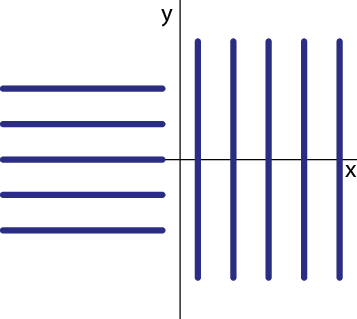}
\end{center}
\caption{Orthogonal domain wall}
\label{fig-wall}
\end{figure}
\end{remark}

\begin{remark}
Our method may be used for other physical problem displaying analogue
patterns, such as, for example at a fluid-ferro-fluid interface, as studied
in the symmetric case ("corner defect") by J.Horn in \cite{Horn}. More
generally, any physical problem leading to a normal form such as (\ref{full
syst}) (see Appendix \ref{App1}) introduces the 4 important coefficients $%
(g,d_{2},d_{4},c_{9})$ of the cubic normal form, and should, after
validation of the reduction, lead to a Theorem such as Theorem \ref{theor
walls}.
\end{remark}

\section{Setting of the perturbed system \label{sect:setting}}

\subsection{Solutions at infinities}

Since we leave now some freedom to the wave numbers, as well in the $y$
direction, as in the $x$ direction, the "end points" of the expected
heteroclinic are no longer $(1,0)$ at $-\infty ,$ and the circle $%
(0,e^{i\phi })$ at $+\infty .$ In fact the classical study of steady
convective rolls, shows that these should be respectively $(A_{0}^{(-\infty
)}(k_{-}),B_{0}^{(-\infty )}(k_{-}))$ and $(0,B_{0}^{(+\infty )}(\omega ,x))$
(see \cite{HIbook} section 4.3.3, or \cite{HI Arma} sections 2 and 6.2).
From Appendix \ref{App3} for the equilibrium at $-\infty $, we have 
\begin{eqnarray*}
(A_{0}^{(-\infty )})^{2} &=&1-\frac{k_{-}^{2}}{4}+\mathcal{\sigma }%
_{0}\varepsilon ^{2}k_{-}+\mathcal{O}(\varepsilon
^{2}|k_{-}|^{3}+\varepsilon ^{4}), \\
1-(A_{0}^{(-\infty )})\overset{def}{=}-\frac{\widetilde{\omega }_{-}^{2}}{2},%
\text{ with }\widetilde{\omega }_{-}^{2} &=&\frac{k_{-}^{2}}{4}-\mathcal{%
\sigma }_{0}\varepsilon ^{2}k_{-}+\mathcal{O}[k_{-}^{4}+\varepsilon
^{2}|k_{-}|^{3}+\varepsilon ^{4}], \\
B_{0}^{(-\infty )} &=&\mathcal{O}(\varepsilon ^{6}).
\end{eqnarray*}%
From Appendix \ref{App2} for the periodic solutions at $+\infty ,$ we have%
\begin{eqnarray*}
e^{i\frac{x}{2\varepsilon }}B_{0}^{(+\infty )}(\omega ,x) &=&r_{0}e^{i\omega
x}+\mathcal{O}(\varepsilon ^{6}),\text{ }A_{0}^{(+\infty )}=0, \\
\omega \overset{def}{=}\frac{1}{2\varepsilon }+\varepsilon \widetilde{\omega 
}_{+} &=&\frac{1+\varepsilon ^{2}k_{+}}{2\varepsilon }+\mathcal{O}%
(\varepsilon ^{7}),
\end{eqnarray*}%
\begin{equation*}
B_{0}^{(+\infty )}e^{-i\varepsilon \widetilde{\omega }_{+}x}=C_{0}^{(+\infty
)}+iD_{0}^{(+\infty )}
\end{equation*}%
\begin{eqnarray*}
r_{0}^{2} &=&1-\frac{k_{+}^{2}}{4}+\mathcal{O}(\varepsilon
^{2}|k_{+}|+\varepsilon ^{4})=1-\mathcal{O[}(\widetilde{|\omega }%
_{+}|+\varepsilon ^{2})^{2}], \\
C_{0}^{(+\infty )} &=&r_{0}+\mathcal{O}(\varepsilon ^{6}),\text{ oscil. part}%
(C_{0}^{(+\infty )})=\mathcal{O}(\varepsilon ^{6}), \\
D_{0}^{(+\infty )} &=&\mathcal{O}(\varepsilon ^{6}).
\end{eqnarray*}

\begin{remark}
The coefficient $\sigma _{0}$ introduced in the expression of $%
(A_{0}^{(-\infty )})^{2}$ depends on the Prandtl number.
\end{remark}

\begin{remark}
We may notice that in case the system has the symmetry $S_{0}$ representing $%
z\mapsto 1-z$ (OK for rigid-rigid, or free-free boundary conditions), then $%
B_{0}^{(-\infty )}=0,$ which simplifies computations (see Appendix \ref{App3}%
).
\end{remark}

\subsection{First change of variable}

Let us set%
\begin{equation*}
B_{0}e^{-i\varepsilon \widetilde{\omega }_{+}x}=C_{0}+iD_{0},
\end{equation*}%
then (\ref{reduced syst a}) becomes%
\begin{equation}
A_{0}^{(4)}=k_{-}A_{0}^{\prime \prime }+A_{0}[1-\frac{k_{-}^{2}}{4}%
-A_{0}^{2}-g(C_{0}^{2}+D_{0}^{2})]+f  \label{new basic1}
\end{equation}%
\begin{eqnarray}
C_{0}^{\prime \prime } &=&2\varepsilon \widetilde{\omega }_{+}D_{0}^{\prime
}+\varepsilon ^{2}C_{0}(-1+\widetilde{\omega }%
_{+}^{2}+gA_{0}^{2}+C_{0}^{2}+D_{0}^{2})+g_{r}  \label{new basic2} \\
D_{0}^{\prime \prime } &=&-2\varepsilon \widetilde{\omega }_{+}C_{0}^{\prime
}+\varepsilon ^{2}D_{0}(-1+\widetilde{\omega }%
_{+}^{2}+gA_{0}^{2}+C_{0}^{2}+D_{0}^{2})+g_{i}  \notag
\end{eqnarray}%
with 
\begin{equation*}
f=\widehat{f},\text{ \ }g_{r}+ig_{i}=\widehat{g}e^{-i\varepsilon \widetilde{%
\omega }_{+}x},
\end{equation*}%
and where the exponential factor disappears in the cubic part when we
replace $B_{0}$ by $(C_{0}+iD_{0})e^{i\varepsilon \widetilde{\omega }_{+}x}.$
Let us define

\begin{eqnarray*}
f &=&f_{0}(\varepsilon ,k_{-},X,Y,\overline{Y})+f_{1}(\omega x,\varepsilon
,k_{-},X,Y,\overline{Y}) \\
g_{r} &=&g_{r0}(\varepsilon ,X,Y,\overline{Y})+g_{r1}(\omega x,\varepsilon
,k_{-},X,Y,\overline{Y}) \\
g_{i} &=&g_{i0}(\varepsilon ,X,Y,\overline{Y})+g_{i1}(\omega x,\varepsilon
,k_{-},X,Y,\overline{Y}),
\end{eqnarray*}%
where $f_{0},g_{r0},g_{i0}$ come only from cubic terms of the normal form in
(\ref{reduced syst a}), and where $f_{1},g_{r1},g_{i1}$ are $2\pi -$periodic
in $\omega x$, smooth in their arguments, and satisfy estimates%
\begin{eqnarray*}
|f_{1}(\omega x,\varepsilon ,k_{-},X,Y,\overline{Y})| &\leq &c\varepsilon
^{4}|X|(|X|^{2}+|Y|^{2})^{2} \\
|g_{r1}(\omega x,\varepsilon ,k_{-},X,Y,\overline{Y})|+|g_{i1}(\omega
x,\varepsilon ,k_{-},X,Y,\overline{Y})| &\leq &c\varepsilon
^{6}(|X|^{2}+|Y|)(|X|^{2}+|Y|^{2})^{2},
\end{eqnarray*}%
with%
\begin{eqnarray*}
X &=&(A_{0},A_{0}^{\prime },A_{0}^{\prime \prime },A_{0}^{\prime \prime
\prime }) \\
Y &=&(C_{0}+iD_{0},C_{0}^{\prime }+iD_{0}^{\prime }).
\end{eqnarray*}%
Then we have from (\ref{f0hat}), (\ref{g0hat}):%
\begin{eqnarray}
f_{0} &=&d_{1}\varepsilon A_{0}(C_{0}D_{0}^{\prime }-D_{0}C_{0}^{\prime
})+\sigma _{0}\varepsilon ^{2}k_{-}A_{0}^{3}+d_{2}\varepsilon
^{2}A_{0}A_{0}^{\prime 2}+d_{3}\varepsilon ^{2}A_{0}^{\prime \prime }
\label{f0express1} \\
&&+d_{4}\varepsilon ^{2}A_{0}^{2}A_{0}^{\prime \prime }+d_{5}\varepsilon
^{2}A_{0}^{\prime \prime }(C_{0}^{2}+D_{0}^{2})+d_{6}\varepsilon
^{2}A_{0}(C_{0}^{\prime 2}+D_{0}^{\prime 2})+  \notag \\
&&+d_{7}\varepsilon ^{2}A_{0}^{\prime }(C_{0}C_{0}^{\prime
}+D_{0}D_{0}^{\prime })+d_{8}\varepsilon ^{3}A_{0}^{\prime \prime
}(C_{0}D_{0}^{\prime }-D_{0}C_{0}^{\prime })+\mathcal{O}(\varepsilon ^{4}), 
\notag
\end{eqnarray}%
\begin{eqnarray}
g_{r0}+ig_{i0} &=&i\varepsilon ^{3}(C_{0}^{\prime }+iD_{0}^{\prime
})[c_{0}+c_{1}A_{0}^{2}+c_{2}(C_{0}^{2}+D_{0}^{2})]  \label{gr0express} \\
&&+\varepsilon ^{3}c_{3}(C_{0}+iD_{0})(C_{0}D_{0}^{\prime
}-D_{0}C_{0}^{\prime })+i\varepsilon
^{3}c_{9}(C_{0}+iD_{0})A_{0}A_{0}^{\prime }  \notag \\
&&+\varepsilon ^{4}c_{4}(C_{0}^{\prime }+iD_{0}^{\prime
})(C_{0}D_{0}^{\prime }-D_{0}C_{0}^{\prime })+c_{5}\varepsilon
^{4}A_{0}A_{0}^{\prime \prime }(C_{0}+iD_{0})  \notag \\
&&+\varepsilon ^{4}[c_{6}A_{0}^{\prime
2}(C_{0}+iD_{0})+c_{7}A_{0}A_{0}^{\prime }(C_{0}^{\prime }+iD_{0}^{\prime })]
\notag \\
&&+i\varepsilon ^{5}(C_{0}^{\prime }+iD_{0}^{\prime
})(c_{7}A_{0}A_{0}^{\prime \prime }+c_{10}A_{0}^{\prime 2})  \notag \\
&&+i\varepsilon ^{5}(C_{0}+iD_{0})(c_{8}A_{0}A_{0}^{\prime \prime \prime
}+c_{11}A_{0}^{\prime }A_{0}^{\prime \prime })+\mathcal{O}(\varepsilon ^{6}).
\notag
\end{eqnarray}

Now, let us set a first change of variables

\begin{eqnarray}
A_{0} &=&A_{\ast }+\widetilde{A_{0}}  \notag \\
C_{0} &=&B_{\ast }+\widetilde{C_{0}}  \label{firstvarchange} \\
D_{0} &=&\widetilde{D_{0}}  \notag
\end{eqnarray}%
where we observe that we expect 
\begin{eqnarray*}
\widetilde{A_{0}}\underset{x=-\infty }{\rightarrow }A_{0}^{(-\infty )}-1 &=&-%
\frac{\widetilde{\omega }_{-}^{2}}{2},\text{ } \\
C_{0}+iD_{0}\underset{x=-\infty }{\rightarrow }C_{0}^{(-\infty )}
&=&B_{0}^{(-\infty )}=\mathcal{O}(\varepsilon ^{6}), \\
\widetilde{C_{0}}+i\widetilde{D_{0}}\underset{x=+\infty }{\rightarrow }%
C_{0}^{(+\infty )}+iD_{0}^{(+\infty )}-1 &\sim &-\frac{(\widetilde{\omega }%
_{+}+\mathcal{O}(\varepsilon ^{2}))^{2}}{2}.
\end{eqnarray*}%
Then (\ref{new basic1},\ref{new basic2}) becomes the "perturbed system"%
\begin{equation}
\mathcal{M}_{g}(\widetilde{A_{0}},\widetilde{C_{0}})=\binom{-k_{-}(A_{\ast
}^{\prime \prime }+\widetilde{A_{0}}^{\prime \prime })+\frac{k_{-}^{2}}{4}%
(A_{\ast }+\widetilde{A_{0}})+\widetilde{\phi _{0}}}{\frac{2\widetilde{%
\omega }_{+}}{\varepsilon }\widetilde{D_{0}}^{\prime }+\widetilde{\omega }%
_{+}^{2}(B_{\ast }+\widetilde{C_{0}})+\widetilde{\psi _{0r}}},
\label{perturbSysta}
\end{equation}%
\begin{equation}
\mathcal{L}_{g}\widetilde{D_{0}}=-\frac{2\widetilde{\omega }_{+}}{%
\varepsilon }(B_{\ast }^{\prime }+\widetilde{C_{0}}^{\prime })+\widetilde{%
\omega }_{+}^{2}\widetilde{D_{0}}+\widetilde{\psi _{0i}},
\label{perturbSystb}
\end{equation}%
where linear operators $\mathcal{M}_{g}$ and $\mathcal{L}_{g}$ are defined as%
\begin{equation}
\mathcal{M}_{g}\left( 
\begin{array}{c}
A \\ 
C%
\end{array}%
\right) =\left( 
\begin{array}{c}
-A^{(4)}+(1-3A_{\ast }^{2}-gB_{\ast }^{2})A-2gA_{\ast }B_{\ast }C \\ 
\frac{1}{\varepsilon ^{2}}C^{\prime \prime }+(1-gA_{\ast }^{2}-3B_{\ast
}^{2})C-2gA_{\ast }B_{\ast }A%
\end{array}%
\right) ,  \label{def Mg}
\end{equation}%
\begin{equation}
\mathcal{L}_{g}D=\frac{1}{\varepsilon ^{2}}D^{\prime \prime }+(1-gA_{\ast
}^{2}-B_{\ast }^{2})D,  \label{def Lg}
\end{equation}%
and where $\widetilde{\phi _{0}},\widetilde{\psi _{0r}},\widetilde{\psi _{0i}%
}$ are smooth functions of $(\omega x,\varepsilon ,k_{-},\widetilde{\omega }%
_{+},\widetilde{X},\widetilde{Y})$ where%
\begin{eqnarray*}
\widetilde{X} &=&(\widetilde{A_{0}},\widetilde{A_{0}}^{\prime },\widetilde{%
A_{0}}^{\prime \prime },\widetilde{A_{0}}^{\prime \prime \prime }) \\
\widetilde{Y} &=&(\widetilde{C_{0}},\widetilde{D_{0}},\widetilde{C_{0}}%
^{\prime },\widetilde{D_{0}}^{\prime })
\end{eqnarray*}%
\begin{eqnarray*}
\widetilde{\phi _{0}} &=&\widetilde{\phi _{00}}(\varepsilon ,k_{-},%
\widetilde{X},\widetilde{Y})+\widetilde{\phi _{01}}(\omega x,\varepsilon
,k_{-},\widetilde{X},\widetilde{Y}) \\
\widetilde{\psi _{0r}} &=&\widetilde{\psi _{0r0}}(\varepsilon ,k_{-},%
\widetilde{X},\widetilde{Y})+\widetilde{\psi _{0r1}}(\omega x,\varepsilon
,k_{-},\widetilde{X},\widetilde{Y}) \\
\widetilde{\psi _{0i}} &=&\widetilde{\psi _{0i0}}(\varepsilon ,k_{-},%
\widetilde{X},\widetilde{Y})+\widetilde{\psi _{0i1}}(\omega x,\varepsilon
,k_{-},\widetilde{X},\widetilde{Y})
\end{eqnarray*}%
\begin{eqnarray}
|\widetilde{\phi _{01}}(\omega x,\varepsilon ,k_{-},\widetilde{X},\widetilde{%
Y})| &\leq &c\varepsilon ^{4}  \label{estimrest nonnormal} \\
|\widetilde{\psi _{0r1}}(\omega x,\varepsilon ,k_{-},\widetilde{X},%
\widetilde{Y})|+|\widetilde{\psi _{0i1}}(\omega x,\varepsilon ,k_{-},%
\widetilde{X},\widetilde{Y})| &\leq &c\varepsilon ^{4}.  \notag
\end{eqnarray}%
More precisely, we have%
\begin{eqnarray}
\widetilde{\phi _{00}}(\varepsilon ,k_{-},\widetilde{X},\widetilde{Y})
&=&3A_{\ast }\widetilde{A_{0}}^{2}+\widetilde{A_{0}}^{3}+2gB_{\ast }%
\widetilde{A_{0}}\widetilde{C_{0}}  \label{phi00tildea} \\
&&+g(A_{\ast }+\widetilde{A_{0}})(\widetilde{C_{0}}^{2}+\widetilde{D_{0}}%
^{2})+f_{00},  \notag
\end{eqnarray}%
\begin{eqnarray}
\widetilde{\psi _{0r0}}(\varepsilon ,k_{-},\widetilde{X},\widetilde{Y})
&=&2gA_{\ast }\widetilde{A_{0}}\widetilde{C_{0}}+gB_{\ast }\widetilde{A_{0}}%
^{2}+2B_{\ast }\widetilde{C_{0}}^{2}+g\widetilde{A_{0}}^{2}\widetilde{C_{0}}
\label{psi0rtildea} \\
&&+(B_{\ast }+\widetilde{C_{0}})(\widetilde{C_{0}}^{2}+\widetilde{D_{0}}%
^{2})+g_{00r},  \notag
\end{eqnarray}%
\begin{eqnarray}
\widetilde{\psi _{0i0}}(\varepsilon ,k_{-},\widetilde{X},\widetilde{Y})
&=&2gA_{\ast }\widetilde{A_{0}}\widetilde{D_{0}}+2B_{\ast }\widetilde{C_{0}}%
\widetilde{D_{0}}+g\widetilde{A_{0}}^{2}\widetilde{D_{0}}  \notag \\
&&+\widetilde{D_{0}}(\widetilde{C_{0}}^{2}+\widetilde{D_{0}}^{2})+g_{00i},
\label{psi0itilde}
\end{eqnarray}%
and in using Theorem \ref{theorem}, Corollaries \ref{corollary AA'} and \ref%
{Corollary stab}, and assuming%
\begin{equation}
|\widetilde{X}|\leq 1,|\widetilde{Y}|\leq 1,|\widetilde{D_{0}}|\leq
\varepsilon ,  \label{basic estim1}
\end{equation}%
\begin{eqnarray*}
f_{00} &=&\sigma _{0}\varepsilon ^{2}k_{-}A_{\ast }^{3}+\mathcal{O}%
[\varepsilon ^{2+3/5}e^{\varepsilon \delta _{\ast }x}\chi _{(-\infty
,0)}+\varepsilon ^{2+2/5}e^{-\varepsilon ^{1/5}\delta _{\ast }x}\chi
_{(0,\infty )}+\varepsilon ^{2}(|\widetilde{X}|+|\widetilde{Y}|)+\varepsilon
|\widetilde{D_{0}}^{\prime }|], \\
g_{00r} &=&\mathcal{O}[\varepsilon ^{2+3/5}e^{\varepsilon \delta _{\ast
}x}\chi _{(-\infty ,0)}+\varepsilon ^{2+4/5}e^{-\varepsilon ^{1/5}\delta
_{\ast }x}\chi _{(0,\infty )}+\varepsilon ^{2}(|\widetilde{X}|+|\widetilde{Y}%
|)+\varepsilon |\widetilde{D_{0}}^{\prime }|], \\
g_{00i} &=&\mathcal{O}[\varepsilon ^{1+3/5}e^{\varepsilon \delta _{\ast
}x}\chi _{(-\infty ,0)}+\varepsilon ^{1+4/5}e^{-\varepsilon ^{1/5}\delta
_{\ast }x}\chi _{(0,\infty )}+\varepsilon (|\widetilde{X}|+|\widetilde{Y}%
|)+\varepsilon (|\widetilde{C_{0}}^{\prime }|],
\end{eqnarray*}%
where $f_{00}$ and $g_{00r}+ig_{00i}$ are smooth functions which come from
the rest of the cubic normal form written in (\ref{f0express1},\ref%
{gr0express})) and $\chi _{(-\infty ,0)}$ and $\chi _{(0,\infty )}$ are the
characteristic functions on the corresponding intervals.

\begin{remark}
\label{remark estim f00A'*}We notice that the estimates for the main terms
independent of $\widetilde{X},\widetilde{Y}$ come from%
\begin{eqnarray*}
\text{for }f_{00} &:&\text{ }\sigma _{0}\varepsilon ^{2}k_{-}A_{\ast
}^{3}+d_{2}\varepsilon ^{2}A_{\ast }A_{\ast }^{\prime 2}+d_{3}\varepsilon
^{2}A_{\ast }^{\prime \prime }+d_{4}\varepsilon ^{2}A_{\ast }^{2}A_{\ast
}^{\prime \prime }+d_{5}\varepsilon ^{2}A_{\ast }^{\prime \prime }B_{\ast
}^{2}, \\
\text{for }g_{00r} &:&\text{ }c_{5}\varepsilon ^{2}A_{\ast }A_{\ast
}^{\prime \prime }B_{\ast }+c_{6}\varepsilon ^{2}A_{\ast }^{\prime 2}B_{\ast
}+c_{7}\varepsilon ^{2}A_{\ast }A_{\ast }^{\prime }B_{\ast }^{\prime } \\
\text{for }g_{00i} &:&\text{ }\varepsilon B_{\ast }^{\prime
}(c_{0}+c_{1}A_{\ast }^{2}+c_{2}B_{\ast }^{2})+\varepsilon c_{9}B_{\ast
}A_{\ast }A_{\ast }^{\prime }.
\end{eqnarray*}%
Moreover, notice that, below, we need to compute $\int f_{00}A_{\ast
}^{\prime }dx,$ $\int g_{00r}B_{\ast }^{\prime }dx,$ $\int g_{00i}B_{\ast
}dx,$ which, for terms independent of $\widetilde{X},\widetilde{Y}$ leads to%
\begin{eqnarray*}
\text{for }\int f_{00}A_{\ast }^{\prime }dx &=&-\frac{\sigma _{0}\varepsilon
^{2}k_{-}}{4}+\varepsilon ^{2}\int (d_{2}A_{\ast }A_{\ast }^{\prime
3}+d_{4}A_{\ast }^{2}A_{\ast }^{\prime }A_{\ast }^{\prime \prime })dx+%
\mathcal{O}(\varepsilon ^{3}) \\
&=&-\frac{\sigma _{0}\varepsilon ^{2}k_{-}}{4}+\mathcal{O}(\varepsilon
^{2+4/5}), \\
\text{for }\int g_{00r}B_{\ast }^{\prime }dx &\sim &\varepsilon ^{2}\int_{%
%TCIMACRO{\U{211d} }%
%BeginExpansion
\mathbb{R}
%EndExpansion
}c_{5}A_{\ast }A_{\ast }^{\prime \prime }B_{\ast }B_{\ast }^{\prime
}dx+\varepsilon ^{2}\int_{%
%TCIMACRO{\U{211d} }%
%BeginExpansion
\mathbb{R}
%EndExpansion
}c_{6}A_{\ast }^{\prime 2}B_{\ast }B_{\ast }^{\prime }dx=\mathcal{O}%
(\varepsilon ^{3+1/5}), \\
\text{for }\int g_{00i}B_{\ast }dx &=&\varepsilon (\frac{c_{0}}{2}+\frac{%
c_{2}}{4})+\varepsilon (c_{1}-c_{9})\int_{%
%TCIMACRO{\U{211d} }%
%BeginExpansion
\mathbb{R}
%EndExpansion
}A_{\ast }^{2}B_{\ast }B_{\ast }^{\prime }=\mathcal{O}(\varepsilon ),
\end{eqnarray*}%
where we notice 
\begin{eqnarray*}
\int A_{\ast }^{\prime \prime }A_{\ast }^{\prime }dx &=&0,\text{ }%
\varepsilon ^{2}\int A_{\ast }^{\prime \prime }B_{\ast }^{2}A_{\ast
}^{\prime }dx=-\varepsilon ^{2}\int A_{\ast }^{\prime 2}B_{\ast }B_{\ast
}^{\prime }dx=\mathcal{O}(\varepsilon ^{3}), \\
\int (d_{2}A_{\ast }A_{\ast }^{\prime 3}+d_{4}A_{\ast }^{2}A_{\ast }^{\prime
}A_{\ast }^{\prime \prime })dx &=&(d_{2}-d_{4})\int A_{\ast }A_{\ast
}^{\prime 3}dx=\mathcal{O}(\varepsilon ^{1/2}), \\
\int_{%
%TCIMACRO{\U{211d} }%
%BeginExpansion
\mathbb{R}
%EndExpansion
}A_{\ast }A_{\ast }^{\prime \prime }B_{\ast }B_{\ast }^{\prime }dx &=&-\int_{%
%TCIMACRO{\U{211d} }%
%BeginExpansion
\mathbb{R}
%EndExpansion
}[A_{\ast }^{\prime 2}B_{\ast }B_{\ast }^{\prime }+A_{\ast }A_{\ast
}^{\prime }(B_{\ast }B_{\ast }^{\prime })^{\prime }]dx,
\end{eqnarray*}%
taking care of the convergence in $e^{\varepsilon \delta _{\ast }x}$ (resp $%
e^{-\varepsilon ^{1/5}\delta _{\ast }x})$ at $-\infty $ (resp at $+\infty ),$
which implies a division by $\varepsilon $ in the integral on $(-\infty ,0)$
(resp. by $\varepsilon ^{1/5}$ in the integral on $(0,+\infty )).$
\end{remark}

\subsection{Second change of variables}

Before solving the system we need to change variables so that the variables
and the right hand side of (\ref{perturbSysta},\ref{perturbSystb}) tend
towards $0$ at infinity. Let us denote 
\begin{eqnarray*}
\widetilde{X}^{(-\infty )} &=&(A_{0}^{(-\infty )}-1,0,0,0)=(\mathcal{O}(%
\widetilde{\omega }_{-}^{2}),0,0,0) \\
\widetilde{Y}^{(-\infty )} &=&(C_{0}^{(-\infty )},0,0,0)=(\mathcal{O}%
(\varepsilon ^{6}),0,0,0), \\
\widetilde{X}^{(+\infty )} &=&0 \\
\widetilde{Y}^{(+\infty )} &=&(C_{0}^{(+\infty )}-1,D_{0}^{(+\infty
)},C_{0}^{(+\infty )\prime },D_{0}^{(+\infty )\prime })=[\mathcal{O}((%
\widetilde{\omega }_{+}+\varepsilon ^{2})^{2}),\mathcal{O}(\varepsilon ^{6}),%
\mathcal{O}(\varepsilon ^{5}),\mathcal{O}(\varepsilon ^{5})],
\end{eqnarray*}%
then, taking care in (\ref{new basic1},\ref{new basic2}), of the forms of $f$%
, $g_{r},$ $g_{i},$ we notice that the limit terms in the right hand side of
(\ref{perturbSysta},\ref{perturbSystb}) as $x\rightarrow -\infty $ are%
\begin{eqnarray*}
&&\frac{k_{-}^{2}}{4}A_{0}^{(-\infty )}+\widetilde{\phi _{0}}(\omega
x,\varepsilon ,k_{-},\widetilde{X}^{(-\infty )},\widetilde{Y}^{(-\infty )})%
\text{ exp limit as }e^{\varepsilon \delta _{\ast }x}\text{ (see }f_{00}), \\
&&\widetilde{\omega }_{+}^{2}C_{0}^{(-\infty )}+\widetilde{\psi _{0r}}%
(\omega x,\varepsilon ,k_{-},\widetilde{X}^{(-\infty )},\widetilde{Y}%
^{(-\infty )})\text{ exp limit as }e^{\varepsilon \delta _{\ast }x}\text{
(see }g_{00r}) \\
&&\widetilde{\psi _{0i}}(\omega x,\varepsilon ,k_{-},\widetilde{X}^{(-\infty
)},\widetilde{Y}^{(-\infty )})\text{ exp limit as }e^{\varepsilon \delta
_{\ast }x}\text{ (as }B_{\ast }^{\prime }\text{ and see }g_{00i}).
\end{eqnarray*}%
The limit terms of the right hand side of (\ref{perturbSysta},\ref%
{perturbSystb}) as $x\rightarrow +\infty $ is%
\begin{eqnarray*}
&&0\text{ exp limit as }e^{-\varepsilon ^{1/5}\delta _{\ast }x}\text{ (as }%
A_{\ast }) \\
&&\frac{2\widetilde{\omega }_{+}}{\varepsilon }(D_{0}^{(+\infty )})^{\prime
}+\widetilde{\omega }_{+}^{2}C_{0}^{(+\infty )}+\widetilde{\psi _{0r}}%
(\omega x,\varepsilon ,k_{-},0,\widetilde{Y}^{(+\infty )})\text{ exp limit
as }e^{-\varepsilon ^{1/5}\delta _{\ast }x}\text{ (see }g_{00r}), \\
&&-\frac{2\widetilde{\omega }_{+}}{\varepsilon }(C_{0}^{(+\infty )})^{\prime
}+\widetilde{\omega }_{+}^{2}D_{0}^{(+\infty )}+\widetilde{\psi _{0i}}%
(\omega x,\varepsilon ,k_{-},0,\widetilde{Y}^{(+\infty )})\text{ exp limit
as }e^{-\varepsilon \sqrt{2}x}\text{ (see }g_{00i}).
\end{eqnarray*}

Let us make a second change of variables as%
\begin{eqnarray}
\widetilde{A_{0}} &=&\alpha _{-}\chi _{-}+\widehat{A_{0}}  \notag \\
\widetilde{C_{0}} &=&\beta _{-}\chi _{-}+\beta _{+}\chi _{+}+\widehat{C_{0}},
\label{changevariable} \\
\widetilde{D_{0}} &=&\gamma _{+}\chi _{+}+\widehat{D_{0}},  \notag
\end{eqnarray}%
with (in using Appendix \ref{App3} and (\ref{r0(k+,eps)}) in Appendix \ref%
{App2})%
\begin{eqnarray}
\alpha _{-} &=&(A_{0}^{(-\infty )}-1)=\mathcal{-}\widetilde{\omega }%
_{-}^{2}/2,\text{ }\beta _{-}=B_{0}^{(-\infty )},  \label{defalpha-} \\
\beta _{+} &=&(C_{0}^{(+\infty )}(\omega x)-1),\text{ }\gamma
_{+}=D_{0}^{(+\infty )}(\omega x),  \notag
\end{eqnarray}%
\begin{equation}
\text{const part of }\beta _{+}\overset{def}{=}\beta _{+}^{(c)}=-\frac{%
\widetilde{\omega }_{+}^{2}}{2}+\frac{\sigma _{1}\varepsilon ^{2}\widetilde{%
\omega }_{+}}{2}+\frac{\sigma _{2}\varepsilon ^{4}}{2}+\mathcal{O[}(|%
\widetilde{\omega }_{+}|+\varepsilon ^{2})^{4}],  \label{defbeta+c}
\end{equation}%
and where $\chi _{-}$ and $\chi _{+}$ are smooth functions, such that%
\begin{eqnarray*}
\chi _{-} &=&1\text{ for }x\in (-\infty ,-1), \\
&=&0\text{ for }x>0 \\
0 &<&\chi _{-}<1\text{ for }x\in (-1,0),
\end{eqnarray*}%
\begin{eqnarray*}
\chi _{+} &=&1\text{ for }x\in (1,\infty ), \\
&=&0\text{ for }x<0 \\
0 &<&\chi _{+}<1\text{ for }x\in (0,1),
\end{eqnarray*}%
such that 
\begin{equation*}
(\widehat{A_{0}},\widehat{C_{0}},\widehat{D_{0}})\rightarrow 0\text{ as }%
|x|\rightarrow \infty .
\end{equation*}

\subsection{Properties of linear operators $\mathcal{M}_{g}$ and $\mathcal{L}%
_{g}$ (defined in (\protect\ref{def Mg},\protect\ref{def Lg}))}

We now give a precise definition of the function spaces where we will solve
the problem with respect to $(\widehat{A_{0}},\widehat{C_{0}},\widehat{D_{0}}%
).$ Indeed, let us define the Hilbert spaces%
\begin{equation*}
L_{\eta }^{2}=\{u;u(x)e^{\eta |x|}\in L^{2}(%
%TCIMACRO{\U{211d} }%
%BeginExpansion
\mathbb{R}
%EndExpansion
)\},
\end{equation*}%
\begin{eqnarray*}
\mathcal{D}_{0} &=&\{(A,C)\in H_{\eta }^{4}\times H_{\eta }^{2};A\in H_{\eta
}^{4},C\in \mathcal{D}_{1}\} \\
\mathcal{D}_{1} &=&\{C\in H_{\eta }^{2};\varepsilon ^{-2}||C^{\prime \prime
}||_{L_{\eta }^{2}}+\varepsilon ^{-1}||C^{\prime }||_{L_{\eta
}^{2}}+||C||_{L_{\eta }^{2}}\overset{def}{=}||C||_{\mathcal{D}_{1}}<\infty \}
\end{eqnarray*}%
equiped with natural scalar products. Then we have the following result
(proved in \cite{Io23}):

\begin{lemma}
\label{Lem Linearoperator}Except maybe for a set of isolated values of $g,$
the kernel of $\mathcal{M}_{g}$ in $L_{\eta }^{2}$ is one dimensional,
spanned by $(A_{\ast }^{\prime },B_{\ast }^{\prime }),$ and its range has
codimension 1, $L^{2}$- orthogonal to $(A_{\ast }^{\prime },B_{\ast
}^{\prime }).$ $\mathcal{M}_{g}$ has a pseudo-inverse acting from $L_{\eta
}^{2}$ to $\mathcal{D}_{0}$ for any $\eta >0$ small enough, with bound
independent of $\varepsilon .$

The operator $\mathcal{L}_{g}$ has a trivial kernel, and its range which has
codimension 1, is $L^{2}$- orthogonal to $B_{\ast }$ ($B_{\ast }\notin
L^{2}).$ $\mathcal{L}_{g}$ has a pseudo-inverse acting respectively from $%
L_{\eta }^{2}$ to $\mathcal{D}_{1}$ for $\eta >0$ small enough, with bound
independent of $\varepsilon .$
\end{lemma}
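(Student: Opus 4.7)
The kernel element $(A_\ast', B_\ast')$ for $\mathcal{M}_g$ arises from translation invariance: differentiating the autonomous system (\ref{reduced system}) with respect to $x$ shows that $(A_\ast', B_\ast')$ annihilates $\mathcal{M}_g$. By Corollaries \ref{corollary AA'} and \ref{Corollary stab}, both components decay exponentially at $\pm\infty$, so they belong to $L_\eta^2$ for $\eta$ small enough compared to $\varepsilon\delta_\ast$ and $\varepsilon^{1/5}\delta_\ast$. A parallel direct computation using the equation for $B_\ast$ gives $\mathcal{L}_g B_\ast = 0$, but since $B_\ast(x)\to 1$ at $+\infty$, $B_\ast \notin L_\eta^2$, and hence the kernel of $\mathcal{L}_g$ in $L_\eta^2$ is trivial.

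To obtain the Fredholm property I would freeze coefficients at $\pm\infty$ and study the asymptotic constant-coefficient operators. For $\mathcal{M}_g$: at $-\infty$ one gets $-A^{(4)}-2A$ coupled with $\varepsilon^{-2}C''+(1-g)C$, yielding balanced stable/unstable directions, and at $+\infty$ one gets $-A^{(4)}+(1-g)A$ together with $\varepsilon^{-2}C''$. The second piece has a double root at $0$, as does the asymptotic $\varepsilon^{-2}D''$ for $\mathcal{L}_g$ at $+\infty$. In every case the exponential weight $e^{\eta|x|}$ with $\eta>0$ shifts these roots off the imaginary axis, restoring asymptotic hyperbolicity, and Palmer's theorem then yields the Fredholm property in the weighted spaces. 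Formal self-adjointness of both $\mathcal{M}_g$ and $\mathcal{L}_g$ in the unweighted $L^2$ pairing (a routine integration by parts using decay at infinity) combined with the duality $(L_\eta^2)^\ast \cong L_{-\eta}^2$ identifies the cokernel of each operator with its kernel inside $L_{-\eta}^2$; since $(A_\ast',B_\ast') \in L_{-\eta}^2$ and $B_\ast\in L_{-\eta}^2$, these furnish the claimed cokernel directions.

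Dimension counting is then used to show that these kernels are exactly one-dimensional. For $\mathcal{M}_g$, the transversality statement in Theorem \ref{theorem}, which asserts that the $3$-dimensional unstable manifold of $M_-$ and the $3$-dimensional stable manifold of $M_+$ intersect transversally in the $5$-dimensional level set $\{\mathcal{W}_g=0\}$, forces the intersection of the corresponding linearized invariant subspaces to be exactly one-dimensional, spanned by the tangent vector $(A_\ast',B_\ast')$; this excludes the finite set of $g$ for which the intersection fails to be transverse. For $\mathcal{L}_g$ the analogous 2-dimensional ODE analysis is straightforward. The pseudo-inverse is then constructed via the spatial-dynamics first-order reformulation: one builds the exponential dichotomy on $\mathbb{R}_\pm$, writes the right-inverse through integration against the associated Green's kernel, and projects onto the $L^2$-orthogonal complement of the kernel to produce the pseudo-inverse.

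The hard part is the $\varepsilon$-uniform bound on this pseudo-inverse. The coefficient $\varepsilon^{-2}$ in front of $C''$ and $D''$ is singular, and the decay rates of the background coefficients themselves degenerate with $\varepsilon$: by Corollaries \ref{corollary AA'} and \ref{Corollary stab}, $B_\ast-1$ decays only at rate $\mathcal{O}(\varepsilon)$ and $A_\ast$ only at rate $\mathcal{O}(\varepsilon^{1/5})$, so both the Green's kernel and the dichotomy projections carry non-trivial $\varepsilon$-dependent scalings. The uniform bound is made possible precisely by the choice of weighted derivative norm in $\mathcal{D}_1$ (and the matching component of $\mathcal{D}_0$), in which the $k$-th derivative is weighted by $\varepsilon^{-k}$, so that these factors exactly compensate the $\varepsilon^{-2}$ arising from the equation. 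Tracking these cancellations carefully through each step of the dichotomy construction is where the main technical effort lies; this is the step I expect to be the principal obstacle.
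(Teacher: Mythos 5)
The paper does not prove this lemma at all: it is quoted verbatim as a result established in \cite{Io23}, and the only in-text justification is the remark that follows it (explaining that $\mathcal{L}_{g}B_{\ast }=0$ with $B_{\ast }\notin L_{\eta }^{2}$, so the kernel of $\mathcal{L}_{g}$ is trivial at the price of a codimension-one range). Your sketch is consistent with that remark and with the standard route one would expect in \cite{Io23}: translation invariance gives $(A_{\ast }^{\prime },B_{\ast }^{\prime })\in \ker \mathcal{M}_{g}$, phase invariance gives $\mathcal{L}_{g}B_{\ast }=0$, exponential weights restore Fredholmness, formal self-adjointness plus the duality $(L_{\eta }^{2})^{\ast }\cong L_{-\eta }^{2}$ identifies the cokernels, and the transversality in Theorem \ref{theorem} (which is exactly where the excluded isolated values of $g=1+\delta ^{2}$ come from) pins the kernel of $\mathcal{M}_{g}$ at dimension one. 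So the architecture is right.

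Two points need fixing or completing. First, a concrete error: the asymptotic operator of the second component of $\mathcal{M}_{g}$ at $+\infty $ is not $\varepsilon ^{-2}C^{\prime \prime }$ but $\varepsilon ^{-2}C^{\prime \prime }-2C$, since $1-gA_{\ast }^{2}-3B_{\ast }^{2}\rightarrow -2$; its roots are $\pm \sqrt{2}\varepsilon $, which is precisely the decay rate $e^{-\sqrt{2}\varepsilon x}$ of $B_{\ast }-1$ in Corollary \ref{Corollary stab}. Only $\mathcal{L}_{g}$ is degenerate at $+\infty $ (where $1-gA_{\ast }^{2}-B_{\ast }^{2}\rightarrow 0$ gives $\varepsilon ^{-2}D^{\prime \prime }$ with a double zero root). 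This matters for the count: if your stated asymptotics were correct, the same index bookkeeping you apply to $\mathcal{L}_{g}$ would force an extra codimension in the range of $\mathcal{M}_{g}$, contradicting the lemma; with the correct asymptotics all blocks of $\mathcal{M}_{g}$ are hyperbolic with spectral gap of order $\varepsilon $, the weight $\eta =\varepsilon \delta _{\ast }/2<\sqrt{2}\varepsilon $ does not cross any root, and the index is $0$ as claimed. Second, the part you yourself flag as the principal obstacle --- the bound on the pseudo-inverse uniform in $\varepsilon $ despite the $\varepsilon ^{-2}$ coefficient, the $\mathcal{O}(\varepsilon )$ and $\mathcal{O}(\varepsilon ^{1/5})$ spectral gaps, and the $\varepsilon $-dependent decay of the coefficients --- is asserted rather than carried out; naming the $\varepsilon ^{-k}$-weighted norms of $\mathcal{D}_{1}$ as the mechanism is the right instinct, but it is exactly the content that \cite{Io23} supplies and that your proposal leaves open.
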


\begin{remark}
We might expect a two-dimensional kernel since we have a "circle" of
heteroclinics. The one-dimensional kernel of $\mathcal{M}_{g}$ is the usual
one, while we also have $\mathcal{L}_{g}B_{\ast }=0.$ However $B_{\ast
}\notin L_{\eta }^{2}$ so that the kernel of $\mathcal{L}_{g}$ is $\{0\},$
and we pay this by a codimension one range for $\mathcal{L}_{g}.$ This is
explicitely computed in \cite{Io23}.
\end{remark}

\section{Estimates for the right hand sides of $\mathcal{M}_{g}(\widehat{%
A_{0}},\widehat{C_{0}})$ and $\mathcal{L}_{g}\widehat{D_{0}}\label{sect:
estimates}$}

After the second change of variables (\ref{changevariable}) the remaining
terms in the right hand side of $\mathcal{M}_{g}(\widehat{A_{0}},\widehat{%
C_{0}})$ and $\mathcal{L}_{g}\widehat{D_{0}}$ coming from 
\begin{equation*}
\widetilde{\phi _{01}}(\omega x,\varepsilon ,k_{-},\widetilde{X},\widetilde{Y%
}),\text{ }\widetilde{\psi _{0r1}}(\omega x,\varepsilon ,k_{-},\widetilde{X},%
\widetilde{Y}),\text{ }\widetilde{\psi _{0i1}}(\omega x,\varepsilon ,k_{-},%
\widetilde{X},\widetilde{Y})
\end{equation*}%
now cancel for $(\widehat{X},\widehat{Y},\widehat{\overline{Y}})=0,$ they
are then estimated in $L_{\eta }^{2}$ by 
\begin{equation}
\mathcal{O}(\varepsilon ^{4}(||(\widehat{A_{0}},\widehat{C_{0}})||_{\mathcal{%
D}_{0}}+||\widehat{D_{0}}||_{\mathcal{D}_{1}}),  \label{estim rest 1}
\end{equation}%
provided that the following condition%
\begin{equation}
|\widehat{A_{0}}(x)|+|\widehat{A_{0}^{\prime }}(x)|+|\widehat{A_{0}^{\prime
\prime }}(x)|+|\widehat{A_{0}^{\prime \prime \prime }}(x)|+|\widehat{C_{0}}%
(x)|+|\widehat{C_{0}^{\prime }}(x)|+|\widehat{D_{0}}(x)|+|\widehat{%
D_{0}^{\prime }}(x)|<<1  \label{condition1}
\end{equation}%
holds. We need to check this condition at the end of subsection \ref{sect:
finalbifurc}. The unknowns in the problem are now%
\begin{equation*}
(\widehat{A_{0}},\widehat{C_{0}})\in \mathcal{D}_{0},\text{ }\widehat{D_{0}}%
\in \mathcal{D}_{1},\text{ }(k_{-},\widetilde{\omega }_{+})\in 
%TCIMACRO{\U{211d} }%
%BeginExpansion
\mathbb{R}
%EndExpansion
^{2},
\end{equation*}%
and $\varepsilon $ is supposed to be small enough. In the following we use
extensively the estimates (see (\ref{defalpha-},\ref{defbeta+c}))%
\begin{eqnarray*}
\alpha _{-} &=&\mathcal{O}(|k_{-}|+\varepsilon ^{2})^{2},\text{ \ }\beta
_{+}=\mathcal{O}(|\widetilde{\omega }_{+}|+\varepsilon ^{2})^{2}, \\
\beta _{-} &=&\mathcal{O}(\varepsilon ^{6}),\text{ oscil part }(\beta _{+})=%
\mathcal{O}(\varepsilon ^{6}),\text{ }\gamma _{+}=\mathcal{O}(\varepsilon
^{6}), \\
\beta _{+}^{\prime } &=&\mathcal{O}(\varepsilon ^{5}),\text{ }\gamma
_{+}^{\prime }=\mathcal{O}(\varepsilon ^{5}).
\end{eqnarray*}

\subsection{First component of $\mathcal{M}_{g}(\widehat{A_{0}},\widehat{%
C_{0}})$}

The first component is now the sum of small terms linear in $(\widehat{A_{0}}%
,\widehat{C_{0}})$ plus quadratic terms and terms independent of $(\widehat{%
A_{0}},\widehat{C_{0}})$ which tend exponentially to $0$ as $e^{\varepsilon
\delta _{\ast }x}$ for $x\rightarrow -\infty $ and $e^{-\sqrt{2}\varepsilon
x}$ for $x\rightarrow +\infty :$%
\begin{equation}
\mathcal{M}_{g}(\widehat{A_{0}},\widehat{C_{0}})|_{1}=-k_{-}\widehat{A_{0}}%
^{\prime \prime }+\frac{k_{-}^{2}}{4}\widehat{A_{0}}+\widehat{\phi _{0}}%
+\varphi _{1}(k_{-})  \label{1st component}
\end{equation}%
with%
\begin{eqnarray}
\varphi _{1}(k_{-}) &=&-k_{-}(A_{\ast }^{\prime \prime }+\alpha _{-}\chi
_{-}^{\prime \prime })+\frac{k_{-}^{2}}{4}(A_{\ast }-\chi _{-})+\alpha
_{-}\chi _{-}^{(4)}  \label{phi1} \\
&&-3(1-A_{\ast }^{2})\alpha _{-}\chi _{-}+gB_{\ast }^{2}\alpha _{-}\chi
_{-}+2gA_{\ast }B_{\ast }(\beta _{-}\chi _{-}+\beta _{+}\chi _{+}),  \notag
\\
\widehat{\phi _{0}} &=&\widetilde{\phi _{0}}(\omega x,\varepsilon ,k_{-},%
\widetilde{X},\widetilde{Y})-\chi _{-}\widetilde{\phi _{0}}(\omega
x,\varepsilon ,k_{-},\widetilde{X}^{(-\infty )},\widetilde{Y}^{(-\infty )}).
\notag
\end{eqnarray}%
More precisely we have, from (\ref{phi00tildea}), and taking into account (%
\ref{estim rest 1})%
\begin{eqnarray}
\widehat{\phi _{0}} &=&3[\alpha _{-}^{2}(A_{\ast }\chi _{-}^{2}-\chi
_{-})+2\alpha _{-}A_{\ast }\chi _{-}\widehat{A_{0}}+A_{\ast }\widehat{A_{0}}%
^{2}]+\alpha _{-}^{3}(\chi _{-}^{3}-\chi _{-})  \label{phi0hat} \\
&&+3\alpha _{-}^{2}\chi _{-}^{2}\widehat{A_{0}}+3\alpha _{-}\chi _{-}%
\widehat{A_{0}}^{2}+\widehat{A_{0}}^{3}+2gB_{\ast }[\alpha _{-}\chi _{-}%
\widehat{C_{0}}+(\beta _{-}\chi _{-}+\beta _{+}\chi _{+})\widehat{A_{0}}+%
\widehat{A_{0}}\widehat{C_{0}}]  \notag \\
&&+g(A_{\ast }+\alpha _{-}\chi _{-}+\widehat{A_{0}})[(\beta _{-}\chi
_{-}+\beta _{+}\chi _{+}+\widehat{C_{0}})^{2}+(\gamma _{+}\chi _{+}+\widehat{%
D_{0}})^{2}]  \notag \\
&&-\chi _{-}g(1+\alpha _{-})\beta _{-}^{2}+\widehat{f_{00}},  \notag \\
\widehat{f_{00}} &=&\sigma _{0}\varepsilon ^{2}k_{-}(A_{\ast }^{3}-\chi
_{-})+\mathcal{O}[\varepsilon ^{2+2/5}(e^{\varepsilon \delta _{\ast }x}\chi
_{(-\infty ,0)}+e^{-\varepsilon ^{1/5}\delta _{\ast }x}\chi _{(0,\infty
)})+\varepsilon ^{2}(|\widehat{X}|+|\widehat{Y}|)+\varepsilon |\widehat{D_{0}%
}^{\prime }|].  \notag
\end{eqnarray}%
We notice that for $\eta =\varepsilon \delta _{\ast }/2$ $(\eta <\varepsilon
\delta $ is necessary$),$ and due to Corollary \ref{Corollary stab},%
\begin{eqnarray*}
\frac{1}{\varepsilon ^{2}}\beta _{+}^{\prime } &=&\mathcal{O}(\varepsilon
^{3}),\text{ }\frac{1}{\varepsilon ^{2}}\gamma _{+}^{\prime }=\mathcal{O}%
(\varepsilon ^{3}), \\
||A_{\ast }^{\prime }||_{L_{\eta }^{2}} &=&\mathcal{O}(\varepsilon ^{1/10}),%
\text{ }||B_{\ast }^{\prime }||_{L_{\eta }^{2}}=\mathcal{O}(\varepsilon
^{1/2}), \\
||A_{\ast }^{\prime 2}||_{L_{\eta }^{2}} &=&\mathcal{O}(\varepsilon ^{7/10}),%
\text{ }||B_{\ast }^{\prime 2}||_{L_{\eta }^{2}}=\mathcal{O}(\varepsilon
^{3/2}), \\
||A_{\ast }^{\prime \prime }||_{L_{\eta }^{2}} &=&\mathcal{O}(\varepsilon
^{1/10}),\text{ }||B_{\ast }^{\prime \prime }||_{L_{\eta }^{2}}=\mathcal{O}%
(\varepsilon ^{3/2}).
\end{eqnarray*}%
Then, in using extensively $2|ab|\leq a^{2}+b^{2})$ and, for example%
\begin{equation*}
\frac{k_{-}^{2}}{4}||A_{\ast }-\chi _{-}||_{L_{\eta }^{2}}=\mathcal{O(}\frac{%
k_{-}^{2}}{\sqrt{\varepsilon }}),
\end{equation*}%
we obtain the estimates (here and in the following $c$ is a generic
constant, independent of $\varepsilon )$ 
\begin{eqnarray}
||\varphi _{1}(k_{-})||_{L_{\eta }^{2}} &\leq &c\left( \varepsilon
^{1/10}|k_{-}|+\frac{k_{-}^{2}+\varepsilon ^{4}}{\sqrt{\varepsilon }}+%
\widetilde{\omega }_{+}^{2}+\varepsilon ^{2}|\widetilde{\omega }%
_{+}|)\right) ,  \label{estimphi1a} \\
\int_{%
%TCIMACRO{\U{211d} }%
%BeginExpansion
\mathbb{R}
%EndExpansion
}\varphi _{1}(k_{-})A_{\ast }^{\prime }dx &=&\mathcal{O}[(|k_{-}|+|%
\widetilde{\omega }_{+}|+\varepsilon ^{2})^{2}],  \notag
\end{eqnarray}%
using integration by parts and 
\begin{eqnarray*}
\int_{%
%TCIMACRO{\U{211d} }%
%BeginExpansion
\mathbb{R}
%EndExpansion
}A_{\ast }^{\prime }A_{\ast }^{\prime \prime }dx &=&0, \\
\int_{%
%TCIMACRO{\U{211d} }%
%BeginExpansion
\mathbb{R}
%EndExpansion
}(A_{\ast }-\chi _{-})A_{\ast }^{\prime }dx &=&\mathcal{O}(1) \\
\int_{%
%TCIMACRO{\U{211d} }%
%BeginExpansion
\mathbb{R}
%EndExpansion
}(1-A_{\ast }^{2})A_{\ast }^{\prime }\chi _{-}dx &=&\mathcal{O}(1).
\end{eqnarray*}%
In next estimates, we use the following little Lemma (adapted from a simple
Sobolev inequality) where we notice that we loose one $\varepsilon ,$ due to
the weak exponential decay at $\infty :$

\begin{lemma}
\label{Sobolev}For any $u\in H_{\eta }^{1}$ and $\varepsilon $ sufficiently
small, we have%
\begin{equation*}
|u(x)|\leq c(||u||_{L_{\eta }^{2}}+\frac{1}{\varepsilon }||u^{\prime
}||_{L_{\eta }^{2}})
\end{equation*}%
where $c$ is independent of $\varepsilon .$
\end{lemma}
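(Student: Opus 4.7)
The plan is to reduce this to a weighted version of the one-dimensional Sobolev embedding $H^{1}(\mathbb{R})\hookrightarrow L^{\infty}(\mathbb{R})$, treating the two half-lines separately because the weight $e^{\eta|x|}$ has a corner at $x=0$. By density it suffices to consider $u\in C_{c}^{\infty}(\mathbb{R})$. For $x\geq 0$, I would apply the fundamental theorem of calculus to $u^{2}$ and insert the weight:
\[
u(x)^{2}=-2\int_{x}^{+\infty}u(t)u'(t)\,dt=-2\int_{x}^{+\infty}\bigl(u(t)e^{\eta t}\bigr)\bigl(u'(t)e^{\eta t}\bigr)e^{-2\eta t}\,dt.
\]
Since $t\geq x\geq 0$, we have $e^{-2\eta t}\leq e^{-2\eta x}$, so pulling this out and applying Cauchy--Schwarz in $L^{2}(\mathbb{R},dt)$ yields
\[
u(x)^{2}\leq 2e^{-2\eta x}\,\|u\|_{L^{2}_{\eta}}\|u'\|_{L^{2}_{\eta}},
\]
i.e.\ $|u(x)|\leq\sqrt{2}\,e^{-\eta x}\bigl(\|u\|_{L^{2}_{\eta}}\|u'\|_{L^{2}_{\eta}}\bigr)^{1/2}$. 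The symmetric argument on $(-\infty,x]$ for $x\leq 0$ gives an identical bound with $e^{-\eta|x|}$, and combining the two cases produces
\[
|u(x)|\leq\sqrt{2}\,\bigl(\|u\|_{L^{2}_{\eta}}\|u'\|_{L^{2}_{\eta}}\bigr)^{1/2}\quad\text{for all }x\in\mathbb{R}.
\]

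To arrive at the form stated in the lemma, I would apply the weighted arithmetic--geometric mean inequality $2\sqrt{ab}\leq\varepsilon a+b/\varepsilon$ with $a=\|u\|_{L^{2}_{\eta}}$ and $b=\|u'\|_{L^{2}_{\eta}}$, obtaining
\[
|u(x)|\leq c\Bigl(\varepsilon\|u\|_{L^{2}_{\eta}}+\tfrac{1}{\varepsilon}\|u'\|_{L^{2}_{\eta}}\Bigr)\leq c\Bigl(\|u\|_{L^{2}_{\eta}}+\tfrac{1}{\varepsilon}\|u'\|_{L^{2}_{\eta}}\Bigr),
\]
with $c$ independent of $\varepsilon$ (for $\varepsilon$ sufficiently small).

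There is essentially no obstacle in the proof; the content is classical. The interesting feature is the comment about ``losing one $\varepsilon$'': the geometric-mean bound $|u(x)|\leq c\sqrt{\|u\|_{L^{2}_{\eta}}\|u'\|_{L^{2}_{\eta}}}$ is tight, but because $\eta=\varepsilon\delta_{\ast}/2$ scales like $\varepsilon$, the natural way to split the product via Young's inequality forces an imbalance of the form $\varepsilon\,a+\varepsilon^{-1}b$. This choice is deliberate: it matches exactly the scaling $\varepsilon^{-1}\|C'\|_{L^{2}_{\eta}}$ appearing in the definition of $\|\cdot\|_{\mathcal{D}_{1}}$, so the lemma can be applied termwise to control the sup-norm of the components of $\widehat{Y}$ by $\|\widehat{Y}\|_{\mathcal{D}_{1}}$ without any further loss in $\varepsilon$.
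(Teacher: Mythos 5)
Your proof is correct; the paper states Lemma \ref{Sobolev} without proof, describing it only as ``adapted from a simple Sobolev inequality'', and your argument is exactly that adaptation (fundamental theorem of calculus applied to $u^{2}$ on a half-line, insertion of the weight $e^{\eta t}$, Cauchy--Schwarz, then Young's inequality). One remark: your intermediate bound $|u(x)|\leq \sqrt{2}\,(\|u\|_{L_{\eta }^{2}}\|u^{\prime }\|_{L_{\eta }^{2}})^{1/2}$ already yields the lemma with no $1/\varepsilon$ at all (via $2\sqrt{ab}\leq a+b$, or simply from $\|\cdot\|_{L^{2}}\leq \|\cdot\|_{L_{\eta }^{2}}$ and the unweighted Sobolev embedding), so the factor $1/\varepsilon$ in the statement is a harmless weakening chosen to match the $\varepsilon$-scalings of the norms $\|\cdot\|_{\mathcal{D}_{1}}$, as you correctly observe, rather than a loss genuinely forced by the weak exponential decay.
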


Then we may use%
\begin{eqnarray*}
|\widehat{A_{0}}^{(m)}(x)| &\leq &\frac{c}{\varepsilon }||\widehat{A_{0}}%
||_{H_{\eta }^{4}},\text{ }m=0,1,2,3 \\
|\widehat{C_{0}}^{(m)}(x)| &\leq &c\varepsilon ^{m}||\widehat{C_{0}}||_{%
\mathcal{D}_{1}},\text{ }m=0,1, \\
|\widehat{D_{0}}^{(m)}(x)| &\leq &c\varepsilon ^{m}||\widehat{D_{0}}||_{%
\mathcal{D}_{1}},\text{ }m=0,1.
\end{eqnarray*}%
Now, from $f_{00}$ in (\ref{phi0hat}), we have (see Remark \ref{remark estim
f00A'*}) 
\begin{equation*}
||d_{3}\varepsilon ^{2}A_{\ast }^{\prime \prime }+d_{4}\varepsilon
^{2}A_{\ast }^{2}A_{\ast }^{\prime \prime }+d_{5}\varepsilon ^{2}A_{\ast
}^{\prime \prime }B_{\ast }^{2}||_{L_{\eta }^{2}}=\mathcal{O}(\varepsilon
^{2+1/10}),
\end{equation*}%
and for example, from Lemma \ref{Sobolev}%
\begin{equation*}
2g||B_{\ast }\widehat{A_{0}}\widehat{C_{0}}||_{L_{\eta }^{2}}\leq c||%
\widehat{A_{0}}||_{H_{\eta }^{4}}||\widehat{C_{0}}||_{\mathcal{D}_{1}}\leq
c||(\widehat{A_{0}},\widehat{C_{0}})||_{\mathcal{D}_{0}}^{2},
\end{equation*}%
\begin{equation*}
||\widehat{A_{0}}^{2}||_{L_{\eta }^{2}}\leq \frac{c}{\varepsilon }||\widehat{%
A_{0}}||_{\mathcal{D}_{0}}^{2},\text{ \ }||\widehat{A_{0}}^{3}||_{L_{\eta
}^{2}}\leq \frac{c}{\varepsilon ^{2}}||\widehat{A_{0}}||_{\mathcal{D}%
_{0}}^{3}.
\end{equation*}%
We then obtain, for sufficiently small $\varepsilon ,|k_{-}|,|\widetilde{%
\omega }_{+}|,\widehat{A_{0}},\widehat{C_{0}},\widehat{D_{0}}$ in $%
%TCIMACRO{\U{211d} }%
%BeginExpansion
\mathbb{R}
%EndExpansion
_{+}^{3}\times \mathcal{D}_{0}\times \mathcal{D}_{1}$ 
\begin{equation}
||\widehat{\phi _{0}}||_{L_{\eta }^{2}}\leq c\left( \varepsilon
^{2+1/10}+\varepsilon ^{3/2}|k_{-}|+\frac{k_{-}^{4}}{\sqrt{\varepsilon }}+%
\widetilde{\omega }_{+}^{4}+\frac{1}{\varepsilon }||\widehat{A_{0}}%
||_{H_{\eta }^{4}}^{2}+\frac{1}{\varepsilon ^{2}}||\widehat{A_{0}}%
||_{H_{\eta }^{4}}^{3}+||\widehat{C_{0}}||_{\mathcal{D}_{1}}^{2}+||\widehat{%
D_{0}}||_{\mathcal{D}_{1}}^{2}\right) .  \label{estim phi0a}
\end{equation}

\subsection{Second component of $\mathcal{M}_{g}(\widehat{A_{0}},\widehat{%
C_{0}})$}

For the second component we have 
\begin{equation}
\mathcal{M}_{g}(\widehat{A_{0}},\widehat{C_{0}})|_{2}=\frac{2\widetilde{%
\omega }_{+}}{\varepsilon }\widehat{D_{0}}^{\prime }+\widetilde{\omega }%
_{+}^{2}\widehat{C_{0}}+\widehat{\psi _{0r}}+\varphi _{2}(k_{-}),
\label{2nd component}
\end{equation}%
with%
\begin{eqnarray}
\varphi _{2}(k_{-}) &=&\widetilde{\omega }_{+}^{2}(B_{\ast }-\chi _{+})-%
\frac{1}{\varepsilon ^{2}}\beta _{-}\chi _{-}^{\prime \prime }-\frac{2}{%
\varepsilon ^{2}}\beta _{+}^{\prime }\chi _{+}^{\prime }-\frac{1}{%
\varepsilon ^{2}}\beta _{+}\chi _{+}^{\prime \prime }+\frac{2\widetilde{%
\omega }_{+}}{\varepsilon }\gamma _{+}\chi _{+}^{\prime }  \label{phi2} \\
&&-(3-gA_{\ast }^{2}-3B_{\ast }^{2})\beta _{+}\chi _{+}+[1-\chi
_{-}-g(A_{\ast }^{2}-\chi _{-})]\beta _{-}\chi _{-}+2gA_{\ast }B_{\ast
}\alpha _{-}\chi _{-},  \notag \\
\widehat{\psi _{0r}} &=&\widetilde{\psi _{0r}}(\omega x,\varepsilon ,k_{-},%
\widetilde{X},\widetilde{Y})-\chi _{+}\widetilde{\psi _{0r}}(\omega
x,\varepsilon ,k_{-},0,\widetilde{Y}^{(+\infty )})  \notag \\
&&-\chi _{-}\widetilde{\psi _{0r}}(\omega x,\varepsilon ,k_{-},\widetilde{X}%
^{(-\infty )},\widetilde{Y}^{(-\infty )}),  \notag
\end{eqnarray}%
where $\gamma _{+}=D_{0}^{(+\infty )}.$ For $\widehat{\psi _{0r}}$ we have 
\begin{eqnarray}
\widehat{\psi _{0r}} &=&2gA_{\ast }(\alpha _{-}\chi _{-}\widehat{C_{0}}%
+(\beta _{-}\chi _{-}+\beta _{+}\chi _{+})\widehat{A_{0}}+\widehat{A_{0}}%
\widehat{C_{0}})  \label{psi0hat} \\
&&+g(B_{\ast }+\beta _{+}\chi _{+}+\widehat{C_{0}})(\alpha _{-}^{2}\chi
_{-}^{2}+2\alpha _{-}\chi _{-}\widehat{A_{0}}+\widehat{A_{0}}^{2})  \notag \\
&&+g\beta _{-}\chi _{-}[(\alpha _{-}^{2}(\chi _{-}^{2}-1)+2\alpha _{-}\chi
_{-}\widehat{A_{0}}+\widehat{A_{0}}^{2}]  \notag \\
&&+[B_{\ast }(\beta _{-}\chi _{-}+\beta _{+}\chi _{+})^{2}-\chi _{+}\beta
_{+}^{2}]+[B_{\ast }(\gamma _{+}\chi _{+})^{2}-\chi _{+}\gamma _{+}^{2}] 
\notag \\
&&+\beta _{+}\chi _{+}(\chi _{+}^{2}-1)(\beta _{+}^{2}+\gamma
_{+}^{2})+\beta _{-}^{3}\chi _{-}(\chi _{-}^{2}-1)  \notag \\
&&+\widehat{C_{0}}[(\beta _{-}\chi _{-}+\beta _{+}\chi _{+}+\widehat{C_{0}}%
)^{2}+(\gamma _{+}\chi _{+}+\widehat{D_{0}})^{2}]  \notag \\
&&+2(B_{\ast }+\beta _{+}\chi _{+})(\beta _{-}\chi _{-}+\beta _{+}\chi _{+}%
\widehat{C_{0}}+\gamma _{+}\chi _{+}\widehat{D_{0}})  \notag \\
&&+(B_{\ast }+\beta _{-}\chi _{-}+\beta _{+}\chi _{+})(\widehat{C_{0}}^{2}+%
\widehat{D_{0}}^{2})+\widehat{g_{00r}},  \notag
\end{eqnarray}%
\begin{equation*}
\widehat{g_{00r}}=\mathcal{O}(\varepsilon ^{2+3/5}e^{\varepsilon \delta
_{\ast }x}\chi _{(-\infty ,0)}+\varepsilon ^{2+4/5}e^{-\varepsilon
^{1/5}\delta _{\ast }x}\chi _{(0,\infty )}+\varepsilon ^{2}(|\widehat{X}|+|%
\widehat{Y}|)+\varepsilon |\widehat{D_{0}}^{\prime }|).
\end{equation*}%
Now we use 
\begin{equation*}
||c_{5}\varepsilon ^{2}A_{\ast }A_{\ast }^{\prime \prime }B_{\ast
}||_{L_{\eta }^{2}}\leq c\varepsilon ^{2},
\end{equation*}%
and, as above%
\begin{equation*}
2g||A_{\ast }\widehat{A_{0}}\widehat{C_{0}}||_{L_{\eta }^{2}}\leq \frac{c}{%
\varepsilon }||(\widehat{A_{0}},\widehat{C_{0}})||_{\mathcal{D}_{0}}^{2},
\end{equation*}%
so that we obtain for sufficiently small $\varepsilon ,k_{-},\widetilde{%
\omega }_{+},\widehat{A_{0}},\widehat{C_{0}},\widehat{D_{0}}$ in $%
%TCIMACRO{\U{211d} }%
%BeginExpansion
\mathbb{R}
%EndExpansion
^{3}\times \mathcal{D}_{0}\times \mathcal{D}_{1}$ (taking into account of (%
\ref{estim rest 1}))%
\begin{eqnarray}
||\widehat{\psi _{0r}}||_{L_{\eta }^{2}} &\leq &c\left( \varepsilon
^{2+1/10}+\frac{k_{-}^{4}+\widetilde{\omega }_{+}^{4}}{\sqrt{\varepsilon }}+%
\frac{1}{\varepsilon }||\widehat{A_{0}}||_{\mathcal{D}_{0}}^{2}+||\widehat{%
C_{0}}||_{\mathcal{D}_{1}}^{2}+||\widehat{D_{0}}||_{\mathcal{D}%
_{1}}^{2}\right)  \label{estim psi0r} \\
&&+c\left( (k_{-}^{2}+\widetilde{\omega }_{+}^{2})||(\widehat{A_{0}},%
\widehat{C_{0}})||_{\mathcal{D}_{0}}\right) ,  \notag
\end{eqnarray}%
In using, for example%
\begin{equation*}
||2gA_{\ast }B_{\ast }\alpha _{-}\chi _{-}||_{L_{\eta }^{2}}\leq c\frac{%
\widetilde{\omega }_{-}^{2}}{\sqrt{\varepsilon }},
\end{equation*}%
we obtain easily%
\begin{eqnarray}
||\varphi _{2}(k_{-})||_{L_{\eta }^{2}} &\leq &c(\frac{\widetilde{\omega }%
_{-}^{2}}{\sqrt{\varepsilon }}+\frac{(|\widetilde{\omega }_{+}|+\varepsilon
^{2})^{2}}{\varepsilon ^{2}}),  \label{estimphi2a} \\
\int_{%
%TCIMACRO{\U{211d} }%
%BeginExpansion
\mathbb{R}
%EndExpansion
}\varphi _{2}(k_{-})B_{\ast }^{\prime }dx &=&\mathcal{O[}(k_{-}^{2}+%
\widetilde{\omega }_{+}^{2}+\varepsilon ^{4})],  \notag
\end{eqnarray}%
where the last estimates use%
\begin{eqnarray*}
\frac{1}{\varepsilon ^{2}}\int_{0}^{1}\beta _{+}^{\prime }\chi _{+}^{\prime
}B_{\ast }^{\prime }dx &=&\mathcal{O}(\varepsilon ^{4}) \\
\frac{1}{\varepsilon ^{2}}\int_{0}^{1}\beta _{+}\chi _{+}^{\prime \prime
}B_{\ast }^{\prime }dx &=&\mathcal{O}(|\widetilde{\omega }_{+}|+\varepsilon
^{2})^{2}
\end{eqnarray*}%
obtained, for the first integral in integrating by parts, and for the second
one in separating the oscillating part of order $\varepsilon ^{6}$ from the
constant part $\beta _{+}^{(c)}$ of $\beta _{+},$ for which we make an
integration by parts, in using $B_{\ast }^{\prime \prime }=\mathcal{O}%
(\varepsilon ^{2}B_{\ast }).$ More precisely we have%
\begin{eqnarray}
\int_{%
%TCIMACRO{\U{211d} }%
%BeginExpansion
\mathbb{R}
%EndExpansion
}\varphi _{1}(k_{-})A_{\ast }^{\prime }dx+\int_{%
%TCIMACRO{\U{211d} }%
%BeginExpansion
\mathbb{R}
%EndExpansion
}\varphi _{2}(k_{-})B_{\ast }^{\prime }dx &=&a_{2}\frac{k_{-}^{2}}{4}%
+a_{3}\sigma _{0}\varepsilon ^{2}k_{-}  \label{phi1+phi2} \\
&&+\mathcal{O}(|k_{-}^{3}|+\varepsilon ^{2}k_{-}^{2}+\widetilde{\omega }%
_{+}^{2}+\varepsilon ^{4}),  \notag
\end{eqnarray}%
with%
\begin{eqnarray*}
a_{2} &=&\int_{%
%TCIMACRO{\U{211d} }%
%BeginExpansion
\mathbb{R}
%EndExpansion
}(A_{\ast }-\chi _{-})A_{\ast }^{\prime }dx-a_{3}, \\
2a_{3} &=&\int_{-1}^{0}\chi _{-}^{(4)}A_{\ast }^{\prime }-3\int_{%
%TCIMACRO{\U{211d} }%
%BeginExpansion
\mathbb{R}
%EndExpansion
}(1-A_{\ast }^{2})A_{\ast }^{\prime }\chi _{-}dx+g\int_{%
%TCIMACRO{\U{211d} }%
%BeginExpansion
\mathbb{R}
%EndExpansion
}(A_{\ast }B_{\ast }^{2})^{\prime }\chi _{-}dx,
\end{eqnarray*}%
We observe that (see Corollay \ref{corollary AA'})%
\begin{eqnarray*}
\int_{%
%TCIMACRO{\U{211d} }%
%BeginExpansion
\mathbb{R}
%EndExpansion
}(A_{\ast }-\chi _{-})A_{\ast }^{\prime }dx &=&\frac{1}{2}+\mathcal{O}%
(\varepsilon ^{3/5}) \\
\int_{-1}^{0}\chi _{-}^{(4)}A_{\ast }^{\prime }dx &=&\mathcal{O}(\varepsilon
^{3/5}) \\
g\int_{-\infty }^{0}(A_{\ast }B_{\ast }^{2})^{\prime }\chi _{-}dx
&=&-g\int_{-1}^{0}(A_{\ast }B_{\ast }^{2})\chi _{-}^{\prime }dx=\mathcal{O}%
(\varepsilon ^{2/5})
\end{eqnarray*}%
\begin{equation*}
-3\int_{-\infty }^{0}(1-A_{\ast }^{2})\chi _{-}A_{\ast }^{\prime
}dx=3\int_{-1}^{0}(A_{\ast }-\frac{A_{\ast }^{3}}{3}-\frac{2}{3})\chi
_{-}^{\prime }dx=2+\mathcal{O}(\varepsilon ^{2/5}),
\end{equation*}%
so that%
\begin{eqnarray}
a_{2} &=&-3/2+\mathcal{O}(\varepsilon ^{2/5}),  \label{int a2} \\
a_{3} &=&4+\mathcal{O}(\varepsilon ^{2/5}).  \label{inta3}
\end{eqnarray}

\subsection{Component $\mathcal{L}_{g}\widehat{D_{0}}$}

For the third component we obtain%
\begin{equation}
\mathcal{L}_{g}\widehat{D_{0}}=-\frac{2\widetilde{\omega }_{+}}{\varepsilon }%
\widehat{C_{0}}^{\prime }+\widetilde{\omega }_{+}^{2}\widehat{D_{0}}+%
\widehat{\psi _{0i}}+\varphi _{3}(k_{-}),  \label{equ D0hat}
\end{equation}%
\begin{eqnarray*}
\varphi _{3}(\widetilde{\omega },k_{-},\omega x) &=&-\frac{2\widetilde{%
\omega }_{+}}{\varepsilon }[B_{\ast }^{\prime }+\beta _{-}\chi _{-}^{\prime
}+\beta _{+}\chi _{+}^{\prime }]-\frac{2}{\varepsilon ^{2}}\gamma
_{+}^{\prime }\chi _{+}^{\prime } \\
&&-\frac{1}{\varepsilon ^{2}}\gamma _{+}\chi _{+}^{\prime \prime
}-(1-gA_{\ast }^{2}-B_{\ast }^{2})\gamma _{+}\chi _{+},
\end{eqnarray*}%
and 
\begin{eqnarray*}
\widehat{\psi _{0i}} &=&\widetilde{\psi _{0i}}(\omega x,\varepsilon ,k_{-},%
\widetilde{X},\widetilde{Y})-\chi _{+}\widetilde{\psi _{0i}}(\omega
x,\varepsilon ,k_{-},0,\widetilde{Y}^{(+\infty )}) \\
&&-\chi _{-}\widetilde{\psi _{0i}}(\omega x,\varepsilon ,k_{-},\widetilde{X}%
^{(-\infty )},\widetilde{Y}^{(-\infty )}).
\end{eqnarray*}%
For sufficiently small $\varepsilon ,k_{-},\widetilde{\omega }_{+},\widehat{%
A_{0}},\widehat{C_{0}},\widehat{D_{0}}$ in $%
%TCIMACRO{\U{211d} }%
%BeginExpansion
\mathbb{R}
%EndExpansion
^{3}\times \mathcal{D}_{0}\times \mathcal{D}_{1},$ we obtain the estimates%
\begin{equation}
||\varphi _{3}||_{L_{\eta }^{2}}\leq c(\varepsilon ^{3}+\frac{|\widetilde{%
\omega }_{+}|}{\sqrt{\varepsilon }}+\frac{|\widetilde{\omega }_{+}^{3}|}{%
\varepsilon }),  \label{estim phi3}
\end{equation}%
and taking into account of (\ref{estim rest 1}),%
\begin{eqnarray}
||\widehat{\psi _{0i}}||_{L_{\eta }^{2}} &\leq &c\{\varepsilon
^{1+1/10}+(k_{-}^{2}+\widetilde{\omega }_{+}^{2})||\widehat{D_{0}}||_{%
\mathcal{D}_{1}}+||\widehat{A_{0}}\widehat{D_{0}}||_{L_{\eta }^{2}}  \notag
\\
&&+||(\widehat{C_{0}}\widehat{D_{0}})||_{L_{\eta }^{2}}+||\widehat{D_{0}}||_{%
\mathcal{D}_{1}}^{2}\},  \label{estimpsi0i}
\end{eqnarray}%
where the term of order $\varepsilon ^{1+1/10}$ comes from%
\begin{equation*}
\varepsilon c_{9}||B_{\ast }A_{\ast }A_{\ast }^{\prime }||_{L_{\eta }^{2}}=%
\mathcal{O}(\varepsilon ^{1+1/10}).
\end{equation*}

\section{Bifurcation equation\label{sect: bifurc}}

Let us use an adapted Lyapunov-Schmidt method. Since%
\begin{equation*}
\mathcal{M}_{g}(A_{\ast }^{\prime },B_{\ast }^{\prime })=0,
\end{equation*}%
we now decompose $(\widehat{A_{0}},\widehat{C_{0}},\widehat{D_{0}})$ as%
\begin{eqnarray}
\widehat{A_{0}} &=&zA_{\ast }^{\prime }+u,  \label{new var} \\
\widehat{C_{0}} &=&zB_{\ast }^{\prime }+v,  \notag \\
\widehat{D_{0}} &=&w.  \notag
\end{eqnarray}%
For $\varepsilon $ small enough, the unknowns are now%
\begin{equation*}
(u,v)\in \mathcal{D}_{0},\text{ }w\in \mathcal{D}_{1},\text{ }(z,k_{-},%
\widetilde{\omega }_{+})\in 
%TCIMACRO{\U{211d} }%
%BeginExpansion
\mathbb{R}
%EndExpansion
^{3}.
\end{equation*}

\begin{remark}
It might be interesting to give a physical interpretation of $z$. By
construction of the basic heteroclinic, it corresponds to a shift in $x$ of
the heteroclinic. However, $z$ occurs in the component $w$ which modifies
the phase of $B_{0}$ controlling the rolls parallel to the wall, themselves
affected by the slight change of wave length (due to $k_{+}).$ This "shift"
has no effect on the equilibrium at $-\infty $. We interpret this in saying
that the system of rolls parallel to the wall (in $x=0$), adapts itself to
fit with the rolls on the other side, orthogonal to the wall. Notice that $z$
corresponds to a "shift" of size of order $z/\varepsilon $ for the original
phase of the amplitude $B$ of rolls parallel to the wall.
\end{remark}

Then, equations (\ref{1st component},\ref{2nd component}) give ($Q_{0}$ is
the projection in $L^{2}$ on the range of $\mathcal{M}_{g}$) 
\begin{equation}
\mathcal{M}_{g}(u,v)=Q_{0}\binom{-k_{-}(zA_{\ast }^{\prime }+u)^{\prime
\prime }+\frac{k_{-}^{2}}{4}(zA_{\ast }^{\prime }+u)+\widehat{\phi _{0}}%
+\varphi _{1}(k_{-})}{\frac{2\widetilde{\omega }_{+}}{\varepsilon }w^{\prime
}+\widetilde{\omega }_{+}^{2}(zB_{\ast }^{\prime }+v)+\widehat{\psi _{0r}}%
+\varphi _{2}(k_{-})}.  \label{range1}
\end{equation}

\subsection{Resolution with respect to $\widetilde{\protect\omega }_{+}$ and 
$w$}

We observe that $(u,v)$ and $w$ appear non symmetrically, so we choose to
first solve equation (\ref{equ D0hat}), where the kernel of $\mathcal{L}_{g}$
is empty, and its range of codimension 1 (see Lemma \ref{Lem Linearoperator}%
). This has the advantage to give $w$ and $\widetilde{\omega }_{+}$ in
function of $(u,v,z,k_{-},\varepsilon ).$ So, let us start by solving the
compatibility condition.

Since%
\begin{equation*}
\int_{0}^{1}\frac{1}{\varepsilon ^{2}}\gamma _{+}^{\prime }\chi _{+}^{\prime
}B_{\ast }dx=-\int_{0}^{1}\frac{1}{\varepsilon ^{2}}\gamma _{+}(\chi
_{+}^{\prime }B_{\ast })^{\prime }dx=\mathcal{O}(\varepsilon ^{4}),
\end{equation*}%
and using Remark \ref{remark estim f00A'*}, we obtain the estimates 
\begin{eqnarray*}
\int_{%
%TCIMACRO{\U{211d} }%
%BeginExpansion
\mathbb{R}
%EndExpansion
}\varphi _{3}B_{\ast }dx &=&-\frac{\widetilde{\omega }_{+}}{\varepsilon }[1+%
\mathcal{O}(|\widetilde{\omega }_{+}|+\varepsilon ^{2})^{2}]+\mathcal{%
O(\varepsilon }^{4})\text{ ,} \\
\int_{%
%TCIMACRO{\U{211d} }%
%BeginExpansion
\mathbb{R}
%EndExpansion
}\widehat{\psi _{0i}}B_{\ast }dx &=&\mathcal{O}[\varepsilon
^{1+1/10}+(k_{-}^{2}+\widetilde{\omega }_{+}^{2})||\widehat{D_{0}}||_{%
\mathcal{D}_{1}}+||\widehat{D_{0}}||_{\mathcal{D}_{1}}^{2}+||\widehat{A_{0}}%
\widehat{D_{0}}||_{L_{\eta }^{2}}+||\widehat{C_{0}}\widehat{D_{0}}%
||_{L_{\eta }^{2}}] \\
&=&\mathcal{O}[\varepsilon ^{1+1/10}+\varepsilon ^{3/5}|z|||w||_{\mathcal{D}%
_{1}}+||(u,v)||_{\mathcal{D}_{0}}^{2}+||w||_{\mathcal{D}_{1}}^{2}+(k_{-}^{2}+%
\widetilde{\omega }_{+}^{2})||w||_{\mathcal{D}_{1}})].
\end{eqnarray*}%
Then the compatibility condition for equation (\ref{equ D0hat}) leads to%
\begin{equation*}
\frac{2\widetilde{\omega }_{+}}{\varepsilon }\int_{\mathbb{R}}B'_{*}B_{*}dx=\int_{\mathbb{R}
}\left[ -\frac{2\widetilde{\omega }_{+}}{\varepsilon }(zB_{\ast }^{\prime
\prime }+v^{\prime })+\widetilde{\omega }_{+}^{2}w+\widehat{\psi _{0i}}%
+\varphi _{3}\right] B_{\ast }dx,
\end{equation*}%
which gives%
\begin{eqnarray*}
\widetilde{\omega }_{+} &=&\int_{%
%TCIMACRO{\U{211d} }%
%BeginExpansion
\mathbb{R}
%EndExpansion
}\left[ -2\widetilde{\omega }_{+}(zB_{\ast }^{\prime \prime }+v^{\prime
})+\varepsilon \widetilde{\omega }_{+}^{2}w\right] B_{\ast }dx \\
&&+\mathcal{O}[\varepsilon ^{2}+|\widetilde{\omega }_{+}|(|\widetilde{\omega 
}_{+}|+\varepsilon ^{2})^{2}+\varepsilon ^{1+2/5}|z|||w||_{\mathcal{D}_{1}}]
\\
&&+\varepsilon \mathcal{O}(||(u,v)||_{\mathcal{D}_{0}}^{2}+||w||_{\mathcal{D}%
_{1}}^{2}+(\widetilde{\omega }_{-}^{2}+\widetilde{\omega }_{+}^{2})||w||_{%
\mathcal{D}_{1}}).
\end{eqnarray*}%
The right hand side is a smooth function of its arguments, and may be solved
with respect to $\widetilde{\omega }_{+}$ (or equivalently with respect to $%
k_{+}$ since $\widetilde{\omega }_{+}\sim \frac{k_{+}}{2}$) by implicit
function theorem in the neighborhood of $0$ for%
\begin{equation*}
(u,v)\in \mathcal{D}_{0},\text{ }w\in \mathcal{D}_{1},\text{ }(\varepsilon ,%
\widetilde{\omega }_{-},z)\in 
%TCIMACRO{\U{211d} }%
%BeginExpansion
\mathbb{R}
%EndExpansion
^{3},
\end{equation*}%
with%
\begin{equation*}
\widetilde{\omega }_{+}=\mathfrak{k}_{+}(\varepsilon ,\widetilde{\omega }%
_{-},z,(u,v),w)\in C^{1}(%
%TCIMACRO{\U{211d} }%
%BeginExpansion
\mathbb{R}
%EndExpansion
^{3}\times \mathcal{D}_{0}\times \mathcal{D}_{1}).
\end{equation*}%
Moreover, we have the estimate 
\begin{equation}
|\mathfrak{k}_{+}|\leq c[\varepsilon ^{2}+\varepsilon ^{1+2/5}|z|||w||_{%
\mathcal{D}_{1}}+\varepsilon \widetilde{\omega }_{-}^{2}||w||_{\mathcal{D}%
_{1}}+\varepsilon (||(u,v)||_{\mathcal{D}_{0}}^{2}+||w||_{\mathcal{D}%
_{1}}^{2})].  \label{estim k+}
\end{equation}%
For solving equation (\ref{equ D0hat}) we now have%
\begin{equation*}
w=\mathcal{L}_{g}^{-1}[-\frac{2\mathfrak{k}_{+}}{\varepsilon }(zB_{\ast
}^{\prime \prime }+v^{\prime })+\mathfrak{k}_{+}^{2}w+\varphi _{3}+\widehat{%
\psi _{0i}}]
\end{equation*}%
which may be solved with respect to $w$ in $\mathcal{D}_{1},$ in the
neighborhood of $0$, by implicit function theorem, for 
\begin{equation*}
(\varepsilon ,k_{-},z,(u,v))\in 
%TCIMACRO{\U{211d} }%
%BeginExpansion
\mathbb{R}
%EndExpansion
^{3}\times \mathcal{D}_{0}\text{ in a neighborhood of }0.
\end{equation*}%
Using (\ref{estim phi3}), (\ref{estimpsi0i}), (\ref{estim k+}) and%
\begin{equation*}
||\frac{B_{\ast }^{\prime \prime }}{\varepsilon }||_{L_{\eta }^{2}}=\mathcal{%
O}(\varepsilon ^{1/2}),\text{ \ }||\frac{v^{\prime }}{\varepsilon }%
||_{L_{\eta }^{2}}\leq ||v||_{\mathcal{D}_{1}},
\end{equation*}%
we obtain%
\begin{equation*}
w=\mathfrak{w}(\varepsilon ,\widetilde{\omega }_{-},z,u,v)
\end{equation*}%
with%
\begin{equation}
||\mathfrak{w}||_{\mathcal{D}_{1}}\leq c(\varepsilon ^{1+1/10}+\varepsilon
^{1/2}||(u,v)||_{\mathcal{D}_{0}}^{2}),  \label{estimw a}
\end{equation}%
and we deduce 
\begin{equation}
|\mathfrak{k}_{+}|\leq c(\varepsilon ^{2}+\varepsilon ||(u,v)||_{\mathcal{D}%
_{0}}^{2}).  \label{estim k+ a}
\end{equation}

\begin{remark}
\label{rmk w} The term of order $\varepsilon ^{1+1/10}$ in $\mathfrak{w}$ is 
$\varepsilon w_{1}+\mathcal{O}(\varepsilon ^{3/2})$ with $w_{1}$ coming from 
$\widehat{\psi _{0i}}$ and given by (see $\cite{Io23}$ for an explicit
formula of the pseudo-inverse of $\mathcal{L}_{g}$)%
\begin{equation}
w_{1}=c_{9}\mathcal{L}_{g}^{-1}[B_{\ast }A_{\ast }A_{\ast }^{\prime
}-2B_{\ast }^{\prime }\int_{%
%TCIMACRO{\U{211d} }%
%BeginExpansion
\mathbb{R}
%EndExpansion
}B_{\ast }^{2}A_{\ast }A_{\ast }^{\prime }dx],\text{ }||w_{1}||_{\mathcal{D}%
_{1}}=\mathcal{O}(\varepsilon ^{1/10}),  \label{w1}
\end{equation}%
and the compatibility condition (orthogonality to $B_{\ast })$ is satisfied
with 
\begin{equation*}
||2B_{\ast }^{\prime }\int_{%
%TCIMACRO{\U{211d} }%
%BeginExpansion
\mathbb{R}
%EndExpansion
}B_{\ast }^{2}A_{\ast }A_{\ast }^{\prime }dx||_{L_{\eta }^{2}}=\mathcal{O}%
(\varepsilon ^{1/10}).
\end{equation*}
\end{remark}

\subsection{Resolution with respect to $(u,v)$}

Now, we replace $w$ and $\widetilde{\omega }_{+}$ by their expressions $%
\mathfrak{w}$ and $\mathfrak{k}_{+},$ and consider (\ref{range1}) which may
be solved by implicit function theorem (by Lemma \ref{Lem Linearoperator}
the pseudo-inverse of $\mathcal{M}_{g}$ is bounded from $L_{\eta }^{2}$ to $%
\mathcal{D}_{0}$) with respect to $(u,v)$ in a neighborhood of $0$ in $%
\mathcal{D}_{0}$ for $(\varepsilon ,k_{-},z)$ close to $0$ in $\mathbb{R}%
^{3}.$ Indeed, the right hand side of (\ref{range1}) is smooth in its
arguments and assuming%
\begin{equation}
|k_{-}|<\varepsilon ,  \label{estimk-}
\end{equation}%
\begin{equation}
|z|<\varepsilon ^{1/5},  \label{estimz}
\end{equation}%
\begin{equation}
||u||_{\mathcal{D}_{0}}<\varepsilon ^{1+1/20},  \label{estim u,v a}
\end{equation}%
using (\ref{new var}) and collecting results of (\ref{1st component},\ref%
{estimphi1a},\ref{estim phi0a}) for the first component, and (\ref{2nd
component},\ref{estimphi2a},\ref{estim psi0r}) for the second component,
estimates in $L_{\eta }^{2}$ of the right hand side are as follows%
\begin{eqnarray*}
\text{1st comp.} &=&\mathcal{O}\left( \frac{k_{-}^{2}}{\sqrt{\varepsilon }}%
+\varepsilon ^{1/10}|k_{-}|+\varepsilon ^{2+1/10}+\varepsilon
^{7/10}z^{2}+|k_{-}||z|||u||_{\mathcal{D}_{0}}\right. \\
&&\left. +|z|\varepsilon ^{2/5}||(u,v)||_{\mathcal{D}_{0}}+\frac{1}{%
\varepsilon }||u||_{\mathcal{D}_{0}}^{2}+||v||_{\mathcal{D}%
_{1}}^{2}+(1/\varepsilon ^{2})||u||_{\mathcal{D}_{0}}^{3}\right) , \\
\text{2nd comp.} &=&\mathcal{O}\left( \varepsilon ^{2}+\frac{k_{-}^{2}}{%
\sqrt{\varepsilon }}+\varepsilon ^{3/2}|k_{-}|+\varepsilon ^{7/10}z^{2}+%
\frac{1}{\varepsilon }||u||_{\mathcal{D}_{0}}^{2}+||v||_{\mathcal{D}%
_{1}}^{2}\right. \\
&&\left. +(k_{-}^{2}+\varepsilon ^{2/5}|z|)||(u,v)||_{\mathcal{D}%
_{0}}\right) .
\end{eqnarray*}%
where we notice that, for example%
\begin{eqnarray*}
||\widehat{A_{0}}^{2}||_{L_{\eta }^{2}} &\leq &c(\varepsilon
^{7/10}z^{2}+|z|\varepsilon ^{2/5}||u||_{\mathcal{D}_{0}}+\frac{1}{%
\varepsilon }||u||_{\mathcal{D}_{0}}^{2}), \\
||\widehat{C_{0}}^{2}||_{L_{\eta }^{2}} &\leq &c(\varepsilon
z^{2}+|z|\varepsilon ||v||_{\mathcal{D}_{1}}+||v||_{\mathcal{D}_{1}}^{2}).
\end{eqnarray*}%
Applying implicit function theorem for $(\varepsilon ,k_{-},z)$ satisfying (%
\ref{estimk-},\ref{estimz}) in $%
%TCIMACRO{\U{211d} }%
%BeginExpansion
\mathbb{R}
%EndExpansion
^{3},$ leads to%
\begin{equation*}
(u,v)=(\mathfrak{u,v)}(\varepsilon ,k_{-},z)\in \mathcal{D}_{0}
\end{equation*}%
with%
\begin{equation}
||(\mathfrak{u,v})||_{\mathcal{D}_{0}}\leq c(\varepsilon ^{2}+\frac{k_{-}^{2}%
}{\sqrt{\varepsilon }}+\varepsilon ^{1/10}|k_{-}|+\varepsilon ^{7/10}z^{2}),
\label{estim u,v}
\end{equation}%
which satisfies the a priori estimate (\ref{estim u,v a}). Now using (\ref%
{estimw a}), (\ref{estim k+ a}), (\ref{estimk-}), (\ref{estimz}) and (\ref%
{estim u,v}) we obtain 
\begin{eqnarray}
||\mathfrak{w}||_{\mathcal{D}_{1}} &\leq &c\varepsilon ^{1+1/10},
\label{estimw} \\
|\mathfrak{k}_{+}| &\leq &c\varepsilon ^{2},  \label{estimk+}
\end{eqnarray}%
where (\ref{estimk-}), (\ref{estimz}), (\ref{condition1}) and (\ref{basic
estim1}) need to be checked at the end. In fact we have the following

\begin{lemma}
\label{Lem conditions}Assuming that (\ref{estimk-}) and (\ref{estimz}) hold,
then (\ref{condition1}) and (\ref{basic estim1}) are satisfied.
\end{lemma}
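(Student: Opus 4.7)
The plan is to substitute the hypotheses $|k_-|<\varepsilon$ and $|z|<\varepsilon^{1/5}$ into the $\mathcal{D}_0$, $\mathcal{D}_1$ and $\mathbb{R}$ estimates already derived for $(\mathfrak{u},\mathfrak{v})$, $\mathfrak{w}$ and $\mathfrak{k}_+$, then pass from these norm estimates to pointwise bounds via Lemma~\ref{Sobolev}, and finally translate the resulting bounds back through the two changes of variables (\ref{firstvarchange}) and (\ref{changevariable}) to $\widetilde{X}$, $\widetilde{Y}$, $\widetilde{D_0}$.

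The first step is to insert (\ref{estimk-}) and (\ref{estimz}) into (\ref{estim u,v}): the three non-trivial terms $k_-^2/\sqrt{\varepsilon}$, $\varepsilon^{1/10}|k_-|$ and $\varepsilon^{7/10}z^2$ are bounded respectively by $\varepsilon^{3/2}$, $\varepsilon^{11/10}$ and $\varepsilon^{11/10}$, so $\|(\mathfrak{u},\mathfrak{v})\|_{\mathcal{D}_0}\leq c\varepsilon^{1+1/10}$, which together with (\ref{estimw}) and (\ref{estimk+}) is recorded as the bounds already used. Next, Lemma~\ref{Sobolev} applied successively gives
\[
|u^{(m)}(x)|\leq \tfrac{c}{\varepsilon}\|u\|_{\mathcal{D}_0}\leq c\varepsilon^{1/10},\quad m=0,1,2,3,
\]
and similarly $|v^{(m)}(x)|\leq c\varepsilon^{m+1+1/10}$, $|w^{(m)}(x)|\leq c\varepsilon^{m+1+1/10}$ for $m=0,1$.

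The second step combines these pointwise bounds on $(u,v,w)$ with the decomposition (\ref{new var}) and the pointwise estimates of Corollaries~\ref{corollary AA'} and \ref{Corollary stab}. Since $|A_*'(x)|\leq c\varepsilon^{2/5}$ and $|B_*'(x)|\leq c\varepsilon$ uniformly in $x$, the contribution $z A_*^{(m)}$ to $\widehat{A_0}^{(m)}$ is at most $c\varepsilon^{1/5}\cdot\varepsilon^{2/5}=c\varepsilon^{3/5}$ (for $m\geq 1$; and $\leq c|z|$ for $m=0$, still $\ll 1$), while $zB_*'$ contributes at most $c\varepsilon^{1+1/5}$ and $zB_*''$ at most $c\varepsilon^{2+1/5}$. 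Adding these to the bounds on $u^{(m)},v^{(m)},w^{(m)}$ produces the pointwise inequality (\ref{condition1}).

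For (\ref{basic estim1}) one uses (\ref{changevariable}) with the a priori sizes $|\alpha_-|\leq c\varepsilon^2$, $|\beta_-|\leq c\varepsilon^6$, $|\beta_+|\leq c\varepsilon^4$, $|\gamma_+|\leq c\varepsilon^6$ (all resulting from the estimates on $\widetilde\omega_\pm$ recalled after (\ref{condition1})), together with the pointwise bounds on $\widehat{A_0}^{(m)},\widehat{C_0},\widehat{C_0}',\widehat{D_0},\widehat{D_0}'$ just obtained and the bounded derivatives of $\chi_\pm$; the outcome is $|\widetilde X|+|\widetilde Y|\ll 1$, and in particular $|\widetilde{D_0}|\leq c\varepsilon^{1+1/10}\leq \varepsilon$ for $\varepsilon$ small enough. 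The argument is entirely mechanical once the sharp $\mathcal{D}_0$ and $\mathcal{D}_1$ bounds are in place; the only thing to watch is the bookkeeping of the fractional powers $1/5$, $2/5$, $1/10$ so that the dominant contribution $zA_*'$ to $\widehat{A_0}$ remains strictly smaller than the order one threshold, which it does by the explicit choice (\ref{estimz}).
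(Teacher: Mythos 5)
Your proposal is correct and follows essentially the same route as the paper's (much terser) proof: insert (\ref{estimk-}), (\ref{estimz}) into (\ref{estim u,v}), (\ref{estimw}), (\ref{estimk+}), convert to pointwise bounds via Lemma~\ref{Sobolev}, and unwind the decompositions (\ref{new var}) and (\ref{changevariable}) together with the pointwise bounds on $A_\ast, B_\ast$ from Corollaries~\ref{corollary AA'} and~\ref{Corollary stab}. The only quibble is the index bookkeeping in the term $zA_\ast^{(m+1)}$ entering $\widehat{A_0}^{(m)}$ (for $m=3$ one needs $A_\ast^{(4)}$, bounded via the equation), but since $|z|<\varepsilon^{1/5}$ every such contribution is still $\ll 1$, so the conclusion stands.
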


\begin{proof}
Condition (\ref{condition1}) results immediately from the definition (\ref%
{new var}), Lemma \ref{Sobolev} and estimates (\ref{estim u,v}) and (\ref%
{estimw}). Then (\ref{basic estim1}) results from (\ref{changevariable}),
from the same estimates as above, and from (\ref{estimw}).
\end{proof}

\subsection{Final bifurcation equation \label{sect: finalbifurc}}

It remains to satisfy the orthogonality in $L^{2}$ of the right hand side of 
$\mathcal{M}_{g}(\widehat{A_{0}},\widehat{C_{0}})$ with $(A_{\ast }^{\prime
},B_{\ast }^{\prime })$ (see Lemma \ref{Lem Linearoperator}). This provides
one relationship, expressed as the cancelling of a function of $%
(z,k_{-},\varepsilon ),$ from which we extract the family of bifurcating
solutions. It gives%
\begin{eqnarray}
0 &=&\int_{%
%TCIMACRO{\U{211d} }%
%BeginExpansion
\mathbb{R}
%EndExpansion
}[-k_{-}(zA_{\ast }^{\prime \prime \prime }+u^{\prime \prime })+\frac{%
k_{-}^{2}}{4}(zA_{\ast }^{\prime }+u)]A_{\ast }^{\prime }dx+\int_{%
%TCIMACRO{\U{211d} }%
%BeginExpansion
\mathbb{R}
%EndExpansion
}(\widehat{\phi _{0}}+\varphi _{1})A_{\ast }^{\prime }dx  \notag \\
&&+\int_{%
%TCIMACRO{\U{211d} }%
%BeginExpansion
\mathbb{R}
%EndExpansion
}[\frac{2\widetilde{\omega }_{+}}{\varepsilon }w^{\prime }+\widetilde{\omega 
}_{+}^{2}(zB_{\ast }^{\prime }+v)]B_{\ast }^{\prime }dx+\int_{%
%TCIMACRO{\U{211d} }%
%BeginExpansion
\mathbb{R}
%EndExpansion
}(\widehat{\psi _{0r}}+\varphi _{2})B_{\ast }^{\prime }dx.  \label{compat}
\end{eqnarray}%
Let us define 
\begin{equation}
a_{1}=-\int_{%
%TCIMACRO{\U{211d} }%
%BeginExpansion
\mathbb{R}
%EndExpansion
}A_{\ast }^{\prime \prime \prime }A_{\ast }^{\prime }dx=\int_{%
%TCIMACRO{\U{211d} }%
%BeginExpansion
\mathbb{R}
%EndExpansion
}A_{\ast }^{\prime \prime 2}dx>0,\text{ }a_{1}=\mathcal{O}(\varepsilon
^{1/5})  \label{def a1}
\end{equation}%
so that, using Corollaries \ref{corollary AA'}, \ref{Corollary stab} and (%
\ref{estim u,v}), (\ref{estimw}), (\ref{estimk+}), we obtain%
\begin{eqnarray}
&&\int_{%
%TCIMACRO{\U{211d} }%
%BeginExpansion
\mathbb{R}
%EndExpansion
}[-k_{-}(zA_{\ast }^{\prime \prime \prime }+u^{\prime \prime })+\frac{%
k_{-}^{2}}{4}(zA_{\ast }^{\prime }+u)]A_{\ast }^{\prime }dx  \notag \\
&=&a_{1}k_{-}z+\mathcal{O}(\varepsilon ^{2+1/10}|k_{-}|+\varepsilon
^{1/5}k_{-}^{2}+\frac{|k_{-}^{3}|}{\varepsilon ^{2/5}}+\varepsilon
^{4/5}|k_{-}|z^{2}),  \label{estimbifurc a}
\end{eqnarray}%
\begin{equation}
\int_{%
%TCIMACRO{\U{211d} }%
%BeginExpansion
\mathbb{R}
%EndExpansion
}[\frac{2\widetilde{\omega }_{+}}{\varepsilon }w^{\prime }+\widetilde{\omega 
}_{+}^{2}(zB_{\ast }^{\prime }+v)]B_{\ast }^{\prime }dx=\mathcal{O}%
(\varepsilon ^{3+1/10}).  \label{estimbifurc b}
\end{equation}%
From (\ref{phi1+phi2}) we also have%
\begin{equation}
\int_{%
%TCIMACRO{\U{211d} }%
%BeginExpansion
\mathbb{R}
%EndExpansion
}\varphi _{1}(k_{-})A_{\ast }^{\prime }dx+\int_{%
%TCIMACRO{\U{211d} }%
%BeginExpansion
\mathbb{R}
%EndExpansion
}\varphi _{2}(k_{-})B_{\ast }^{\prime }dx=a_{2}\frac{k_{-}^{2}}{4}%
+a_{3}\sigma _{0}\varepsilon ^{2}k_{-}+\mathcal{O}(|k_{-}^{3}|+\varepsilon
^{2}k_{-}^{2}+\varepsilon ^{4}).  \label{estimbifurc c}
\end{equation}%
We have, from (\ref{phi0hat}), (\ref{psi0hat}), (\ref{estim u,v a}), (\ref%
{estimw}), (\ref{estimk+}), (\ref{estimk-}), (\ref{estimz}) and Remark \ref%
{remark estim f00A'*}%
\begin{equation}
\int_{%
%TCIMACRO{\U{211d} }%
%BeginExpansion
\mathbb{R}
%EndExpansion
}\widehat{\phi _{0}}A_{\ast }^{\prime }dx=z^{2}[a_{0}^{\prime }+\mathcal{O}%
(\varepsilon ^{8/5})]+\sigma _{0}^{\prime }\varepsilon ^{2}k_{-}+\mathcal{O}%
[\varepsilon ^{2+4/5}+\varepsilon ^{3/2}k_{-}^{2}+\varepsilon
^{1/5}|z|(\varepsilon ^{2}+k_{-}^{2})],  \label{estim bifurc d}
\end{equation}%
with%
\begin{eqnarray}
a_{0}^{\prime } &=&\int_{%
%TCIMACRO{\U{211d} }%
%BeginExpansion
\mathbb{R}
%EndExpansion
}(3A_{\ast }A_{\ast }^{\prime 3}+2gB_{\ast }B_{\ast }^{\prime }A_{\ast
}^{\prime 2}+gA_{\ast }A_{\ast }^{\prime }B_{\ast }^{\prime 2})dx+\mathcal{O}%
(\varepsilon ^{8/5})=\mathcal{O}(\varepsilon ^{4/5}),  \notag \\
\sigma _{0}^{\prime } &=&\sigma _{0}\int_{%
%TCIMACRO{\U{211d} }%
%BeginExpansion
\mathbb{R}
%EndExpansion
}A_{\ast }^{\prime }(A_{\ast }^{3}-\chi _{-})dx+\mathcal{O}(\varepsilon
^{2+1/10})=\sigma _{0}[\frac{3}{4}+\mathcal{O}(\varepsilon ^{2/5})],
\label{def sigma'0}
\end{eqnarray}%
where (for example) the estimated term in $\varepsilon ^{2+4/5}$ comes from 
\begin{equation}
\varepsilon ^{2}(d_{2}-d_{4})\int_{%
%TCIMACRO{\U{211d} }%
%BeginExpansion
\mathbb{R}
%EndExpansion
}A_{\ast }A_{\ast }^{\prime 3}dx\leq c\varepsilon ^{2+4/5},  \label{a'3}
\end{equation}%
occuring (see Remark \ref{remark estim f00A'*}) in $\int_{%
%TCIMACRO{\U{211d} }%
%BeginExpansion
\mathbb{R}
%EndExpansion
}\widehat{f_{00}}A_{\ast }dx.$

We also obtain%
\begin{eqnarray}
\int_{%
%TCIMACRO{\U{211d} }%
%BeginExpansion
\mathbb{R}
%EndExpansion
}\widehat{\psi _{0r}}B_{\ast }^{\prime }dx &=&z^{2}a_{0}^{\prime \prime }+%
\mathcal{O}(\varepsilon ^{3+1/5}+\varepsilon
^{2+3/5}|z|+k_{-}^{4}+\varepsilon ^{1+1/5}k_{-}^{2}  \label{estim bifurc e}
\\
&&+\varepsilon ^{11/20}|z|k_{-}^{2}+\varepsilon ^{1+3/20}|k_{-}||z|),  \notag
\end{eqnarray}%
with%
\begin{equation*}
a_{0}^{\prime \prime }=\int_{%
%TCIMACRO{\U{211d} }%
%BeginExpansion
\mathbb{R}
%EndExpansion
}(gB_{\ast }B_{\ast }^{\prime }A_{\ast }^{\prime 2}+2gA_{\ast }A_{\ast
}^{\prime }B_{\ast }^{\prime 2}+B_{\ast }B_{\ast }^{\prime 3})dx+\mathcal{%
O(\varepsilon }^{1+3/4})=\mathcal{O}(\varepsilon ^{1+1/5}).
\end{equation*}%
Hence collecting (\ref{estimbifurc a}), (\ref{estimbifurc b}), (\ref%
{estimbifurc c}), (\ref{estim bifurc d}), (\ref{estim bifurc e}), and using
a priori estimates (\ref{estimk-}), (\ref{estimz}), we obtain the
bifurcation equation, in identifying main orders of independent coefficients,%
\begin{equation}
a_{0}z^{2}+a_{1}^{\prime }k_{-}z+a_{2}^{\prime }\frac{k_{-}^{2}}{4}%
+a_{3}^{\prime }\varepsilon ^{2}k_{-}+a_{4}\varepsilon
^{2+1/5}z+a_{5}\varepsilon ^{2+4/5}=0,  \label{bifurcationequation}
\end{equation}%
where we define 
\begin{equation}
a_{0}=a_{0}^{\prime }+a_{0}^{\prime \prime }+\mathcal{O}(\varepsilon
^{1+4/5})=\mathcal{O}(\varepsilon ^{4/5}).  \label{def a0}
\end{equation}%
Using Corollaries \ref{corollary AA'} and \ref{Corollary stab}, we notice
that the main contribution of this coefficient is precisely 
\begin{equation*}
a_{0}\sim \int_{-\infty }^{0}3A_{\ast }A_{\ast }^{\prime 3}dx=\mathcal{O}%
(\varepsilon ^{4/5}).
\end{equation*}%
From (\ref{def a1}), (\ref{phi1+phi2}), (\ref{def a0}) and (\ref{a'3}) we
obtain 
\begin{eqnarray}
a_{0} &=&\varepsilon ^{4/5}\overline{a_{0}}=\mathcal{O}(\varepsilon ^{4/5}) 
\notag \\
a_{1}^{\prime } &=&\int_{%
%TCIMACRO{\U{211d} }%
%BeginExpansion
\mathbb{R}
%EndExpansion
}A_{\ast }^{\prime \prime 2}dx+\mathcal{O}(\varepsilon ^{1/2})=\mathcal{O}%
(\varepsilon ^{1/5})  \label{coef equ bifurc} \\
a_{2}^{\prime } &=&a_{2}+\mathcal{O}(\varepsilon ^{3/2})=-3/2+\mathcal{O}%
(\varepsilon ^{2/5})  \notag \\
a_{3}^{\prime } &=&a_{3}\sigma _{0}+\sigma _{0}^{\prime }+\mathcal{O}%
(\varepsilon ^{1/10})=\frac{19}{4}\sigma _{0}+\mathcal{O}(\varepsilon
^{1/10}),  \notag \\
a_{5}\varepsilon ^{4/5} &\sim &(d_{2}-d_{4})\int_{%
%TCIMACRO{\U{211d} }%
%BeginExpansion
\mathbb{R}
%EndExpansion
}A_{\ast }A_{\ast }^{\prime 3}dx\sim \frac{(d_{2}-d_{4})}{3}a_{0}=\mathcal{O}%
(\varepsilon ^{4/5}).  \notag
\end{eqnarray}

The discriminant of the principal part of the quadratic form in $(z,k_{-})$
of the left hand side of (\ref{bifurcationequation}) is%
\begin{equation}
\Delta =a_{1}^{\prime 2}-a_{0}a_{2}^{\prime }=a_{1}^{\prime 2}+\mathcal{O}%
(\varepsilon ^{4/5})=\mathcal{O}(\varepsilon ^{2/5})  \label{discriminant}
\end{equation}%
which \emph{it is positive}. The bifurcation equation (\ref%
{bifurcationequation}) may then be written as%
\begin{equation}
\left( \frac{a_{2}^{\prime }k_{-}}{2}+a_{1}^{\prime }z+a_{3}^{\prime
}\varepsilon ^{2}\right) ^{2}-\Delta \left( z+\frac{a_{3}^{\prime \prime
}\varepsilon ^{2}}{\Delta }\right) ^{2}=-a_{2}^{\prime }a_{5}\varepsilon
^{2+4/5}+\mathcal{O}(\varepsilon ^{3+3/5})  \notag
\end{equation}%
where%
\begin{equation*}
a_{3}^{\prime \prime }=a'_1a_{3}^{\prime }-\frac{a_{4}a'_2}{2}\varepsilon ^{1/5}=\mathcal{O}(\varepsilon^{1/5}).
\end{equation*}%
Using the implicit function theorem, we obtain a family of solutions such
that $z$ and $k_{-}$ are given by (notice that $a'_1=\mathcal{O}(\varepsilon^{1/5})$)

i) if $a_{5}<0$%
\begin{eqnarray}
z &=&\sqrt{\frac{-3a_{5}}{2}}\frac{\varepsilon ^{1+2/5}}{a_{1}^{\prime }}%
\cosh \phi +\mathcal{O}(\varepsilon ^{1+2/5}).  \notag \\
k_{-} &=&2\sqrt{\frac{-2a_{5}}{3}}\varepsilon ^{1+2/5}\exp (-\phi )+\mathcal{%
O}(\varepsilon ^{1+3/5}).  \label{solution a5>0 Delta>0} \\
\phi &\in &%
%TCIMACRO{\U{211d} }%
%BeginExpansion
\mathbb{R}
%EndExpansion
;  \notag
\end{eqnarray}

ii) if $a_{5}>0$%
\begin{eqnarray}
z &=&\frac{1}{a_{1}^{\prime }}\sqrt{\frac{3a_{5}}{2}}\varepsilon
^{1+2/5}\sinh \phi +\mathcal{O}(\varepsilon ^{1+2/5})  \notag \\
k_{-} &=&-2\sqrt{\frac{2a_{5}}{3}}\varepsilon ^{1+2/5}\exp (-\phi )+\mathcal{%
O}(\varepsilon ^{1+3/5})  \label{solution a5<0 Delta>0} \\
\phi  &\in &%
%TCIMACRO{\U{211d} }%
%BeginExpansion
\mathbb{R}
%EndExpansion
.  \notag
\end{eqnarray}%
For $\varepsilon $ small enough, we notice that the principal part of the
solution only depends on $g$ and on coefficient $(d_{2}-d_{4})$ of the cubic
normal form (\ref{f0hat}). The above estimates on $u,v,w,z,k_{-}$ and Lemma %
\ref{Sobolev} imply that the conditions (\ref{estimk-}), (\ref{estimz}), are
satisfied for $\exp |\phi |\leq \varepsilon ^{-2/5}$. So, Lemma \ref{Lem
conditions} applies and Theorem \ref{theor walls} is then proved.

\begin{remark}
\label{link}It should be noted that the one parameter family of solutions
which are obtained for a fixed $\varepsilon $, correspond to convective
rolls at $-\infty $ with wave numbers 
\begin{equation*}
k_{c}(1+\varepsilon ^{2}k_{-})
\end{equation*}%
connected to convective rolls at $+\infty $ with wave numbers 
\begin{equation*}
k_{c}(1+2\varepsilon ^{2}\widetilde{\omega }_{+}).
\end{equation*}%
The calculations made above, show that we obtain $\widetilde{\omega }_{+}$
and $k_{-}$ as functions of $\varepsilon ,\phi $ where $\phi \in 
%TCIMACRO{\U{211d} }%
%BeginExpansion
\mathbb{R}
%EndExpansion
$ such that $\exp |\phi |\leq \varepsilon ^{-2/5}.$ This is a one parameter
family of relationships between wave numbers at each infinity, depending on
the amplitude $\varepsilon ^{2}$ of rolls.
\end{remark}

\begin{remark}
\label{finalrmk} We might examine the limit size of $k_{-}.$ For example, is
it possible to obtain the case $k_{-}=k_{+}=2\widetilde{\omega }_{+}=%
\mathcal{O}(\varepsilon ^{2})?$ Then, looking at the bifurcation equation we
need to solve at main orders%
\begin{equation*}
(\overline{a_{0}}z^{2}+a_{5}\varepsilon ^{2})\varepsilon ^{4/5}=\mathcal{O}%
(\varepsilon ^{3+1/5}).
\end{equation*}%
Since $a_{5}\sim \frac{(d_{2}-d_{4})}{3}a_{0}$, this is only possible with $%
z\sim \varepsilon \sqrt{\frac{d_{4}-d_{2}}{3}}$ provided that 
\begin{equation*}
d_{4}-d_{2}>0,
\end{equation*}%
which coefficient of the cubic normal form (\ref{f0hat}) is a function of
the Prandtl number.
\end{remark}

\appendix

\section{Appendix}

\subsection{Reduction of the normal form\label{App1}}

We start with the N-S-B steady system of PDE's, applying spatial dynamics
with $x$ as "time" and considering solutions $2\pi /k$ periodic in $y$
(coordinate parallel to the wall). We show in \cite{BHI} that near
criticality a 12-dimensional center manifold reduction to a reversible
system applies for $(\mu ,k)$ close to $(0,k_{c}),$ where $\mu $ is $%
\mathcal{R}^{1/2}-\mathcal{R}_{c}^{1/2}$ ($\mathcal{R}$ is the Rayleigh
number), and $k_{c}$ the critical wave number. Then restricting the system
to solutions symmetric in $y$, the full system reduces to a 8-dimensional
one such as ($A_{0}$ (real) and $B_{0}$ are the amplitudes of the rolls
respectively at $x=-\infty ,$ and $x=+\infty ).$ Let us define

\begin{eqnarray*}
X &=&(A_{0},A_{1},A_{2},A_{3})^{t}\in 
%TCIMACRO{\U{211d} }%
%BeginExpansion
\mathbb{R}
%EndExpansion
^{4}, \\
Y &=&(B_{0},B_{1})^{t}\in 
%TCIMACRO{\U{2102} }%
%BeginExpansion
\mathbb{C}
%EndExpansion
^{2}, \\
k &=&k_{c}(1+\widetilde{k}),
\end{eqnarray*}%
so that the system may be written under normal form as (see \cite{BHI} )%
\begin{eqnarray}
\frac{dX}{dx} &=&LX+N(X,Y,\overline{Y},\mu ,\widetilde{k})+F(X,Y,\overline{Y}%
,\mu ,\widetilde{k}),  \label{full syst} \\
\frac{dY}{dx} &=&L_{k_{c}}Y+M(X,Y,\overline{Y},\mu )+G(X,Y,\overline{Y},\mu
),  \notag
\end{eqnarray}%
with%
\begin{eqnarray*}
LX &=&(A_{1},A_{2},A_{3},0)^{t}, \\
L_{k_{c}}Y &=&(ik_{c}B_{0}+B_{1},ik_{c}B_{1})^{t}.
\end{eqnarray*}%
The (reversible) system (\ref{full syst}) anticomutes with the symmetry $%
\mathbf{S}_{1}$ (representing the reflection $x\mapsto -x$)$.$ and commutes
with $\mathbf{\tau }_{\pi }$ (shift by half of one period in $y$ direction)$%
: $%
\begin{eqnarray*}
(A_{0},A_{1},A_{2},A_{3},B_{0},B_{1}) &\mapsto &^{\mathbf{S}%
_{1}}(A_{0},-A_{1},A_{2},-A_{3},\overline{B_{0}},-\overline{B_{1}}), \\
(A_{0},A_{1},A_{2},A_{3},B_{0},B_{1}) &\mapsto &^{\mathbf{\tau }_{\pi
}}(-A_{0},-A_{1},-A_{2},-A_{3},B_{0},B_{1}).
\end{eqnarray*}

\begin{remark}
We don't use the vertical symmetry $z\mapsto 1-z$ here (valid only in
rigid-rigid or free-free boundaries). In the case of rigid-free boundary
conditions, we have no such symmetry. The symmetry $\mathbf{\tau }_{\pi }$
implies that $F$ is odd in $X$ and $G$ even in $X$. Moreover it can be shown
that there is no term of degree 4 in $X,Y,\overline{Y}$ in the normal form.
\end{remark}

Then we obtain the estimates for $F$ and $G$ which are $C^{m}-$ smooth in
their arguments close to $0$, with $m$ as large as we need, and%
\begin{eqnarray}
|F(X,Y,\overline{Y},\mu ,\widetilde{k})| &\leq &c|X|(|X|^{2}+|Y|^{2}+|%
\widetilde{k}|+|\mu |)^{2}  \notag \\
|G(X,Y,\overline{Y},\mu )| &\leq &c(|X|^{2}+|Y|)(|X|^{2}+|Y|^{2}+|\mu |)^{2},
\label{estimG}
\end{eqnarray}%
and the normal form is (see\cite{BHI})%
\begin{equation*}
N(X,Y,\overline{Y},\mu )=\left( 
\begin{array}{c}
0 \\ 
A_{0}P_{1} \\ 
A_{1}P_{1}+c_{8}u_{8}+c_{13}u_{13} \\ 
A_{2}P_{1}+A_{0}P_{3}++c_{8}v_{8}+c_{13}v_{13}+d_{14}u_{14}%
\end{array}%
\right) ,
\end{equation*}%
\begin{equation*}
M(X,Y,\overline{Y},\mu )=\left( 
\begin{array}{c}
iB_{0}Q_{0}+\alpha _{10}u_{10} \\ 
iB_{1}Q_{0}+B_{0}Q_{1}+\alpha _{10}v_{10}+i\beta _{10}u_{10}+i\beta
_{12}u_{12}%
\end{array}%
\right) ,
\end{equation*}%
\begin{eqnarray}
P_{1} &=&b_{0}\mu +b_{0}^{\prime }\widetilde{k}%
+b_{1}u_{1}+b_{3}u_{3}+b_{5}u_{5}+b_{6}u_{6},  \notag \\
P_{3} &=&d_{0}\mu +d_{0}^{\prime \prime }\widetilde{k}%
^{2}+d_{1}u_{1}+d_{1}^{\prime }\widetilde{k}%
u_{1}+d_{3}u_{3}+d_{5}u_{5}+d_{6}u_{6},  \notag
\end{eqnarray}%
\begin{eqnarray*}
Q_{0} &=&\alpha _{0}\mu +\alpha _{1}u_{1}+\alpha _{3}u_{3}+\alpha
_{5}u_{5}+\alpha _{6}u_{6} \\
Q_{1} &=&\beta _{0}\mu +\beta _{1}u_{1}+\beta _{3}u_{3}+\beta
_{5}u_{5}+\beta _{6}u_{6},
\end{eqnarray*}%
where%
\begin{eqnarray}
u_{1} &=&A_{0}^{2},\text{ \ }v_{1}=A_{0}A_{1},\text{ \ }w_{1}=\frac{1}{2}%
A_{1}^{2},  \notag \\
u_{3} &=&2A_{0}A_{2}-A_{1}^{2},\text{ \ }v_{3}=3A_{0}A_{3}-A_{1}A_{2}  \notag
\\
u_{5} &=&B_{0}\overline{B_{0}},\text{ \ }v_{5}=\frac{1}{2}(B_{0}\overline{%
B_{1}}+\overline{B_{0}}B_{1}),\text{ \ }w_{5}=\frac{1}{2}B_{1}\overline{B_{1}%
}  \notag \\
u_{6} &=&i(B_{0}\overline{B_{1}}-\overline{B_{0}}B_{1}).  \notag
\end{eqnarray}%
\begin{eqnarray}
u_{8} &=&A_{0}v_{3}-A_{1}u_{3},\text{ \ }v_{8}=A_{1}v_{3}-2A_{2}u_{3}, 
\notag \\
u_{13} &=&A_{0}v_{5}-A_{1}u_{5},\text{ \ }v_{13}=A_{0}w_{5}-A_{2}u_{5}, 
\notag \\
u_{14} &=&A_{0}w_{5}+A_{2}u_{5}-A_{1}v_{5},  \notag
\end{eqnarray}%
\begin{eqnarray*}
u_{10} &=&B_{0}v_{1}-B_{1}u_{1},\text{ \ }v_{10}=2B_{0}w_{1}-B_{1}v_{1} \\
u_{12} &=&B_{0}v_{3}-B_{1}u_{3}.
\end{eqnarray*}

Then, the $X$ part of the system (\ref{full syst}) may be written as a 4th
order real ODE, while the $Y$ part becomes a 2nd order complex ODE as%
\begin{eqnarray*}
A_{0}^{(4)} &=&A_{0}[d_{0}\mu +(d_{0}^{\prime \prime }-b_{0}^{\prime 2})%
\widetilde{k}^{2}+d_{1}A_{0}^{2}+d_{1}^{\prime }\widetilde{k}A_{0}^{2}+d_{5}%
\widetilde{B_{0}}\overline{\widetilde{B_{0}}}+d_{1}^{\prime }\widetilde{k}%
A_{0}^{2} \\
&&+id_{6}(\widetilde{B_{0}}\overline{\widetilde{B_{0}}}^{\prime }-\overline{%
\widetilde{B_{0}}}\widetilde{B_{0}}^{\prime })]+(a_{0}\mu +3b_{0}^{\prime }%
\widetilde{k})A_{0}^{\prime \prime }+a_{1}A_{0}^{2}A_{0}^{\prime \prime
}+a_{2}A_{0}A_{0}^{\prime 2} \\
&&+a_{3}A_{0}\widetilde{B_{0}}^{\prime }\overline{\widetilde{B_{0}}}^{\prime
}+a_{4}A_{0}^{\prime }(\widetilde{B_{0}}\overline{\widetilde{B_{0}}}^{\prime
}+\overline{\widetilde{B_{0}}}\widetilde{B_{0}}^{\prime
})+a_{5}A_{0}^{\prime \prime }\widetilde{B_{0}}\overline{\widetilde{B_{0}}}
\\
&&+3ib_{6}A_{0}^{\prime \prime }(\widetilde{B_{0}}\overline{\widetilde{B_{0}}%
}^{\prime }-\overline{\widetilde{B_{0}}}\widetilde{B_{0}}^{\prime
})+a_{6}A_{0}A_{0}^{\prime }A_{0}^{\prime \prime \prime
}+a_{7}A_{0}A_{0}^{\prime \prime 2}+a_{8}A_{0}^{\prime 2}A_{0}^{\prime
\prime }+\mathcal{O}_{X}(5),
\end{eqnarray*}%
\begin{eqnarray*}
\widetilde{B_{0}}^{\prime \prime } &=&\widetilde{B_{0}}[\beta _{0}\mu +\beta
_{1}A_{0}^{2}+\beta _{5}\widetilde{B_{0}}\overline{\widetilde{B_{0}}}]+ic_{1}%
\widetilde{B_{0}}^{\prime }A_{0}^{2}+ic_{2}\widetilde{B_{0}}^{\prime }|%
\widetilde{B_{0}}|^{2}+ic_{3}\overline{\widetilde{B_{0}}}^{\prime }%
\widetilde{B_{0}}^{2} \\
&&+2i\alpha _{0}\mu \widetilde{B_{0}}^{\prime }+ic_{4}\widetilde{B_{0}}%
A_{0}A_{0}^{\prime }-2\alpha _{6}\widetilde{B_{0}}^{\prime }(\widetilde{B_{0}%
}\overline{\widetilde{B_{0}}}^{\prime }-\overline{\widetilde{B_{0}}}%
\widetilde{B_{0}}^{\prime }) \\
&&+c_{5}\widetilde{B_{0}}A_{0}A_{0}^{\prime \prime }+c_{6}\widetilde{B_{0}}%
A_{0}^{\prime 2}+c_{7}\widetilde{B_{0}}^{\prime }A_{0}A_{0}^{\prime }+ic_{8}%
\widetilde{B_{0}}A_{0}A_{0}^{\prime \prime \prime } \\
&&ic_{9}\widetilde{B_{0}}^{\prime }A_{0}A_{0}^{\prime \prime }+ic_{10}%
\widetilde{B_{0}}^{\prime }A_{0}^{\prime 2}+ic_{11}\widetilde{B_{0}}%
A_{0}^{\prime }A_{0}^{\prime \prime }+\mathcal{O}_{Y}(5),
\end{eqnarray*}%
with real coefficients $d_{j},d_{1}^{\prime }d_{0}^{\prime \prime
},a_{j},b_{j},b_{0}^{\prime },c_{j},\beta _{j},\alpha _{j}$ and%
\begin{equation}
\widetilde{B_{0}}=B_{0}e^{-ik_{c}x},\text{ }\widetilde{B_{1}}%
=B_{1}e^{-ik_{c}x},  \label{B0tilde}
\end{equation}%
\begin{eqnarray*}
d_{0} &=&-4k_{c}^{2}\beta _{0}>0,\text{ }d_{1}=-4k_{c}^{2}\beta _{5}<0,\text{
} \\
\frac{\beta _{1}}{\beta _{5}} &=&\frac{d_{5}}{d_{1}}:=g>0,\text{ }%
b_{0}^{\prime }=\frac{4k_{c}^{2}}{3},d_{0}^{\prime \prime }=-\frac{20}{9}%
k_{c}^{4},
\end{eqnarray*}%
\begin{eqnarray*}
\mathcal{O}_{X}(5) &=&\mathcal{O}(|X|(|X|^{2}+|Y|^{2}+\widetilde{k}^{2}+|\mu
|)^{2}), \\
\mathcal{O}_{Y}(5) &=&\mathcal{O[}(|X|^{2}+|Y|)(|X|^{2}+|Y|^{2}+|\mu |)^{2}],
\\
X &=&(A_{0},A_{0}^{\prime },A_{0}^{\prime \prime },A_{0}^{\prime \prime
\prime })^{t} \\
Y &=&(\widetilde{B_{0}},\widetilde{B_{0}}^{\prime }).
\end{eqnarray*}%
Notice that the high order rests $\mathcal{O}_{X}(5)$ and $\mathcal{O}%
_{Y}(5) $ are no longer autonomous, since they are functions of $e^{\pm
ik_{c}x}.$

Now, as indicated in \cite{BHI} we make the following scaling%
\begin{eqnarray}
x &=&\frac{1}{2\varepsilon k_{c}}\widetilde{x},\text{ }\mu =\frac{4k_{c}^{2}%
}{-\beta _{0}}\varepsilon ^{4},\text{ }\widetilde{k}=\varepsilon ^{2}k_{-}\ 
\label{scaling} \\
A_{0}(x) &=&\frac{2k_{c}}{\sqrt{\beta _{5}}}\varepsilon ^{2}\widetilde{A_{0}}%
(\widetilde{x}),\text{ }\widetilde{B_{0}}(x)=\frac{2k_{c}}{\sqrt{\beta _{5}}}%
\varepsilon ^{2}\widetilde{\widetilde{B_{0}}}(\widetilde{x}),  \notag
\end{eqnarray}%
so that the system above becomes, after suppressing the tildes,%
\begin{eqnarray}
A_{0}^{(4)} &=&k_{-}A_{0}^{\prime \prime }+A_{0}(1-\frac{k_{-}^{2}}{4}%
-A_{0}^{2}-g|B_{0}|^{2})+\widehat{f},  \notag \\
B_{0}^{\prime \prime } &=&\varepsilon ^{2}B_{0}(-1+gA_{0}^{2}+|B_{0}|^{2})+%
\widehat{g},  \label{new reduced syst}
\end{eqnarray}%
with additional cubic terms of the form (changing the definitions of
coefficients)%
\begin{eqnarray*}
\widehat{f} &=&id_{1}\varepsilon A_{0}(B_{0}\overline{B_{0}}^{\prime }-%
\overline{B_{0}}B_{0}^{\prime })+\sigma _{0}\varepsilon
^{2}k_{-}A_{0}^{3}+\varepsilon ^{2}[d_{3}A_{0}^{\prime \prime
}+d_{4}A_{0}^{2}A_{0}^{\prime \prime }+d_{2}A_{0}A_{0}^{\prime
2}+d_{6}A_{0}|B_{0}^{\prime }|^{2} \\
&&+d_{7}A_{0}^{\prime }(B_{0}\overline{B_{0}}^{\prime }+\overline{B_{0}}%
B_{0}^{\prime })+d_{5}A_{0}^{\prime \prime }|B_{0}|^{2}]+id_{8}\varepsilon
^{3}A_{0}^{\prime \prime }(B_{0}\overline{B_{0}}^{\prime }-\overline{B_{0}}%
B_{0}^{\prime })+\mathcal{O}(\varepsilon ^{4}),
\end{eqnarray*}%
\begin{eqnarray*}
\widehat{g} &=&\varepsilon ^{3}[ic_{0}B_{0}^{\prime }+ic_{1}B_{0}^{\prime
}|A_{0}|^{2}+ic_{2}B_{0}^{\prime }|B_{0}|^{2}+ic_{3}B_{0}^{2}\overline{B_{0}}%
^{\prime }+ic_{9}B_{0}A_{0}A_{0}^{\prime }] \\
&&+\varepsilon ^{4}[c_{4}B_{0}^{\prime }(B_{0}\overline{B_{0}}^{\prime }-%
\overline{B_{0}}B_{0}^{\prime })+c_{5}B_{0}A_{0}A_{0}^{\prime \prime
}+c_{6}B_{0}A_{0}^{\prime 2}+c_{7}B_{0}^{\prime }A_{0}A_{0}^{\prime }] \\
&&+\varepsilon ^{5}[ic_{8}B_{0}A_{0}A_{0}^{\prime \prime \prime
}+ic_{7}B_{0}^{\prime }A_{0}A_{0}^{\prime \prime }+ic_{10}B_{0}^{\prime
}A_{0}^{\prime 2}+ic_{11}B_{0}A_{0}^{\prime }A_{0}^{\prime \prime }+\mathcal{%
O}(\varepsilon ^{6}).
\end{eqnarray*}

\subsection{Equilibrium solution at $x=-\infty \label{App3}$}

Let us look for equilibria of (\ref{reduced syst a}), which should
correspond to the convective rolls at $x=-\infty $ parallel to $x$ - axis.
Cancelling all derivatives with respect to $x,$ we obtain a system commuting
with the symmetry $(A_{0},B_{0})\mapsto (A_{0},\overline{B_{0}}).$ It then
results a system of 2 real equations for $A_{0},B_{0}:$%
\begin{eqnarray*}
A_{0}(1-\frac{k_{-}^{2}}{4}-A_{0}^{2}+\sigma _{0}\varepsilon
^{2}k_{-}A_{0}^{2}-gB_{0}^{2})+\mathcal{O}(\varepsilon ^{4}) &=&0 \\
B_{0}(-1+gA_{0}^{2}+B_{0}^{2})+\mathcal{O}(\varepsilon ^{4}) &=&0,
\end{eqnarray*}%
where we may observe that the terms $\mathcal{O}(\varepsilon ^{4})$ in the
second equation contain at least terms of degree 1 in $B_{0},$ since they
come from terms of order 5 in $(A_{0},B_{0},\overline{B_{0}}).$ The first
terms not containing $B_{0}$ may be found at order 6 in $A_{0},$ which makes
order $\varepsilon ^{6}$ after the scaling (\ref{scaling}) in the rest
(12-6=6).

It then results that the equilibrium that we are looking for satisfies (by
implicit function theorem)%
\begin{eqnarray*}
A_{0}^{2} &=&1-\frac{k_{-}^{2}}{4}+\sigma _{0}\varepsilon ^{2}k_{-}+\mathcal{%
O}(\varepsilon ^{2}|k_{-}|^{3}+\varepsilon ^{4}), \\
B_{0} &=&\mathcal{O}(\varepsilon ^{6}).
\end{eqnarray*}

\begin{remark}
In the cases where vertical symmetry $z\mapsto 1-z$ applies, the additional
symmetry $S_{0}$ changes the signs of $A_{0}$ and $B_{0},$ implying that $%
Y=0 $ is an invariant subspace, so that in such cases $B_{0}=0$ for the
equilibrium at $-\infty .$
\end{remark}

\subsection{Periodic solution in $M_{+}$\label{App2}}

Let us consider the 4-dimensional reversible vector field corresponding to
the system (\ref{full syst}) with $X=0$ and rescaled. We intend to give
precise estimates on the family of periodic bifurcating solutions $%
B_{0}^{(+\infty )}(k_{+},x)$, here corresponding to the periodic convecting
rolls at infinity in $M_{+}$ with wave numbers close to $k_{c}$ (becomes $%
1/2\varepsilon $ after the scaling (\ref{scaling})).

Since we use the normal form up to cubic order, and since there is no term
of order 4, it takes the form (after the scaling used in \cite{BHI}, but
before we incorporate $e^{\frac{ix}{2\varepsilon }}$ in $B_{0},$ so that the
system is still autonomous):%
\begin{eqnarray}
\frac{dB_{0}}{dx} &=&\frac{i}{2\varepsilon }B_{0}+B_{1}+i\varepsilon
^{3}B_{0}P+\mathcal{\varepsilon }^{7}g_{0}(\varepsilon ,Y,\overline{Y})
\label{1-1 resonance syst} \\
\frac{dB_{1}}{dx} &=&\frac{i}{2\varepsilon }B_{1}+\varepsilon
^{2}B_{0}Q+i\varepsilon ^{3}B_{1}P+\mathcal{\varepsilon }^{6}g_{1}(%
\varepsilon ,Y,\overline{Y}),  \notag
\end{eqnarray}%
with%
\begin{eqnarray*}
Y &=&(B_{0},B_{1}) \\
P &=&\alpha +\beta |B_{0}|^{2}+\varepsilon \gamma K \\
Q &=&-1+|B_{0}|^{2}+\varepsilon \delta K \\
K &=&\frac{i}{2}(B_{0}\overline{B_{1}}-\overline{B_{0}}B_{1})
\end{eqnarray*}%
where we are looking for a periodic solution $(B_{0},B_{1})$, with wave
number $\omega $ close to $\frac{1+\varepsilon ^{2}k_{+}}{2\varepsilon }.$

\subsubsection{Principal part}

Let us first compute periodic solutions for $g_{0}=g_{1}\equiv 0.$ Then
these small terms will be perturbations treated by an adapted implicit
function theorem.

Without $g_{0}$ and $g_{1},$ let us use polar coordinates (see \cite{HIbook}
section 4.3.3) 
\begin{eqnarray*}
B_{0} &=&r_{0}e^{i\theta _{0}} \\
B_{1} &=&ir_{1}e^{i\theta _{1}}
\end{eqnarray*}%
then%
\begin{eqnarray*}
K &=&r_{0}r_{1}\cos (\theta _{0}-\theta _{1})=\text{const} \\
\frac{dr_{0}}{dx} &=&r_{1}\sin (\theta _{0}-\theta _{1}) \\
\frac{dr_{1}}{dx} &=&\varepsilon ^{2}r_{0}\sin (\theta _{0}-\theta
_{1})Q(\varepsilon ,r_{0}^{2},K) \\
r_{0}\frac{d\theta _{0}}{dx} &=&\frac{r_{0}}{2\varepsilon }+r_{1}\cos
(\theta _{0}-\theta _{1})+\varepsilon ^{3}r_{0}P \\
r_{1}\frac{d\theta _{1}}{dx} &=&\frac{r_{1}}{2\varepsilon }-\varepsilon
^{2}r_{0}\cos (\theta _{0}-\theta _{1})Q(\varepsilon
,r_{0}^{2},K)+\varepsilon ^{3}r_{1}P.
\end{eqnarray*}%
The required periodic solutions correspond to 
\begin{eqnarray*}
&&r_{0}\text{ and }r_{1}\text{ const} \\
\theta _{0} &=&\theta _{1},\text{ }\frac{d\theta _{0}}{dx}=\frac{%
1+\varepsilon ^{2}k_{+}}{2\varepsilon } \\
K &=&r_{0}r_{1},
\end{eqnarray*}%
hence%
\begin{eqnarray}
\frac{\varepsilon k_{+}}{2} &=&\frac{r_{1}}{r_{0}}+\varepsilon ^{3}P
\label{k+} \\
(\frac{r_{1}}{r_{0}})^{2} &=&-\varepsilon ^{2}Q.  \label{r1/r0}
\end{eqnarray}%
Solving (\ref{k+}) with respect to $r_{1}$ gives%
\begin{eqnarray*}
r_{1} &=&\varepsilon r_{0}\frac{k_{+}-2\varepsilon ^{2}(\alpha +\beta
r_{0}^{2})}{2(1+\varepsilon ^{4}\gamma r_{0}^{2})} \\
&=&\frac{\varepsilon r_{0}}{2}[k_{+}-2\varepsilon ^{2}(\alpha +\beta
r_{0}^{2})](1+\mathcal{O}(\varepsilon ^{4})),
\end{eqnarray*}%
and (\ref{r1/r0}) leads to%
\begin{equation*}
\frac{1}{4}[k_{+}-2\varepsilon ^{2}(\alpha +\beta r_{0}^{2})]^{2}+\frac{%
\varepsilon ^{2}\delta r_{0}^{2}}{2}[k_{+}-2\varepsilon ^{2}(\alpha +\beta
r_{0}^{2})]=(1-r_{0}^{2})(1+\gamma \mathcal{\varepsilon }^{4}r_{0}^{2})^{2}
\end{equation*}%
which is solved with respect to $r_{0}^{2},$ by implicit function theorem$:$%
\begin{eqnarray}
r_{0}^{2} &=&1-\frac{k_{+}^{2}}{4}+\mathcal{\sigma }_{1}\varepsilon
^{2}k_{+}+\sigma _{2}\varepsilon ^{4}+\mathcal{O}[(|k_{+}|+\varepsilon
^{2})^{4}],  \label{r0(k+,eps)} \\
r_{1} &=&\frac{\varepsilon r_{0}}{2}k_{+}+\mathcal{O}(\varepsilon ^{3}), 
\notag
\end{eqnarray}%
where we notice that coefficients $\sigma _{1}$ and $\sigma _{2}$ are
functions of the Prandtl number. We obtain a one-parameter family of
periodic solutions (parameter $k_{+}),$ with only the Fourier modes $e^{\pm
is}.$

\subsubsection{Estimates of higher order terms}

The proof below is new and self contained. There is a geometrical proof
without estimates in Iooss-P\'erou\`eme \cite{Io-Pe}, and a more precise
proof by Horn in \cite{Horn} section 3.5.

Let us define by $\omega $ the frequency of periodic solutions, where $%
\omega $ is close to 
\begin{equation*}
\omega _{0}=\frac{1+\varepsilon ^{2}k_{+}}{2\varepsilon },
\end{equation*}%
and set%
\begin{eqnarray*}
s &=&\omega x,\text{ }\omega =\omega _{0}+\widehat{\omega } \\
B_{0}(s) &=&r_{0}e^{is}+\widehat{B_{0}} \\
B_{1}(s) &=&ir_{1}e^{is}+i\widehat{B_{1}},
\end{eqnarray*}%
where $B_{0}$ and $B_{1}$ are $2\pi -$ periodic in $s,$ and $r_{0},r_{1}$
are solution of (\ref{k+},\ref{r1/r0}). Let us introduce the linear operator%
\begin{equation*}
L_{0}=\left( 
\begin{array}{cc}
-(i\omega _{0}\frac{d}{ds}+\frac{1}{2\varepsilon }+\varepsilon ^{3}P_{0}) & 
-1 \\ 
\varepsilon ^{2}Q_{0} & -(i\omega _{0}\frac{d}{ds}+\frac{1}{2\varepsilon }%
+\varepsilon ^{3}P_{0})%
\end{array}%
\right) ,
\end{equation*}%
acting in the function space $H^{1}(%
%TCIMACRO{\U{211d} }%
%BeginExpansion
\mathbb{R}
%EndExpansion
/2\pi 
%TCIMACRO{\U{2124} }%
%BeginExpansion
\mathbb{Z}
%EndExpansion
)\times L^{2}(%
%TCIMACRO{\U{211d} }%
%BeginExpansion
\mathbb{R}
%EndExpansion
/2\pi 
%TCIMACRO{\U{2124} }%
%BeginExpansion
\mathbb{Z}
%EndExpansion
).$ It appears that $L_{0}$ has a one-dimensional kernel 
\begin{equation*}
(r_{0}e^{is},r_{1}e^{is})\overset{def}{=}V_{0}e^{is}
\end{equation*}%
since (\ref{k+},\ref{r1/r0}) implies 
\begin{eqnarray*}
\lbrack (\omega _{0}-\frac{1}{2\varepsilon }-\varepsilon
^{3}P_{0}]r_{0}-r_{1} &=&0 \\
\varepsilon ^{2}Q_{0}r_{0}+[(\omega _{0}-\frac{1}{2\varepsilon }-\varepsilon
^{3}P_{0}]r_{1} &=&0,
\end{eqnarray*}%
with%
\begin{eqnarray*}
P_{0} &=&\alpha +\beta r_{0}^{2}+\varepsilon \gamma r_{0}r_{1}, \\
Q_{0} &=&-1+r_{0}^{2}+\varepsilon \delta r_{0}r_{1}.
\end{eqnarray*}%
Then the system (\ref{1-1 resonance syst}), to be completed by its complex
conjugate, becomes:%
\begin{eqnarray}
\widehat{\omega }V_{0}e^{is}+L_{0}\binom{\widehat{B_{0}}}{\widehat{B_{1}}}
&=&i\widehat{\omega }\frac{d}{ds}\binom{\widehat{B_{0}}}{\widehat{B_{1}}}%
+\left( 
\begin{array}{c}
\varepsilon ^{3}r_{0}P_{lin} \\ 
-\varepsilon ^{2}r_{0}Q_{lin}+\varepsilon ^{3}r_{1}P_{lin}%
\end{array}%
\right)  \notag \\
&&+\left( 
\begin{array}{c}
R_{0}(\widehat{Y},\overline{\widehat{Y}}) \\ 
R_{1}(\widehat{Y},\overline{\widehat{Y}})%
\end{array}%
\right) ,  \label{perturb equ}
\end{eqnarray}%
where%
\begin{eqnarray*}
P_{lin} &=&e^{2is}[\beta r_{0}\overline{\widehat{B_{0}}}+\frac{\varepsilon
\gamma }{2}(r_{0}\overline{\widehat{B_{1}}}+r_{1}\overline{\widehat{B_{0}}})]
\\
&&+[\beta r_{0}\widehat{B_{0}}+\frac{\varepsilon \gamma }{2}(r_{0}\widehat{%
B_{1}}+r_{1}\widehat{B_{0}})] \\
Q_{lin} &=&e^{2is}[-r_{0}\overline{\widehat{B_{0}}}+\frac{\varepsilon \delta 
}{2}(r_{0}\overline{\widehat{B_{1}}}+r_{1}\overline{\widehat{B_{0}}})] \\
&&+[-r_{0}\widehat{B_{0}}+\frac{\varepsilon \delta }{2}(r_{0}\widehat{B_{1}}%
+r_{1}\widehat{B_{0}})],
\end{eqnarray*}%
\begin{eqnarray*}
R_{0}(\widehat{Y},\overline{\widehat{Y}}) &=&\varepsilon
^{3}r_{0}e^{is}P_{quad}+\varepsilon ^{3}\widehat{B_{0}}%
(e^{-is}P_{lin}+P_{quad})-i\varepsilon ^{7}g_{0}, \\
R_{1}(\widehat{Y},\overline{\widehat{Y}}) &=&-\varepsilon
^{2}r_{0}e^{is}Q_{quad}-\varepsilon ^{2}\widehat{B_{0}}%
(e^{-is}Q_{lin}+Q_{quad}) \\
&&+\varepsilon ^{3}r_{1}e^{is}P_{quad}+\varepsilon ^{3}\widehat{B_{1}}%
(e^{-is}P_{lin}+P_{quad})-\varepsilon ^{6}g_{1},
\end{eqnarray*}%
with%
\begin{eqnarray*}
Q_{quad} &=&\widehat{B_{0}}\overline{\widehat{B_{0}}}+\frac{\varepsilon
\delta }{2}(\widehat{B_{0}}\overline{\widehat{B_{1}}}+\widehat{B_{1}}%
\overline{\widehat{B_{0}}}) \\
P_{quad} &=&\beta \widehat{B_{0}}\overline{\widehat{B_{0}}}+\frac{%
\varepsilon \gamma }{2}(\widehat{B_{0}}\overline{\widehat{B_{1}}}+\widehat{%
B_{1}}\overline{\widehat{B_{0}}}).
\end{eqnarray*}%
Let us decompose%
\begin{equation*}
\binom{\widehat{B_{0}}}{\widehat{B_{1}}}=\widehat{y}\binom{r_{1}e^{is}}{%
-r_{0}e^{is}}+\binom{\widetilde{B_{0}}}{\widetilde{B_{1}}}
\end{equation*}%
where $\widetilde{B_{0}}$ and $\widetilde{B_{1}}$ have no Fourier component
in $e^{is},$ and we take the component in $e^{is}$ orthogonal to $%
V_{0}e^{is} $, since adding a component proportional to $(r_{0},r_{1})$ is
equivalent to adapt $(r_{0},r_{1}).$

We first solve (\ref{perturb equ}) with respect to $(\widetilde{B_{0}},%
\widetilde{B_{1}})$ in using the implicit function theorem, since we observe
(notice the term $n\omega _{0}=\frac{n}{2\varepsilon }(1+\varepsilon
^{2}k_{+})$ in the operator for a Fourier component $e^{nis}$), that the
pseudo-inverse of $L_{0}$ is bounded from $H^{1}(\mathbb{R}/2\pi \mathbb{Z}%
)\times L^{2}(%
%TCIMACRO{\U{211d} }%
%BeginExpansion
\mathbb{R}
%EndExpansion
/2\pi 
%TCIMACRO{\U{2124} }%
%BeginExpansion
\mathbb{Z}
%EndExpansion
)$ to $H^{2}(%
%TCIMACRO{\U{211d} }%
%BeginExpansion
\mathbb{R}
%EndExpansion
/2\pi 
%TCIMACRO{\U{2124} }%
%BeginExpansion
\mathbb{Z}
%EndExpansion
)\times H^{1}(%
%TCIMACRO{\U{211d} }%
%BeginExpansion
\mathbb{R}
%EndExpansion
/2\pi 
%TCIMACRO{\U{2124} }%
%BeginExpansion
\mathbb{Z}
%EndExpansion
).$ Let us notice that the difference with the classical Hopf bifurcation
proof is that, norms in these spaces are chosen as, for example%
\begin{equation*}
||u||_{H^{2}}=\frac{1}{\varepsilon ^{2}}||u^{\prime \prime }||_{L^{2}}+\frac{%
1}{\varepsilon }||u^{\prime }||_{L^{2}}+||u||_{L^{2}},
\end{equation*}%
and notice that $H^{1}(\mathbb{R}/2\pi \mathbb{Z})$ is an algebra. It
results that we obtain an estimate such that%
\begin{equation*}
||(\widetilde{B_{0}},\widetilde{B_{1}})||_{H^{2}\times H^{1}}\leq
c(\varepsilon ^{2}|\widehat{y}|+\varepsilon ^{6}).
\end{equation*}%
It then remains to solve the 2-dimensional system in $(\widehat{\omega },%
\widehat{y})$ which is a real system, due to the reversibility symmetry:%
\begin{eqnarray*}
\widehat{\omega }r_{0}+\widehat{y}r_{1} &=&-\widehat{\omega }\widehat{y}%
r_{1}+\mathcal{O}(\varepsilon ^{4}|\widehat{y}|+\varepsilon ^{3}|\widehat{y}%
|+\varepsilon ^{7}) \\
\widehat{\omega }r_{1}-\widehat{y}r_{0} &=&\widehat{\omega }\widehat{y}r_{0}+%
\mathcal{O}(\varepsilon ^{3}|\widehat{y}|+\varepsilon ^{2}|\widehat{y}%
|+\varepsilon ^{6}),
\end{eqnarray*}%
which gives%
\begin{eqnarray*}
\widehat{\omega } &=&\mathcal{O}(\varepsilon ^{7}) \\
\widehat{y} &=&\mathcal{O}(\varepsilon ^{6}).
\end{eqnarray*}%
It results finally that the family of periodic solutions at $M_{+}$ are such
that%
\begin{eqnarray}
B_{0} &=&r_{0}e^{i\omega x}+\mathcal{O}(\varepsilon ^{6}),  \notag \\
B_{1} &=&ir_{1}e^{i\omega x}+\mathcal{O}(\varepsilon ^{6}),
\label{periodic sol +infty} \\
\omega &=&\frac{1}{2\varepsilon }+\frac{\varepsilon k_{+}}{2}+\mathcal{O}%
(\varepsilon ^{7}).  \notag
\end{eqnarray}

\end{document}